\newcommand\p{\mbox{\bf P}\xspace}
\newcommand\np{\mbox{\bf NP}\xspace}
\renewcommand{\inf}{\mathbf{Inf}}
\newtheorem{theorem}{Theorem}[section]
\newtheorem{definition}[theorem]{Definition}
\newtheorem{lemma}[theorem]{Lemma}
\newtheorem{proposition}[theorem]{Proposition}
\newtheorem{corollary}[theorem]{Corollary}
\newtheorem{claim}[theorem]{Claim}
\newtheorem{fact}[theorem]{Fact}
\newcommand\B{\mathcal{B}}
\newcommand\calH{\mathcal{H}}
\newcommand\calV{\mathcal{V}}
\newcommand\calI{\mathcal{I}}
\newcommand\calU{\mathcal{U}}
\newcommand\E{\mathop{\mathbf{E}}}
\newcommand{\eps}{\varepsilon}
\newcommand{\lref}[2][]{\hyperref[#2]{#1~\ref*{#2}}}
\renewcommand{\eqref}[1]{\hyperref[#1]{(\ref*{#1})}}
\patchcmd{\@bibitem}{\ignorespaces}{\label{bib-#1}\ignorespaces}{}{}
\newcommand{\etal}{{\em et al.\ }}
\renewcommand{\Pr}{\mathbf{Pr}}
\newcommand{\C}{\mathbb{C}}
\newcommand{\GL}{\mathsf{GL}}
\newcommand{\V}[1]{\boldsymbol{#1}}
\newcommand{\trivrep}{\{\mathbf{1}\}}
\newcommand{\gp}{\text{\relsize{-3} $\bullet$}\hspace{2pt}} 
\newcommand{\cp}{\cdot} 
\newcommand{\irr}{\mathsf{Irrep}}
\newcommand{\trace}{\mathsf{tr}}
\newcommand{\comp}{\overline}
\newcommand{\End}[1]{\mathsf{End}\hspace{1pt} #1}
\newcommand{\conv}{*}
\newcommand{\ctranspose}[1]{{#1}^\star}
\newcommand{\hsnorm}[1]{\| #1 \|_{\mathsf{HS}}}
\newcommand{\dimgeqi}[2]{\dim_{\geq #2}({#1})}
\newcommand{\commutator}[2]{%
	{\small\left[%
	\begin{array}{
			@{}
			>{\centering $\displaystyle}p{0.8em}<{$}
			@{,}
			>{\centering $\displaystyle}p{0.8em}<{$}
			@{}
		}
		#1 & #2
	\end{array}%
	\right]%
	}
}
\newcommand{\low}{\mathsf{low}}
\newcommand{\high}{\mathsf{high}}
\newcommand{\matrixb}[1]{{{ \left[ {#1}\right]}}}
\renewcommand{\geq}{\geqslant}
\renewcommand{\leq}{\leqslant}
\def\showreplacements{0}
\definecolor{timberwolf}{rgb}{0.76, 0.74, 0.72}
\definecolor{darkspringgreen}{rgb}{0.09, 0.45, 0.27}
\title{Optimal Inapproximability of Satisfiable $k$-LIN over Non-Abelian Groups}
\author{Amey Bhangale\thanks{Department of Computer Science and Engineering, University of California, Riverside, CA, USA. 
		Email: {\tt ameyrb@ucr.edu}}
\and 
Subhash Khot\thanks{Department of Computer Science, Courant Institute of Mathematical Sciences, New York University, NY, USA. 
	Email: {\tt khot@cs.nyu.edu }}
}
\date{}
\newcommand{\ii}{p}
\newcommand{\jj}{q}
\newcommand{\kk}{r}
\newcommand{\LIN}{LIN}
\begin{document}
\maketitle	

\begin{abstract}
A seminal result of H\r{a}stad~\cite{Hastad2001}  shows that it is NP-hard to find an assignment that satisfies $\frac{1}{|G|}+\eps$ fraction of the constraints of a given $k$-LIN instance over an abelian group, even if there is an assignment that satisfies $(1-\eps)$ fraction of the constraints, for any constant $\eps>0$.  Engebretsen \etal~\cite{EngebretsenHR} later showed that the same hardness result holds for $k$-LIN instances over any finite non-abelian group.

Unlike the abelian case, where we can efficiently find a solution if the instance is satisfiable, in the non-abelian case, it is NP-complete to decide if a given system of linear equations is satisfiable or not, as shown by Goldmann and Russell~\cite{RG99}.  

Surprisingly, for certain non-abelian groups $G$, given a satisfiable $k$-LIN instance over $G$, one can in fact do better than just outputting a random assignment using a simple but clever algorithm. The approximation factor achieved by this algorithm varies with the underlying group. In this paper, we show that this algorithm is {\em optimal} by proving a  tight hardness of approximation of satisfiable $k$-LIN instance over {\em any} non-abelian $G$, assuming $\p\neq \np$.

As a corollary, we also get $3$-query probabilistically checkable proofs with perfect completeness over large alphabets with improved soundness.
\end{abstract}

\section{Introduction}
Constraint satisfaction problems (CSPs) are the most fundamental problems in computer science. A simplest such CSP which we know how to solve is a system of $k$-\LIN\ equations over some abelian group. More generally, an instance of Max-$k$-\LIN\ over a group $G = (G, \gp)$, not necessarily abelian,  consists of a set of variables $x_1, x_2, \ldots, x_n$ and a set of constraints $C_1, C_2, \ldots, C_m$. Each $C_i$ is a linear equation involving $k$ variables, for example $a_1\gp x_{i_1}\gp a_2\gp x_{i_2}\gp \ldots a_k \gp x_{i_k} = b$, for some group elements $a_1, a_2, \ldots, a_k, b\in G$. The task is to find an assignment to the variables that satisfies as many constraints as possible.

For {\em any} abelian group $G$, if there is a perfect solution to the given Max-$k$-\LIN\ instance over $G$, then it can be found efficiently in polynomial time using Gaussian elimination. A given instance is {\em almost} satisfiable if there exists an assignment that satisfies $(1-\eps)$-fraction of the constraints for small constant $\eps>0$. If the given instance of Max-$k$-\LIN\ over an abelian group $G$ is almost satisfiable, then H\r{a}stad~\cite{Hastad2001} showed that it is $\np$-hard to even find an assignment that satisfies $\frac{1}{|G|}+\eps$ of the constraints for every constant $\eps>0$. In other words, one cannot do significantly better than just outputting a random assignment. 

The situation changes completely if the instance is a set of linear equations over a non-abelian group. In this case, Goldmann and Russell~\cite{RG99} showed that the problem of deciding if a given instance is satisfiable or not is $\np$-complete, for {\em every} non-abelian group.

\paragraph{An algorithm (folklore):} It turns out, one can do much better than outputting a random assignment for some groups $G$, when the instance is satisfiable. Given an instance $\phi$ over $G$, consider an instance $\phi'$ over $H  = G/\commutator{G}{G}$ where $\commutator{G}{G}$ is a commutator subgroup of $G$, i.e., the subgroup generated by the elements $\{g^{-1}h^{-1}gh \mid  g, h\in G\}$. The instance $\phi'$ is same as $\phi$ except that all the group constants are replaced by their equivalence class in $G/\commutator{G}{G}$. The important property of this quotient group $H$ is that it is an abelian group. Since $\phi$ has a satisfying assignment over $G$, $\phi'$ has a satisfying assignment over $H$. Hence, we can find the satisfying assignment $\sigma$ of $\phi'$ in polynomial time. The solution $\sigma$ is an assignment of cosets of $\commutator{G}{G}$ to the variables. We construct a random assignment to $\phi$ such that for every variable $x$, we select a random group element from $\sigma(x)$ and assign it to $x$. It is easy to see that each constraint is satisfied with probability equal to the $\frac{1}{\left|\commutator{G}{G}\right|}$. Thus, this gives an assignment that satisfies $\frac{1}{\left|\commutator{G}{G}\right|}$ fraction of the constraints in expectation. Therefore, if there exists a non-trivial commutator subgroup of $G$, then we get an algorithm which does better than the random assignment threshold.\\

If the instance is almost satisfiable, then it is not clear how to modify the above algorithm  for almost satisfiable instances to get better than $\frac{1}{|G|}$ approximation. In fact, for almost satisfiable instances over any non-abelian group,  Engebretsen \etal~\cite{EngebretsenHR} showed that it is \np-hard to do better than outputting a random assignment.

This leaves an intriguing question of finding the correct approximation threshold for satisfiable instances over non-abelian groups. In this paper, we show that the above described algorithm for satisfiable instance over non-abelian groups is the best one can hope for. More specifically, we prove the following theorem.
\begin{theorem}
	\label{thm:mainnphardness}
	For any constant $\eps>0$, given a {\em satisfiable} instance of a Max-$3$-\LIN\  over a finite non-abelian group $G$, it is \np-hard to find an assignment that satisfies $\frac{1}{\left| \commutator{G}{G}\right|} + \eps$ fraction of the constraints.
\end{theorem}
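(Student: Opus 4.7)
The plan is to prove Theorem~\ref{thm:mainnphardness} via the standard Label-Cover-plus-long-code framework, where the inner gadget is a 3-query \emph{non-abelian linearity test} tailored so that (i) the reduction preserves perfect completeness and (ii) the soundness matches the algorithmic threshold $1/|\commutator{G}{G}|$.

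First, I would start from a Label Cover instance with perfect completeness (for example, the smooth or multi-layered variants) for which distinguishing Value $=1$ from Value $\leq \delta$ is \np-hard. Perfect completeness of this base problem is essential, and smoothness will be used later to control cross-coordinate terms in the gadget analysis. On top of this I would design a dictatorship test on functions $f : G^R \to G$ which makes three queries $x, y, z \in G^R$ and outputs a linear constraint of the form $a_1 \gp f(x) \gp a_2 \gp f(y) \gp a_3 \gp f(z) = b$ over $G$. The joint distribution of $(x,y,z)$ must be chosen so that (a) every dictator function passes with probability $1$, giving perfect completeness at the gadget level, while (b) any function $f$ with no coordinate of significant ``non-abelian'' influence passes with probability at most $1/|\commutator{G}{G}| + \eps$.

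The core of the proof is the soundness analysis, which I would carry out via representation theory. Each component of $f$ is expanded in matrix coefficients of irreducible representations of $G^R$, which are tensor products of irreducibles of $G$. The irreducibles of $G$ partition into those factoring through the abelianization $H = G/\commutator{G}{G}$ (all one-dimensional) and those that do not (dimension $\geq 2$). The one-dimensional characters can at best achieve the abelian bound of $1/|\commutator{G}{G}|$, since they see $f$ only modulo the commutator subgroup. The goal is then to show that for low-influence $f$, the higher-dimensional irreducibles contribute only negligibly to the acceptance probability, via a non-abelian invariance principle in the spirit of recent works on non-abelian Max-3-\LIN.

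The main obstacle will be simultaneously achieving perfect completeness at the gadget level and a soundness bound tight at $1/|\commutator{G}{G}| + \eps$. H{\aa}stad-style tests normally use noise to enable a Fourier argument, but noise destroys perfect completeness. To circumvent this, the three-query distribution must be chosen ``noise-free''---deterministically satisfying every dictator---while still having a support whose higher-order correlations vanish on all non-trivial higher-dimensional representations of $G$. Constructing such a distribution on $G^3$, supported essentially on the graph of the group operation, and lifting the analysis through the long code (using smoothness of the outer Label Cover to decouple distinct coordinates) will be the technical crux. The final step is the routine composition: replacing each Label Cover vertex by a long code over $G$, applying the dictatorship test with appropriate folding, and translating perfect completeness and dictatorship soundness into perfect satisfiability and $1/|\commutator{G}{G}| + \eps$ inapproximability of the resulting Max-3-\LIN\ instance.
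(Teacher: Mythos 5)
Your high-level framework (Label Cover with smoothness, long code over $G$, a noise-free three-query test $f(\V a)\gp f(\V b)\gp f(\V c)=1_G$, and the split of $\irr(G)$ into one-dimensional representations factoring through $G/\commutator{G}{G}$ versus the higher-dimensional ones) is the same as the paper's, and your observation that noise would destroy perfect completeness correctly identifies the central tension. But the step you flag as the "technical crux" is handled by a mechanism that the paper explicitly rules out, and this is a genuine gap in the proposal, not a matter of detail.

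You propose to bound the contribution of higher-dimensional irreducibles by "a non-abelian invariance principle." The invariance-principle route (Mossel, Austrin--Mossel) requires the joint distribution $\mu$ on $(x_i,y_i,z_i)\in G^3$ to be \emph{connected}, which in the noised tests of Engebretsen et al.\ is exactly what the $\eps$-resampling provides. Your noise-free distribution is supported on the codimension-one set $\{(a,b,c): a\gp b\gp c = 1_G\}$, which is not connected, so no invariance-type argument can decay the high-degree/high-dimension terms. In fact, without noise there are functions whose Fourier mass sits entirely on large-dimension $\alpha$ and that pass the test with probability $1/|\commutator{G}{G}|$; the analysis must accommodate them rather than kill them. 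The paper's actual mechanism is the Gowers / Babai--Nikolov--Pyber trick: after expanding $\E[g_1(\V a)g_2(\V b)g_3(\V c)]$ as $\sum_\alpha \dim(\alpha)\hsnorm{\hat g_1(\alpha)}\hsnorm{\hat g_2(\alpha)}\hsnorm{\hat g_3(\alpha)}$, one pulls a factor $\dim(\alpha)^{-1/2}$ out of the terms where $\alpha$ has many non-abelian tensor components (so $\dim(\alpha)$ is exponentially large), and the remainder is bounded by Cauchy--Schwartz and Parseval. This gives a $1/\sqrt{D}$ gain on the "high" part without any noise, and the "low" part yields a single $\alpha$ of bounded $\dimgeqi{\alpha}{2}$ with large $\hsnorm{\hat g(\alpha)}$, which is what feeds the decoding.

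A second missing ingredient: without noise you still have to contend with dimension-$1$ but high-\emph{degree} Fourier terms, which the dimension-counting trick does not touch. The paper handles these via a structural property of folded functions (Lemma~\ref{claim:folded_zero_coeff}): if $f$ is folded and $\rho\in\irr(G)$ with $\dim(\rho)\geq 2$, then every Fourier coefficient of $g(\V x)=\rho(f(\V x))_{ij}$ at a dimension-$1$ representation $\alpha$ of $G^n$ vanishes. Your proposal mentions folding only in passing; you would need to discover and use this fact to make the soundness argument close. Finally, composing with Label Cover introduces the additional complication that the projections $\pi_{uv}$ can collapse many coordinates, so that a representation $\alpha$ with large $\dimgeqi{\alpha}{2}$ can decompose over $\pi(G^L)$ into many copies of a small-dimension irreducible; the paper needs Lemma~\ref{lemma:magic_complicated} plus a carefully chosen column permutation (Claim~\ref{claim:random_unitary_works0}) to control this, and smoothness enters precisely to ensure $|(\pi_{uv})_{\geq 2}(\alpha)|$ stays large for typical edges. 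These are not routine composition steps.
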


The theorem can be extended to Max-$k$-\LIN\ for any $k\geq 3$ to imply a similar hardness result for all Max-$k$-\LIN\ problems over $G$.\\

If $G$ is a {\em simple group}, i.e. $|\commutator{G}{G}| = |G|$, then \Cref{thm:mainnphardness} implies an NP-hardness of approximating satisfiable  Max-$3$-CSP instances over an alphabet of size $q$ to within a factor of $\frac{1}{q}+\eps$, for every constant $\eps>0$. As a  direct consequence, we get improved soundness of $3$-query probabilistically checkable proofs (PCPs) with perfect completeness over large alphabets. Since PCPs are not the main focus of this paper, we refer interested readers to the book by Arora and Barak~\cite[Chapter 18]{AroraB09} to see the relation between PCPs and CSPs.

\begin{corollary}
For infinitely many $q\in \mathbb{Z}^+$, any language in \np is decided by a nonadaptive PCP with answers from
a domain of size $q$ that queries three positions in the proof, has perfect completeness and
soundness $\frac{1}{q}+\eps$ for any constant $\eps>0$. 
\end{corollary}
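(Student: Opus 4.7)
The plan is to deduce the corollary from \Cref{thm:mainnphardness} by instantiating it with a non-abelian finite simple group. Recall that for such a group $G$ we have $\commutator{G}{G} = G$ (since the commutator subgroup is normal and a simple non-abelian group has no proper non-trivial normal subgroups, and $G/\commutator{G}{G}$ is abelian so the quotient must be trivial). Hence $|\commutator{G}{G}| = |G|$, and the hardness factor in \Cref{thm:mainnphardness} becomes $\frac{1}{|G|} + \eps$.

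To get the corollary for infinitely many $q$, I would invoke the standard family of non-abelian finite simple groups. The cleanest choice is the alternating groups $A_n$ for $n \geq 5$, which are non-abelian and simple, so the set $\{|A_n| : n \geq 5\} = \{n!/2 : n \geq 5\}$ is an infinite set of admissible values of $q$. (Any other infinite family, e.g.\ $\mathsf{PSL}_2(\mathbb{F}_p)$ for primes $p \geq 5$, works equally well.)

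Given such a $q$ with a corresponding non-abelian simple group $G$ of order $q$, I would turn the CSP hardness into a PCP in the textbook way. The PCP verifier, on input a language $L \in \np$, first runs the reduction from \Cref{thm:mainnphardness} to produce a Max-$3$-\LIN\ instance $\phi$ over $G$ whose variables constitute the positions of the proof (each symbol an element of $G$, hence alphabet size $q$). The verifier then picks a uniformly random constraint $C_i$ of $\phi$, nonadaptively queries the three positions appearing in $C_i$, and accepts iff the resulting equation over $G$ holds. Completeness is perfect because a satisfying assignment, written into the proof, passes every constraint; soundness is $\tfrac{1}{q} + \eps$ because \Cref{thm:mainnphardness} guarantees that for \textsc{no}-instances no assignment satisfies more than that fraction of constraints, so no proof can make the verifier accept with larger probability.

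The only non-routine point is the existence of infinitely many non-abelian finite simple groups, which is classical and handled by the $A_n$ family above; the CSP-to-PCP translation itself is completely mechanical and introduces no loss in completeness or soundness.
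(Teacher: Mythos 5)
Your proposal is correct and matches the paper's approach exactly: the paper states (in the paragraph preceding the corollary) that the result follows by taking $G$ to be a non-abelian simple group, for which $|\commutator{G}{G}| = |G|$, and applying \Cref{thm:mainnphardness} together with the standard CSP-to-PCP translation; the footnote confirms the restriction to those $q$ that are orders of simple groups, which your $A_n$ family instantiates. The only minor difference is cosmetic: the paper treats this as an immediate remark rather than spelling out the PCP verifier, while you make the verifier construction explicit, but the content is identical.
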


This improves a result by Engebretsen and Holmerin~\cite{EngebretsenH05} where they constructed PCPs with soundness $\frac{1}{q} + \frac{1}{q^2}+ \eps$, and also a result by Tang~\cite{Tang09} in which they showed a conditional result with soundness $\frac{1}{q} + \frac{1}{q^2} - \frac{1}{q^3} +\eps$, for any constant $\eps>0$.\footnote{The theorem in~\cite{EngebretsenH05} holds for every $q \geq 3$, and the theorem in~\cite{Tang09} holds for every $q\geq 4$.  Our theorem holds for $q$ such that there are simple groups of cardinality $q$.} 
\subsection{Techniques}

We assume some familiarity with the Fourier analysis of functions over abelian groups (for instance, Chapter 8 of Ryan O'Donnell's book~\cite{O14}). Throughout the section, $\eps >0$ is an arbitrarily small constant and $\delta(\eps)>0$ decays with $\eps$. We only discuss $3$-\LIN\ here, however the argument is similar for $k$-\LIN\ in general.

	 Given a well established field of hardness of approximation where the starting point is the Label Cover problem (see \Cref{def:label-cover}), at the heart of these reductions are the {\em dictatorship tests}. A function $f: G^n \rightarrow G$ is a dictator function if it depends only on one variable, i.e., $f(x_1, x_2, 
	\ldots, x_n) = x_i$ for some $i\in [n]$. On the other hand, we have functions which are {\em far } from dictator functions. 
	
	To understand a notion of distance from a dictator function, which is useful for a reduction to work, define the influence of the $i^{th}$ coordinate on the function to be the probability that on a random input, changing the $i^{th}$ coordinate changes the values of the function. In terms of the Fourier coefficients of $f$, this is equal to the following quantity:
	$$\inf_i(f):= \sum_{\alpha: \alpha_i\neq 0} |\hat{f}(\alpha)|^2.$$
	
	 Thus, the $i^{th}$ dictator function has $\inf_i(f) = 1$. At the first attempt, it might make sense to define functions which are close to dictators are the functions with a coordinate with large influence. However, note that there are linear functions $\ell_S = \sum_{i\in S}\beta_i x_i$ where $S\subseteq[n]$ such that $\ell_S$ has all the variables $i\in S$ with influences $1$. We would like to isolate these functions with large $|S|$ from the dictator functions. This motivates to define a more refined notion of low degree influence of a variable $i$ as follows:
	
	$$ \inf_i^{\leq d}(f):= \sum_{\alpha: \alpha_i\neq 0 \wedge |\alpha|\leq d} |\hat{f}(\alpha)|^2,$$
	where $|\alpha|$ is the number of non-zero coordinates of $\alpha$. Thus, for the $i^{th}$ dictator function, its low degree ($d=1$) influence of the coordinate $i$ is $1$ (and rest of the influences are $0$). A function is far from any dictator function if all the low degree influences, for some $d=O(1)$ which is independent of $n$, of the function are small, say at most $\eps$. 
	
	Although the above definition is the correct definition for most reductions, in case of linear equations, we work with an even weaker notion of the distance. We consider the following definition. A function is far from dictator functions if for every $\alpha$ such that $|\alpha| = O(1)$ which is independent of $n$, $|\hat{f}(\alpha)|^2 \leq\eps$. In other words, all the {\em low degree} Fourier coefficients of $f$  have small weights. Note that this notion still isolates $\ell_S$ with large $|S|$.
	
	A (non-adaptive) dictatorship test queries the function $f$ at a few locations and based on the values it sees, decides if the function is a dictator function or far from it. This choice of predicate is tightly connected to the specific constraint satisfaction problem (CSP) for which we want to show a NP-hardness result. Furthermore, the gap between the test passing probability in the {\em completeness case} (when $f$ is a dictator function) and the {\em soundness case} (when $f$ is far from any dictator function) translates into the inapproximability factor of NP-hardness of the CSP.\footnote{This is not totally correct as one has to overcome other important issues when such a test is used in the actual reduction, starting with the Label Cover instance.}
	
	\subsubsection{Abelian Groups}
	Let us look at a candidate dictatorship test where the predicate is a linear equation in $3$ variables over an abelian group $G\cong \mathbb{Z}_q$. Here, $0$ is the identity element of $G$.

\fbox{
	
	\parbox{250pt}{
\begin{itemize}
	\item Select $x, y \sim G^n$ uniformly at random.
	\item Set $z = x+y$.
	\item Check if $f(x) + f(y) = f(z) $.
\end{itemize}

}
}\\

	It is clear that if $f$ is an $i^{th}$ dictator then the test passes with probability $1$. The non-trivial thing is to analyze the test passing probability if $f$ is far from any dictator.  It is easy to see that there are functions which are far from dictators and still the test passes with probability $1$ on them. A family of such functions are the linear functions of the form $\ell_S = \sum_{i\in S}\beta_i x_i$ where $S\subseteq[n]$ and $\beta_i \in G\setminus \{0\}$, with large $|S|$. It is not hard to see that these functions pass the test with probability $1$. In fact, Blum, Luby and Rubinfeld~\cite{BLR90} showed that this is a good test for the linear functions (instead of the dictator functions). 
	
	One must be able to design a test such that $\ell_S$ for large $S$ passes with small probability. To design such a test, H\aa stad~\cite{Hastad2001} introduced the so called {\em noise} to each coordinate. The modified test is as follows:

	\fbox{
		
		\parbox{300pt}{
	\begin{itemize}
		\item Select $x, y \sim G^n$ uniformly at random.
		\item Set $z = x+y$.
		\item For each $i\in [n]$, resample $(x_i, y_i, z_i)$ from $G^3$ uniformly at random, with probability $\eps$. 
		\item Check if $f(x) + f(y) = f(z)$.
	\end{itemize}
}}\\
	
	This noise takes care of the earlier mentioned counterexamples, i.e., functions $\ell_S$ for large $S$ now pass the test with probability roughly $\frac{1}{|G|}$. In general, H\aa stad~\cite{Hastad2001} showed that if $f$ is far from dictator functions then the test passes with probability at most $\frac{1}{|G|}+\delta$ for a small constant $\delta>0$. The proof of this statement uses Fourier analysis over abelian groups. Note that this bound is optimal as even a random function passes the test with probability $\frac{1}{|G|}$.
	
	
	However, now the guarantee in the completeness case is no longer the same. We only get that dictator functions pass this test with probability $1-\eps$ (instead of $1$). This gap in the test passing probability is translated into the NP-hardness of $3$-\LIN\ over abelian group and coincidentally, in this abelian case, the NP-hardness result is {\em optimal}. More precisely, given a system of linear equations over an abelian group $G$, where each equation  involves $3$ variables, it is NP-hard to distinguish between the cases when there exists an assignment that satisfies at least $(1-\eps)$-fraction of the constraints vs. no assignment can satisfy more than $\frac{1}{|G|}+\delta$ fraction of the constraints.

		\subsubsection{Non-Abelian Groups} We now look into the non-abelian case. Since we would like to design a test which passes with probability $1$ (or $(1-\eps)$) in the completeness case, there is a natural generalization of the above mentioned tests to a non-abelian group $G$. Here we denote the group operation by the symbol $\gp$ and the identity element of $G$ by $1_G$.
		
		We first describe the test with completeness $(1-\eps)$, which is similar to the test over abelian group with noise we described earlier.

		\fbox{
			
			\parbox{300pt}{
		\begin{itemize}
			\item Select $x, y \sim G^n$ uniformly at random.
			\item For each $i\in [n]$, set  $z_i  = y_i^{-1} \gp x_i^{-1}$.
			\item For each $i\in [n]$, resample $(x_i, y_i, z_i)$ from $G^3$ uniformly at random, with probability $\eps$. 
			\item Check if $f(x) \gp f(y) \gp f(z) = 1_G$.
		\end{itemize}
	}}\\
		
The analysis of this test is implicit in the work of Engebretsen \etal~\cite{EngebretsenHR}. Firstly, it is easy to see that the dictator functions pass this test with probability $(1-\eps)$. The soundness of this test is analyzed in ~\cite{EngebretsenHR} where the authors show that in the soundness case, the test passes with probability at most $\frac{1}{|G|}+\delta$ for small constant $\delta>0$. Their proof goes via  Fourier analysis over non-abelian groups. As in the abelian case, in this case also, it can be shown that the noise takes care of {\em high degree} Fourier terms.\footnote{Although we have not formally defined a degree of a Fourier coefficient of a function in this non-abelian setting, think of it as the number of non-trivial irreducible representations in $\alpha$ (See \Cref{prop:irr_Gn}). } This implies a NP-hardness result of approximating $3$-\LIN\ instances over non-abelian group, which is similar to the abelian case.

Although the proof of the soundness of the test in~\cite{EngebretsenHR} uses representation theory and Fourier analysis of functions on non-abelian groups, the proof now follows from the more general statement called the {\em invariance principle} of Mossel~\cite{Mossel10}. The distribution on the tuple $(x, y, z)$ is a product distribution $\mu^{\otimes n}$ where $\mu$ is a distribution on $(x_i, y_i, z_i)$ (note that for all $i$ it is the same distribution). Since we add noise to each coordinate with some non-zero probability, the distribution $\mu$ is {\em connected} and hence we can easily take care of high degree functions in the analysis. Furthermore, the distribution $\mu$ is pairwise independent. These two conditions imply that in the soundness, the test passes with probability at most $\frac{1}{|G|}+\delta$. This statement is implicit in~\cite{AustrinM09}.

We now state the (obvious) dictatorship test with perfect completeness.
	
\fbox{
	
	\parbox{250pt}{
		\begin{itemize}
	\item Select $x, y \sim G^n$ uniformly at random.
	\item For each $i\in [n]$, set  $z_i  = y_i^{-1} \gp x_i^{-1}$.
	\item Check if $f(x) \gp f(y) \gp f(z) = 1_G$.
\end{itemize}
}}\\

Our main contribution is the soundness analysis of the above dictatorship test over non-abelian groups {\em without noise}. Note that we cannot use the invariance principle based techniques in this case, as the distribution does not satisfy the condition of {\em connectedness}.

\paragraph{Proof overview.}

Our proof of the soundness analysis is inspired by the magic that was discovered by Gowers~\cite{Gowers08} to show that there are non-abelian groups where the size of  any product free set is sublinear in $|G|$.\footnote{unlike the abelian case where one can always find, in this case, a 'sum-free' set of size $\Omega(|G|)$.} Gowers' trick worked only for quasi-random groups\footnote{A group $G$ is called {\em quasi-random} if the smallest dimension of any non-trivial representation of $G$ is large.} as he was interested in $o(|G|)$ bound on the product free sets, whereas we are able to carry out our reduction for every non-abelian group. 

The trick is elegantly captured by the following inequality by Babai, Nikolov, and Pyber~\cite{BNP08}. For any functions $f,g : G \rightarrow \C$  with at least one of $f, g$ having mean zero:
\begin{equation}
\label{eq:convimprovement}
\|f*g\|_{L^2(G)} \leq \frac{1}{\sqrt{D}}\|f\|_{L^2(G)} \|g\|_{L^2(G)},
\end{equation}
where $D$ is the smallest dimension of a non-trivial representation of $G$.\footnote{see \Cref{def:norm} and \Cref{def:conv} for the definitions of $\|\cdot\|_{L^2(G)}$ and the convolution operator $*$, and \Cref{sec:reptheory} for representation theory.} In comparison, a trivial application of Cauchy-Schwartz inequality gives an upper bound of $\|f\|_{L^2(G)} \|g\|_{L^2(G)}$. Thus, \Cref{eq:convimprovement} has a multiplicative improvement of a factor $\frac{1}{\sqrt{D}}$ over a trivial upper bound.

	
Coming back to analyzing the soundness of our test, its analysis boils down to analyzing the following expression:
\begin{align}
\E[g_1(x) g_2(y)g_3(z)] &= \E[(g_1*g_2*g_3)(1_{G^n})]\nonumber\\
& \leq  \sum_{\alpha \in \irr(G^n)} \dim(\alpha)\cdot \hsnorm{\hat{g_1}(\alpha)}\cdot \hsnorm{\hat{g_2}(\alpha)}\cdot \hsnorm{\hat{g_3}(\alpha)},\label{eq:fourier_sum}
\end{align}
where $g_i$ are bounded functions, derived from $f$,  from $G^n \rightarrow \C$, i.e., $\|g_i\|_2 \leq 1$. $\hat{g_i}(\alpha)$ is the ``fourier coefficient'' of $g_i$ corresponding to the irreducible representation $\alpha$ of $G^n$ and $\hsnorm{\cdot}$ is the Hilbert–Schmidt norm of a matrix. Here, the inequality follows by  using the Fourier expansion of $(g_1*g_2*g_3)$ and a triangle inequality.

Once we have this expression, similar to \Cref{eq:convimprovement}, it is easy to bound the terms with large $\dim(\alpha)$: By applying Cauchy-Schwartz inequality and using Parseval's identity, we can show the following:
\begin{align*}
\sum_{\dim(\alpha)\geq D} \dim(\alpha)\cdot \hsnorm{\hat{g_1}(\alpha)}\cdot \hsnorm{\hat{g_2}(\alpha)}\cdot \hsnorm{\hat{g_3}(\alpha)} \leq \frac{1}{\sqrt{D}} \|g_1\|_2 \cdot \|g_2\|_2\cdot \|g_3\|_2 \leq \frac{1}{\sqrt{D}}.
\end{align*}
Thus, we can effectively bound the {\em higher dimension} terms in the expression. Therefore, if the original expectation is $\delta$, then we essentially get 
$$\sum_{\dim(\alpha)\leq D} \dim(\alpha)\cdot \hsnorm{\hat{g_1}(\alpha)}\cdot \hsnorm{\hat{g_2}(\alpha)}\cdot \hsnorm{\hat{g_3}(\alpha)}\approx \delta,$$
for  a small $D$. Taking the  maximum $ \hsnorm{\hat{g_1}(\alpha)}$ out in the summation, the remaining sum can be upper bounded by $1$. Therefore, we get that there exists an $\alpha$ such that the dimension of $\alpha$ is at most $D$ and  $\hsnorm{\hat{g_1}(\alpha)}\approx \delta -\frac{1}{\sqrt{D}}$.

Our analysis shows that if the test passes with probability greater than $\frac{1}{\left|\commutator{G}{G}\right|}+\delta$, then there exists an $\alpha$ such that $\dim(\alpha)\leq O_\delta(1)$ and  $\hsnorm{\hat{\tilde{f}}(\alpha)}\approx \delta$, for some function $\tilde{f}$ derived from $f$.  Note that this conclusion is different than what we had aimed for, i.e., concluding that there exists a {\em low degree} Fourier coefficient with large magnitude. However, we show that such a bound is enough to carry out the actual soundness analysis of the reduction.

Since we do not introduce noise to each coordinate, there are functions with  the Fourier mass concentrated on large dimensional $\alpha$ which pass this test with probability $\frac{1}{\left|\commutator{G}{G}\right|}$. Thus, our analysis of the test is also optimal. 


Although the dictatorship test works, there are many complications that arise when we compose this test with the Label Cover instance. We briefly discuss three issues here:
	\begin{enumerate}
	\item As observed before, in the soundness analysis we conclude that there is a low dimension Fourier coefficient whose norm is large, if the test passes with non-trivial probability. However, there are terms with  dimension $1$ but with high degree, which are problematic for the final {\em decoding strategy} in our reduction. In ~\cite{EngebretsenHR}, these problematic terms were handled by adding noise, a technique which is similar to the abelian case. In our case, we do not have the noise. However,  we observe a stronger property of the folded functions.\footnote{A function $f$ is called folded if $f(c\gp x) = c\gp f(x)$ for all $x\in G^n$ and $c\in G$} Namely, if $f$ is folded, then the function $\rho(f(x))_{ij}$, where $\rho$ is any irreducible representation of $G$ of dimension at least $2$, has all the Fourier coefficients with dimension $1$ {\em zero}. Thus, we can just focus on terms with dimensions at least $2$. 
	
	\item Our decoding strategy is different from the one in~\cite{EngebretsenHR}. The decoding strategy in ~\cite{EngebretsenHR} is based on non-empty low degree Fourier coefficients, which was similar to H\r{a}stad's decoding strategy. In our reduction, we slightly changed the decoding strategy --- it is based on the Fourier coefficients whose dimension is at least $2$, but can be of high degree. This condition is forced on us by the way we are handling the higher dimensions terms. Fortunately, the decoding strategy works without much trouble.
	
	\item Because of the $d$-to-$1$ nature of the projection constraints, we have to take care of many potential scenarios in the actual reduction when the error term can be large. We handle this collectively by using a careful choice permuting the columns of the matrices (i.e., the Fourier coefficients and the representation matrices) involved in the soundness analysis.
	\end{enumerate}


\section{Preliminaries}

\subsection{Label Cover}

\newcommand{\N}{\mathbb{N}}
\newcommand{\val}{val}
\newcommand{\LC}{{\sc Label-Cover}}
\newcommand{\TSAT}{3\text{-}SAT}

We start by defining the \LC\ problem which we use as a starting point for our reduction.
\begin{definition}[\LC]
	
	\label{def:label-cover} An instance $\calH=(\calU,\calV,E,[L],
	[R],\{\pi_e\}_{e\in E})$ of the {\LC} constraint satisfaction
	problem consists of a bi-regular bipartite graph $(\calU,\calV,E)$,
	two sets of alphabets $[L]$ and $[R]$ and a surjective projection map $\pi_e : [R] \rightarrow
	[L]$ for every edge $e\in E$. 
	Given a labeling $\ell : \calU \rightarrow [L], \ell:\calV \rightarrow
	[R]$, an edge $e = (u,v)$ is said to be satisfied by $\ell$ if
	$\pi_e(\ell(v)) = \ell(u)$. 
	
	$\calH$ is said to be \emph{satisfiable} if there exists a 
	labeling that satisfies all the
	edges. $\calH$ is said to be \emph{at most $\delta$-satisfiable} if every
	labeling satisfies at most a $\delta$ fraction of the
	edges. 
\end{definition}

The hardness of \LC\ stated below follows from
the PCP Theorem \cite{AroraS1998,AroraLMSS1998, FGLSS96} and Raz's Parallel
Repetition Theorem \cite{Raz1998}. The additional structural property on the hard instances (item 2 below) is proved by H\aa stad \cite[Lemma 6.9]{Hastad2001}.
\begin{theorem}[Hardness of \LC]
	
	\label{thm:lc-hard} For every $r \in \N$, there is a
	deterministic $n^{O(r)}$-time reduction from a \TSAT\ instance
	of size $n$ to an instance $\calH=(\calU,\calV,E,[L],[R], \{\pi_e\}_{e
		\in E})$ of \LC\ with the following properties:
	
	\begin{enumerate}
		
		\item $|\calU|,|\calV| \leq n^{O(r)}$; $L,R \leq 2^{O(r)}$; $\calH$ is bi-regular with degrees bounded by $2^{O(r)}$.
		
				\item({Smoothness}) There exists a constant $d_0 \in (0,1/3)$ such that for any $v \in \calV$ and $\alpha
				\subseteq [R]$, for a random neighbor $u$,
				$$\E_u \left[ |\pi_{uv}(\alpha)|^{-1} \right] \leq |\alpha|^{-2d_0}, $$
				where $\pi_{uv}(\alpha) := \{ i\in [L] \mid \exists j\in \alpha \mbox{ s.t. } \pi_{uv}(j) = i\}$.
				This implies that
				$$\forall v, \alpha, \qquad \Pr_u \left[ |\pi_{uv}(\alpha)| <|\alpha|^{d_0}\right] \leq \frac{1}{|\alpha|^{d_0}}.$$
		\item There is a constant $s_0 \in (0,1)$ such that,
		
		\begin{itemize}
			
			\item YES Case : If the \TSAT\ instance is
			satisfiable, then $\calH$ is satisfiable.
			
			\item NO Case : If the \TSAT\ instance is
			unsatisfiable, then $\calH$ is
			at most $2^{-s_0r}$-satisfiable.
		\end{itemize}
	\end{enumerate}
\end{theorem}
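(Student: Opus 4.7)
The plan is to combine three classical ingredients: the PCP Theorem, Raz's Parallel Repetition Theorem, and Håstad's smoothing trick of \cite[Lemma~6.9]{Hastad2001}.

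First, the PCP Theorem yields a polynomial-time reduction from $\TSAT$ to a two-prover one-round game (equivalently, a Label Cover instance with projection constraints) over constant-sized alphabets, with completeness $1$ and some constant soundness $s_1<1$. I would then compose this base game with itself $r$ times in parallel: by Raz's theorem the soundness drops to $s_1^{\Omega(r)} \leq 2^{-s_0 r}$ for a constant $s_0>0$ depending only on $s_1$, while the number of vertices, the degrees, and the alphabet sizes $L,R$ all blow up at most as $n^{O(r)}$, $2^{O(r)}$, $2^{O(r)}$. In the resulting game each $v\in \calV$ is an $r$-tuple of clauses labeled by an assignment to the $3r$ variables appearing in them, and each $u\in\calU$ is an $r$-tuple of single variables labeled by an assignment to them. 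This already establishes items~1 and~3.

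For item~2 I would replace the $\calU$-side construction by Håstad's modification: a $\calU$-vertex paired with $v=(C_1,\ldots,C_r)$ is obtained by first sampling a uniformly random subset $T\subseteq[r]$ of some fixed size $t=\Theta(r)$ and then, for each $i\in T$, picking one of the three variables of $C_i$ uniformly at random; the associated projection $\pi_{uv}$ restricts an assignment on $v$ to these $t$ coordinates. Completeness is preserved trivially, and since any good strategy for the modified game induces one for the unmodified one by averaging over $T$, the soundness $2^{-s_0 r}$ is retained up to a constant factor absorbed into $s_0$. Bi-regularity and the alphabet and size bounds of item~1 persist.

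The smoothness inequality itself requires a short calculation. Fix $v$ and $\alpha\subseteq[R]$, viewing $\alpha$ as a set of strings over the $3r$ variables of $v$. For any distinct $a,a'\in\alpha$ the probability over the random choice of $u$ that $\pi_{uv}(a)=\pi_{uv}(a')$ is exponentially small in $t$, since $a$ and $a'$ disagree on at least one variable of at least one clause, and a random selection of one variable from each of $t$ random clauses is very unlikely to avoid every disagreement. Translating this pairwise collision probability via a convexity / second-moment argument on the random partition of $\alpha$ into fibers of $\pi_{uv}$ yields $\E_u|\pi_{uv}(\alpha)|^{-1} \leq |\alpha|^{-2d_0}$ for a small enough constant $d_0\in(0,1/3)$, and the $\Pr_u$-bound then follows by Markov's inequality. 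I expect the main technical point to be calibrating $t$ and $d_0$ so that the smoothness estimate holds uniformly over $v$ and $\alpha$ while simultaneously preserving the alphabet and soundness bounds of items~1 and~3, which is precisely the trade-off balanced in Håstad's original proof.
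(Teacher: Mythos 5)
The paper itself does not prove this theorem; it invokes the PCP Theorem, Raz's Parallel Repetition Theorem, and H\aa stad's Lemma~6.9 (for the smoothness property), and your proposal correctly identifies this triple of ingredients. So your sketch is a reconstruction of the cited proof rather than an alternative to anything in the paper.

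The concrete gap is in your treatment of item~2. You sample a random \emph{proper} subset $T \subseteq [r]$ of size $t$, pick one variable from each clause in $T$, and let $\pi_{uv}$ restrict to those $t$ coordinates only, discarding the other $r-t$ clauses entirely. This is not the construction behind H\aa stad's smoothness lemma: there $P_2$ receives one uniformly random variable from \emph{each} of $v$'s $r$ clauses, so $\pi_{uv}$ restricts a satisfying assignment of the $3r$ variables to $r$ chosen ones and nothing is thrown away. Each clause then independently contributes a constant probability of separating any two elements of $\alpha$, which is what yields the bound $\E_u[|\pi_{uv}(\alpha)|^{-1}] \le |\alpha|^{-2d_0}$ uniformly over $\alpha$. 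In your variant a clause outside $T$ contributes no separation, which both weakens $d_0$ and, more seriously, breaks your soundness-preservation argument. The claim that ``any good strategy for the modified game induces one for the unmodified one by averaging over $T$'' runs in the wrong direction: the modified test is strictly easier to pass, so a good modified strategy does not convert into a good unmodified one. The correct argument is that, conditioned on $T$, the modified game is a $t$-fold parallel repetition in which $P_1$ additionally sees $r-t$ clauses that are distributed independently of the game on $T$; averaging over that independent junk (not over $T$) shows its value is at most that of the plain $t$-fold repetition, hence $2^{-\Omega(t)}$. You also do not address bi-regularity, which requires first reducing to a $\TSAT$ instance in which every variable appears in the same constant number of clauses.
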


\subsection{Fourier analysis}

	 In this section, we give a brief overview of the representation theory of non-abelian group and Fourier analysis over non-abelian groups. For more comprehensive understanding, we refer the reader to the book by Terras~\cite{Terras99}. We state many propositions in the following subsection, and the proofs of these propositions can be found in the same book \cite{Terras99}.
	 
\subsubsection{Representation Theory}
\label{sec:reptheory}
In this paper, we only consider non-abelian groups which are {\em finite}. Let $G = (G,\gp)$ be a finite non-abelian group. The identity element of a group is denoted by $1_G$.
	
	\begin{definition}
		A representation $(V, \rho)$ of $G$ is a vector space $V$ together with a group homomorphism $\rho: G \rightarrow \GL(V)$ from $G$ to the group $\GL(V)$ of invertible $\C$-linear transformations from $V$ to $V$. The dimension of the vector space $V$ is denoted by $\dim(\rho)$.
	\end{definition} 

For convenience, we just use the letter $\rho$ to denote a representation of $G$ and use $\rho_V$ to denote the underlying vector space. We view a representation $\rho(\cdot)$ as its corresponding matrix of the linear transformation. Thus $\rho(\cdot)_{ij}$ is used to denote the $(i,j)^{th}$ entry of that matrix. We always work with representations which are unitary. There is one representation which is obvious -- just map everything to $1\in \C$.  This representation is called the {\em trivial representation} which has dimension $1$. We will denote the trivial representation by $\trivrep$. 

\begin{definition}
	Let $\rho$ and $\tau$ be representations of $G$. An isomorphism from $\rho_V$ to
	$\tau_V$ is an invertible linear transformation $\phi : \rho_V \rightarrow \tau_V$ such that 
	$$ \phi \circ \rho(g) = \tau(g)\circ \phi,$$
	for all $g\in G$. We say that $\rho_V$ and $\tau_V$ are isomorphic and write $\rho_V\cong \tau_V$ if there exists an isomorphism from $\rho_V$ to $\tau_V$.
\end{definition}

\begin{definition}
	Let $\rho$ be a representation of $G$. A vector subspace $W\subset \rho_V$ is $G$-invariant if $\rho(g)w \in W$ for all $g\in G$ and $w\in W$.
\end{definition}

If a representation $(V, \rho)$ has a $G$-invariant subspace $W$ other than $\{0\}$ and $V$ itself, then the action on $W$ itself is a representation of $G$. This leads to the following important definition of irreducible representations.

\begin{definition}
A representation $\rho$ of $G$ is irreducible if $\rho_V\neq \emptyset$ and $\rho_V$ has no $G$-invariant subspaces other than $\{0\}$ and $\rho_V$.
\end{definition}

We will denote the set of all irreducible representations of $G$ up to isomorphism by $\irr(G)$.

\begin{fact}
	\label{fact:subgroup_rep}
	Let $G$ be a group and $H$ be any subgroup of $G$, if $\rho\in \irr(G)$  then $\rho$ restricted to $H$ is also a (not necessarily irreducible) representation of $H$.
\end{fact}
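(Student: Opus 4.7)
The plan is to unwind the definition of a representation: by definition, $\rho \in \irr(G)$ is in particular a group homomorphism $\rho : G \to \GL(\rho_V)$ (irreducibility is an additional property on top of this). So the only thing that needs checking is that the set-theoretic restriction $\rho|_H : H \to \GL(\rho_V)$ is still a group homomorphism, and this is immediate from the homomorphism property of $\rho$ on $G$.

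Concretely, I would verify the three structural axioms in one line each. First, for any $h_1, h_2 \in H$ we have $h_1 \gp h_2 \in H$ since $H$ is a subgroup, and
\[
\rho|_H(h_1 \gp h_2) \;=\; \rho(h_1 \gp h_2) \;=\; \rho(h_1)\rho(h_2) \;=\; \rho|_H(h_1)\,\rho|_H(h_2),
\]
using that $\rho$ is a homomorphism on the larger group $G$. Second, $1_G \in H$ and $\rho|_H(1_G) = \rho(1_G) = I$, the identity in $\GL(\rho_V)$. Third, for $h \in H$ the inverse $h^{-1}$ also lies in $H$ and $\rho|_H(h^{-1}) = \rho(h)^{-1} = \rho|_H(h)^{-1}$. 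Together these say exactly that $\rho|_H$ is a group homomorphism $H \to \GL(\rho_V)$, i.e., a representation of $H$ on the same vector space $\rho_V$.

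There is essentially no obstacle here; the statement is a definitional observation. The only point worth flagging explicitly (as the phrasing ``not necessarily irreducible'' in the fact already does) is that the $G$-invariant subspaces of $\rho_V$ are only required to be stable under $\rho(g)$ for all $g \in G$, while $H$-invariance is a strictly weaker condition, so $\rho_V$ may well contain proper nonzero $H$-invariant subspaces even though it contains no proper nonzero $G$-invariant ones. Thus the conclusion ``representation of $H$'' is the strongest general claim one can make, and no further argument is needed.
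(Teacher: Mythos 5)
Your proof is correct: the paper states this as a basic fact and defers to~\cite{Terras99} rather than giving an argument, and the standard (and essentially the only) justification is exactly what you wrote---a homomorphism restricted to a subgroup remains a homomorphism, hence a representation. Your closing remark about why irreducibility need not persist is also accurate and matches the parenthetical caveat in the statement.
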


\begin{definition} The tensor product of two representations $\rho$ and $\tau$ of a group $G$  is
the representation $\rho \otimes \tau$ on $\rho_V \otimes \tau_V$ defined by the condition
$$(\rho\otimes \tau)(g)(v\otimes w) = \rho(g)(v)\otimes \tau(g)(w),$$
and extended to all vectors in $\rho_V \otimes \tau_V$  by linearity.
\end{definition}

\begin{definition}
	The direct sum of two representations $\rho$ and $\tau$ is the space $\rho_{V} \oplus \tau_{V}$	with the block-diagonal action $\rho\oplus \tau$ of $G$.
\end{definition} 

If the representation in not irreducible, then by an appropriate change of basis $\rho$ can be converted into a block diagonal matrix with blocks corresponding to the invariant subspaces. Thus, any representation can be completely decomposed into a direct sum of irreducible representations of $G$, by applying an appropriate unitary transformation. Note that  this decomposition is {\em unique}.  We use the following notation to denote the decomposition of a reducible representation: If $\rho$ is a reducible  representation of $G$ then $\rho \cong \oplus_{i} n_i \rho_i$, where each $i$ we have {\em distinct} $\rho_i\in \irr(G)$ and $n_i$ denotes the multiplicity of $\rho_i$ in the decomposition. It will be convenient to think of this representation as a block diagonal matrices with $\rho_i$ as the blocks along the diagonal with multiplicity $n_i$.


The following proposition shows that matrix entries of irreducible representations are 'orthogonal' with respect to a {\em symmetric bilinear form}, unless they are conjugates of each other -- in which case the corresponding product is the inverse of the dimension of the representation.
\begin{proposition}
	\label{prop:orthmatrixentries}
	If $\rho$ and $\tau$ are two non-isomorphic irreducible representations of $G$ then for any $i,j,k,\ell$ we have
	\begin{equation}
	\label{eq:diffG_Product}
	\langle  (\rho)_{ij}\mid ( \tau)_{k\ell}\rangle_G = 0,
	\end{equation}
	where $\langle  f_1\mid  f_2\rangle_G := \frac{1}{|G|}\sum_{g\in G} f_1(g) f_2(g^{-1})$ (called a ``symmetric bilinear form''). Also,
	\begin{equation}
	\label{eq:sameG_Product}
	\langle  (\rho)_{ij}\mid ( \rho)_{k\ell}\rangle_G = \frac{\delta_{i\ell}\delta_{jk}}{\dim(\rho)},
	\end{equation}
	where $\delta_{ij}$ is the delta-function which is $1$ if $i=j$ and $0$ otherwise. 
\end{proposition}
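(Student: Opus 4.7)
The plan is to deduce both identities from Schur's Lemma applied to an appropriate averaged intertwining operator. For fixed indices $j,k$, I would introduce the $\dim(\rho)\times\dim(\tau)$ matrix
$$M \;=\; \frac{1}{|G|}\sum_{g\in G}\rho(g)\,E_{jk}\,\tau(g^{-1}),$$
where $E_{jk}$ denotes the elementary matrix with a $1$ in position $(j,k)$ and zeros elsewhere. Unpacking the $(i,\ell)$ entry gives $M_{i\ell}=\tfrac{1}{|G|}\sum_g\rho(g)_{ij}\tau(g^{-1})_{k\ell}$, which is exactly $\langle(\rho)_{ij}\mid(\tau)_{k\ell}\rangle_G$. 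Hence proving the proposition reduces to computing $M$.

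Next I would verify that $M$ is an intertwiner from $\tau_V$ to $\rho_V$: reindexing $g\mapsto h^{-1}g$ inside the defining sum immediately yields $\rho(h)M=M\tau(h)$ for every $h\in G$. Schur's Lemma then handles the two cases separately. If $\rho\not\cong\tau$, every intertwiner between these two irreducibles vanishes, so $M=0$ and \eqref{eq:diffG_Product} follows entry-wise. If $\rho=\tau$, Schur's Lemma forces $M=\lambda\cdot I$ for some scalar $\lambda=\lambda(j,k)$, so $M_{i\ell}=\lambda\,\delta_{i\ell}$ and only the constant $\lambda$ remains to be identified.

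To pin down $\lambda$ I would take traces on both sides. On one hand $\trace(M)=\lambda\dim(\rho)$; on the other hand, cyclicity of trace gives $\trace(\rho(g)E_{jk}\rho(g^{-1}))=\trace(E_{jk})=\delta_{jk}$ for each $g$, and averaging over $g$ preserves this, so $\trace(M)=\delta_{jk}$. Therefore $\lambda=\delta_{jk}/\dim(\rho)$, and reading off the $(i,\ell)$ entry of $\lambda I$ recovers \eqref{eq:sameG_Product}.

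Since Schur's Lemma is the only substantive input, there is no real obstacle here. The one subtlety worth flagging is that the pairing $\langle\cdot\mid\cdot\rangle_G$ uses $f_2(g^{-1})$ rather than $\overline{f_2(g)}$; this is precisely what makes the averaged operator $M$ come out $G$-equivariant on the nose, and it explains why the argument goes through without needing to invoke unitarity to trade $\tau(g^{-1})$ for $\tau(g)^\star$.
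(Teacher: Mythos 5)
Your proof is correct and is the standard Schur-orthogonality argument via the averaged intertwiner $M=\frac{1}{|G|}\sum_g\rho(g)E_{jk}\tau(g^{-1})$ and Schur's Lemma, which is the textbook proof found in the source (Terras) that the paper cites for this proposition; the paper itself does not reprove it. Your closing observation is also on point: the ``symmetric bilinear form'' with $f_2(g^{-1})$ is exactly what makes $M$ equivariant without invoking unitarity.
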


\subsubsection{Fourier analysis on non-abelian group}
In this paper, we will be interested in studying $L^2(G)$, the space of functions from
a finite group $G$ to the complex numbers $\C$. 

\begin{definition}
	\label{def:norm}
	Define the inner product $\langle \cdot, \cdot\rangle_{L^2(G)}$ on $L^2(G)$ by 
	$$ \langle f, g \rangle_{L^2(G)} = \E_{x\in G} [f(x) \comp{g(x)}].$$
\end{definition}

We can define a character for every representation of a group.
\begin{definition} 
	The character of a representation $\rho$ is the function $\chi_\rho : G \rightarrow \C$ defined by 
	$\chi_\rho(g) = \trace(\rho(g))$.
\end{definition}

The following proposition shows that the characters corresponding to the irreducible representations of a group are orthogonal to each other.

\begin{proposition}[Orthogonality of characters]
	\label{prop:orthochar}
	For $\rho, \tau\in \irr(G)$, we have
	\begin{equation*}
	\frac{1}{|G|}\sum_{g\in G}\chi_{\rho}(g) \overline{\chi_{\tau}(g)}= \begin{cases}
	1\quad \rho_V \cong \tau_V,\\
	0 \quad otherwise.
	\end{cases}
	\end{equation*}
	
\end{proposition}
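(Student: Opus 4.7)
The plan is to reduce the character orthogonality statement directly to the matrix-entry orthogonality from Proposition~\ref{prop:orthmatrixentries}, using the fact (stated earlier in the excerpt) that we always work with \emph{unitary} representations. The only genuinely substantive observation needed is that unitarity lets us convert complex conjugation into evaluation at $g^{-1}$: since $\tau(g)$ is unitary, $\tau(g^{-1}) = \tau(g)^{-1} = \ctranspose{\tau(g)}$, so $\comp{\tau(g)_{jj}} = \tau(g^{-1})_{jj}$. Everything else is bookkeeping.

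First I would expand characters as traces and swap the order of summation:
$$\frac{1}{|G|}\sum_{g\in G}\chi_{\rho}(g)\,\comp{\chi_{\tau}(g)}
= \sum_{i=1}^{\dim(\rho)}\sum_{j=1}^{\dim(\tau)} \frac{1}{|G|}\sum_{g\in G} \rho(g)_{ii}\,\comp{\tau(g)_{jj}}.$$
Second, I would apply the unitarity identity $\comp{\tau(g)_{jj}} = \tau(g^{-1})_{jj}$ to rewrite the inner average as
$$\frac{1}{|G|}\sum_{g\in G}\rho(g)_{ii}\,\tau(g^{-1})_{jj} = \langle (\rho)_{ii}\mid (\tau)_{jj}\rangle_G,$$
which is exactly the symmetric bilinear form appearing in Proposition~\ref{prop:orthmatrixentries}.

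Finally I would invoke Proposition~\ref{prop:orthmatrixentries} to finish. If $\rho_V \not\cong \tau_V$, then by \eqref{eq:diffG_Product} every one of these inner products vanishes, so the double sum is $0$. If $\rho_V \cong \tau_V$, we may assume $\rho = \tau$, and \eqref{eq:sameG_Product} gives $\langle (\rho)_{ii}\mid (\rho)_{jj}\rangle_G = \delta_{ij}/\dim(\rho)$; summing yields
$$\sum_{i,j}\frac{\delta_{ij}}{\dim(\rho)} = \frac{\dim(\rho)}{\dim(\rho)} = 1,$$
as desired.

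There is no real obstacle here; the mild subtlety is simply remembering to invoke unitarity to pass from $\comp{\tau(g)_{jj}}$ to $\tau(g^{-1})_{jj}$ so that the expression matches the form $\langle f_1\mid f_2\rangle_G$ defined in Proposition~\ref{prop:orthmatrixentries}. Once that bridge is built, the claim is a one-line consequence of matrix-entry orthogonality.
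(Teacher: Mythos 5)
Your proof is correct. The key observation is exactly right: since the paper works with unitary representations, $\tau(g^{-1}) = \tau(g)^{-1} = \ctranspose{\tau(g)}$, so $\comp{\tau(g)_{jj}} = \tau(g^{-1})_{jj}$, and the hermitian form $\frac{1}{|G|}\sum_g \rho(g)_{ii}\comp{\tau(g)_{jj}}$ becomes precisely the symmetric bilinear form $\langle (\rho)_{ii}\mid(\tau)_{jj}\rangle_G$ of Proposition~\ref{prop:orthmatrixentries}. From there the bookkeeping — vanishing of every term when $\rho_V\not\cong\tau_V$ via \eqref{eq:diffG_Product}, and $\sum_{i,j}\delta_{ij}\delta_{ij}/\dim(\rho) = 1$ via \eqref{eq:sameG_Product} when $\rho=\tau$ — is carried out correctly.

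Note that the paper itself does not prove this proposition; it is one of the standard facts deferred to Terras~\cite{Terras99}, as is Proposition~\ref{prop:orthmatrixentries}, so there is no in-paper proof to compare against. What you have supplied is the canonical derivation of character orthogonality as an immediate corollary of Schur orthogonality for matrix coefficients, which is almost certainly the argument the cited reference would give. One small point of style: since $\irr(G)$ is defined as a set of irreducibles \emph{up to isomorphism}, the phrase ``we may assume $\rho=\tau$'' in the isomorphic case is in fact automatic (and in any case is justified because characters are isomorphism-invariants, $\chi_\tau = \trace(\phi\rho(\cdot)\phi^{-1}) = \chi_\rho$); it would be cleaner to say so explicitly, but this does not affect correctness.
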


We use \Cref{prop:orthmatrixentries} many times in the proof. For convenience, we note an important identity that follows from  \Cref{prop:orthmatrixentries} (by setting $\tau$ to be the trivial map $\trivrep$).
\begin{proposition}
	\label{prop:summapzero}
If $\rho\in \irr(G) \hspace{-3pt}\setminus \hspace{-3pt}\trivrep$, $\sum_{g\in G} \rho(g) = 0$. 
\end{proposition}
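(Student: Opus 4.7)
The plan is to derive this as an immediate consequence of Proposition \ref{prop:orthmatrixentries}, specifically using the orthogonality relation \eqref{eq:diffG_Product} between matrix entries of non-isomorphic irreducible representations. Since the trivial representation $\trivrep$ is a $1$-dimensional irreducible representation distinct (up to isomorphism) from any non-trivial $\rho \in \irr(G)$, the hypotheses of \eqref{eq:diffG_Product} are satisfied with $\tau = \trivrep$.

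Concretely, I would proceed as follows. Fix $\rho \in \irr(G) \setminus \trivrep$ and fix any matrix entry indices $i,j$. Setting $\tau = \trivrep$, the representation $\tau$ has a single entry $\tau(g)_{11} = 1$ for every $g \in G$, and in particular $\tau_{11}(g^{-1}) = 1$. Applying \eqref{eq:diffG_Product} with $(k,\ell) = (1,1)$ gives
\[
0 \;=\; \langle (\rho)_{ij} \mid (\tau)_{11} \rangle_G \;=\; \frac{1}{|G|} \sum_{g \in G} \rho(g)_{ij} \cdot \tau_{11}(g^{-1}) \;=\; \frac{1}{|G|} \sum_{g \in G} \rho(g)_{ij}.
\]
Hence $\sum_{g \in G} \rho(g)_{ij} = 0$ for every pair $(i,j)$, which is precisely the statement that the matrix $\sum_{g \in G} \rho(g)$ is the zero matrix.

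There is really no obstacle here; this is a one-line corollary of the orthogonality relations, and the only thing to be slightly careful about is that the bilinear form in \eqref{eq:diffG_Product} evaluates the second argument at $g^{-1}$ rather than $g$. For the trivial representation this makes no difference since $\tau_{11}$ is constantly equal to $1$, so the sum collapses directly to $\frac{1}{|G|}\sum_g \rho(g)_{ij}$ as desired.
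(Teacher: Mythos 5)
Your proof is correct and matches the paper's approach exactly: the paper explicitly derives \Cref{prop:summapzero} from \Cref{prop:orthmatrixentries} by setting $\tau = \trivrep$, which is precisely your argument, and you correctly note that the evaluation at $g^{-1}$ is harmless since the trivial character is constant.
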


We have a following proposition. It also shows that the maximum dimension of any irreducible representation of $G$ is at most $\sqrt{G}$.
\begin{proposition}
	\label{prop:sumdimsquare}
			$$\sum\limits_{\rho \in \irr(G)} \dim(\rho)\chi_\rho(g) = \begin{cases}
	|G| \quad g= 1_G,\\
	0 \quad otherwise.
	\end{cases} $$
This implies the following:,
	$$\sum_{\rho\in \irr(G)} \dim(\rho)^2 = |G|.$$
\end{proposition}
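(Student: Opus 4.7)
The plan is to deduce both identities from the decomposition of the \emph{left regular representation} of $G$ into irreducibles. Concretely, let $R$ denote the representation of $G$ on the $|G|$-dimensional vector space $\mathbb{C}[G]$ with basis $\{e_h\}_{h\in G}$, where $R(g)$ acts by $R(g)e_h = e_{g\gp h}$. Since $R(g)$ permutes this basis, its trace counts fixed basis vectors: $\chi_R(g) = |\{h\in G : g\gp h = h\}|$, which equals $|G|$ when $g=1_G$ and $0$ otherwise. So establishing the first claim amounts to showing $\chi_R = \sum_{\rho\in\irr(G)} \dim(\rho)\,\chi_\rho$.

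To obtain this, first I would decompose $R$ into irreducibles: by the general fact about complete reducibility mentioned above \Cref{prop:orthmatrixentries}, one has $R \cong \oplus_{\rho\in \irr(G)} n_\rho \rho$ for some nonnegative integers $n_\rho$. Since characters add under direct sum, $\chi_R = \sum_{\rho} n_\rho \chi_\rho$. Taking the inner product against $\chi_\tau$ and invoking orthogonality of characters (\Cref{prop:orthochar}) picks out $n_\tau$:
\[
n_\tau \;=\; \langle \chi_R, \chi_\tau\rangle_{L^2(G)} \;=\; \frac{1}{|G|}\sum_{g\in G} \chi_R(g)\,\overline{\chi_\tau(g)} \;=\; \overline{\chi_\tau(1_G)},
\]
using that the only term in the sum comes from $g=1_G$. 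Since $\chi_\tau(1_G) = \trace(\tau(1_G)) = \trace(I_{\dim(\tau)}) = \dim(\tau)$, which is a real positive integer, this gives $n_\tau = \dim(\tau)$. Plugging back yields
\[
\chi_R(g) \;=\; \sum_{\rho\in\irr(G)} \dim(\rho)\,\chi_\rho(g),
\]
which combined with the explicit formula for $\chi_R$ computed above is exactly the first assertion.

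For the second identity, I would simply evaluate the first one at $g=1_G$: the left-hand side is $|G|$, while on the right-hand side each $\chi_\rho(1_G)$ equals $\dim(\rho)$, yielding $\sum_{\rho\in\irr(G)} \dim(\rho)^2 = |G|$. There is no real obstacle here — the only thing to be a little careful about is the conjugation in the inner product of \Cref{prop:orthochar}, but since $\chi_R$ vanishes off of $1_G$ and $\chi_\tau(1_G)$ is a positive real, the conjugation is harmless. Everything else is a direct consequence of complete reducibility together with the orthogonality already recorded in \Cref{prop:orthochar}.
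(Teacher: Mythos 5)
Your proof is correct and is the standard argument via the left regular representation; the paper itself does not prove this proposition but defers it (along with the other preliminary facts in that subsection) to Terras's book, where essentially the same regular-representation argument is given. The two ingredients you invoke — complete reducibility and orthogonality of characters (\Cref{prop:orthochar}) — are both available at this point in the paper, and the computation that $n_\tau = \overline{\chi_\tau(1_G)} = \dim(\tau)$ is clean, so there is nothing to add.
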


\begin{definition}
	\label{def:conv}
For two functions $f, g\in L^2(G)$ their {\em convolution} $f\conv g \in L^2(G)$ is defined as 
$$(f\conv g)(x) := \E_{y\in G}[f(y)g(y^{-1}x)].$$
\end{definition}

	For an abelian group, any function $f : G \rightarrow \C$ can be written as linear combinations of characters, i.e., the characters span the whole space $L^2(G)$. However, for non-abelian groups, characters form an orthonormal basis only for the set of {\em class functions} -- maps which are constant on {\em conjugacy classes}. A conjugacy class in $G$ is a nonempty subset $H$ of $G$ such that the following two conditions hold: 	Given any $x,y \in H$, there exists $g \in G$ such that $gxg^{-1} = y$, and if $x \in H$ and $g \in G$ then $gxg^{-1} \in H$. Since this is an equivalence class, any group is a collection of disjoint conjugacy classes.

As in the abelian case, we can understand operations like inner product, convolution etc., using the Fourier transform which is defined as follows:

\begin{definition}
For a function $f\in L^2(G)$, define the Fourier transform of $f$ to be the element $\hat{f} \in \prod_{\rho\in \irr(G)} \End{\rho_V}$ given  by  
$$\hat{f}(\rho) = \E_{x\in G}[ f(x) \rho(x)] \in \End{\rho_V}.$$
\end{definition}

\begin{definition}
Let $V$ be a finite-dimensional complex inner product space. Define an inner product  $\langle \cdot , \cdot\rangle_{\End{V}}$ on $\End{V}$ by
$$ \langle A, B\rangle_{\End{V}} = \trace(A\ctranspose{B}).$$
\end{definition}

We can now state the Fourier inversion theorem.
\begin{proposition}[Fourier inversion theorem]
	For $f\in L^2(G)$ we have
	$$f(x) = \sum_{\rho\in \irr(G)} \dim(\rho)\cdot \langle \hat{f}(\rho) , \rho(x)\rangle_{\End{\rho_V}}.$$
\end{proposition}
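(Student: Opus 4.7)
The plan is to unfold the inner product on the right-hand side into a trace, use the unitarity of the representations to collapse a product into a single representation evaluation, and then apply Proposition~\ref{prop:sumdimsquare} to extract a Dirac-style identity at $x$.

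First I would use the definition of $\langle \cdot, \cdot\rangle_{\End{\rho_V}}$ to write
\[
\langle \hat{f}(\rho), \rho(x)\rangle_{\End{\rho_V}} = \trace\bigl(\hat{f}(\rho)\, \rho(x)^\star\bigr).
\]
Since the representations we consider are unitary, $\rho(x)^\star = \rho(x^{-1})$. Substituting the definition $\hat{f}(\rho) = \E_{y\in G}[f(y)\rho(y)]$ and pulling the expectation past the matrix product and the trace (both of which are $\C$-linear), I would obtain
\[
\trace\bigl(\hat{f}(\rho)\,\rho(x^{-1})\bigr) = \E_{y\in G}\bigl[f(y)\, \trace(\rho(y)\rho(x^{-1}))\bigr] = \E_{y\in G}\bigl[f(y)\, \chi_\rho(yx^{-1})\bigr],
\]
using the homomorphism property $\rho(y)\rho(x^{-1}) = \rho(yx^{-1})$ and the definition of the character $\chi_\rho$.

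Next I would swap the finite sum over $\rho \in \irr(G)$ with the expectation to get
\[
\sum_{\rho\in \irr(G)} \dim(\rho)\cdot \langle \hat{f}(\rho), \rho(x)\rangle_{\End{\rho_V}} = \E_{y\in G}\Bigl[f(y)\sum_{\rho\in \irr(G)} \dim(\rho)\chi_\rho(yx^{-1})\Bigr].
\]
At this point Proposition~\ref{prop:sumdimsquare} applied with $g = yx^{-1}$ shows that the inner sum equals $|G|$ when $y = x$ and $0$ otherwise. Therefore the expectation reduces to $\tfrac{1}{|G|} \cdot f(x) \cdot |G| = f(x)$, which is the desired identity.

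There is no real obstacle here; the only point that requires minor care is keeping the adjoint/conjugate straight when writing $\rho(x)^\star = \rho(x^{-1})$ (which crucially uses unitarity) so that the product collapses to $\rho(yx^{-1})$ rather than leaving a stray complex conjugation that would block the application of Proposition~\ref{prop:sumdimsquare}.
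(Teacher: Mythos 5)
Your proof is correct, and it is the standard argument. The paper does not give its own proof of this proposition (it cites Terras), so there is nothing internal to compare against; your chain of identities — unfolding $\langle\hat f(\rho),\rho(x)\rangle_{\End{\rho_V}}=\trace(\hat f(\rho)\rho(x^{-1}))$ via unitarity, collapsing to $\E_y[f(y)\chi_\rho(yx^{-1})]$, and then invoking \Cref{prop:sumdimsquare} as a Dirac delta at $y=x$ — is exactly the textbook route and each step is justified under the paper's conventions (unitary representations, the stated definitions of $\hat f$ and of the $\End{\rho_V}$ inner product).
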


We have the following simple identities (See \cite{Terras99} for the proofs).
\begin{proposition}[Plancherel's identity]
	\label{prop:plancherels}
	$$ \langle f, g \rangle_{L^2(G)} = \sum_{\rho\in \irr(G)} \dim(\rho)\cdot  \langle \hat{f}(\rho) , \hat{g}(\rho)\rangle_{\End{\rho_V}}.$$
\end{proposition}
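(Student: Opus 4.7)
The plan is to derive Plancherel's identity directly from the Fourier inversion theorem together with the definition of $\hat{f}$; no additional orthogonality needs to be re-proved. The only auxiliary fact I need is that for complex matrices $A, B$ one has $\comp{\trace(AB)} = \trace(\ctranspose{B}\,\ctranspose{A})$, which follows from $\comp{\trace(M)} = \trace(\ctranspose{M})$ applied to $M = AB$, together with $\ctranspose{\ctranspose{M}} = M$.

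Starting from $\langle f, g\rangle_{L^2(G)} = \E_{x\in G}[f(x)\comp{g(x)}]$, I would apply Fourier inversion only to $g$, writing
\begin{equation*}
g(x) = \sum_{\tau\in \irr(G)} \dim(\tau)\cdot \langle \hat{g}(\tau), \tau(x)\rangle_{\End{\tau_V}} = \sum_{\tau\in \irr(G)} \dim(\tau)\cdot \trace\bigl(\hat{g}(\tau)\,\ctranspose{\tau(x)}\bigr).
\end{equation*}
Taking complex conjugates and using the trace identity above gives $\comp{g(x)} = \sum_{\tau} \dim(\tau)\cdot \trace\bigl(\tau(x)\,\ctranspose{\hat{g}(\tau)}\bigr)$. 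Substituting this back into the inner product and interchanging the finite sum over $\tau$ with the expectation over $x$ yields
\begin{equation*}
\langle f, g\rangle_{L^2(G)} = \sum_{\tau\in \irr(G)} \dim(\tau)\cdot \trace\Bigl(\E_{x}[f(x)\tau(x)]\cdot \ctranspose{\hat{g}(\tau)}\Bigr) = \sum_{\tau\in \irr(G)} \dim(\tau)\cdot \trace\bigl(\hat{f}(\tau)\,\ctranspose{\hat{g}(\tau)}\bigr),
\end{equation*}
where the last step is just the definition $\hat{f}(\tau) = \E_x[f(x)\tau(x)]$. Unfolding the definition of the $\End{\tau_V}$ inner product, this is exactly $\sum_\tau \dim(\tau)\cdot \langle \hat{f}(\tau), \hat{g}(\tau)\rangle_{\End{\tau_V}}$, as claimed.

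I do not foresee a genuine obstacle: the whole argument is one line of linear-algebraic bookkeeping, and the only subtlety is carefully tracking stars and using the cyclicity of the trace when pulling $\E_x$ inside. An alternative route would be to verify the identity only on pairs of matrix coefficients $f = (\rho)_{ij}$ and $g = (\tau)_{k\ell}$ by invoking the Schur orthogonality of \Cref{prop:orthmatrixentries} and extending by linearity; this is equivalent but effectively re-derives Fourier inversion along the way, so the approach above seems cleaner.
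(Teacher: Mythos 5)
Your proof is correct. The paper itself does not prove Plancherel's identity (nor the Fourier inversion theorem on which your argument rests); it simply cites Terras~\cite{Terras99} for all three facts (inversion, Plancherel, Parseval). Your derivation—apply inversion to $g$, conjugate, swap $\E_x$ with the finite sum and with $\trace$, and recognize $\E_x[f(x)\tau(x)] = \hat{f}(\tau)$—is the standard textbook route and is sound; since $f(x)$ is a scalar, the only linear algebra needed is $\comp{\trace(M)} = \trace(\ctranspose{M})$ together with $\ctranspose{(AB)}=\ctranspose{B}\ctranspose{A}$, and the step ``pulling $\E_x$ inside'' is plain linearity of $\trace$ (cyclicity of the trace, which you mention, is not actually invoked). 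One small thing worth saying out loud: your argument takes Fourier inversion as given, which is legitimate here since the paper states it as a standalone proposition, but it means you are not producing a self-contained proof from Schur orthogonality; your suggested alternative route via \Cref{prop:orthmatrixentries} would be the way to do that.
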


\begin{proposition}[Parseval's identity]
	\label{prop:parsevals}
	$$ \E_{x\in G}[|f(x)|^2] = \sum_{\rho \in \irr(G)} \dim(\rho)\cdot \hsnorm{\hat{f}(\rho)}^2,$$
	where $\hsnorm{A}:= \sqrt{\langle A, A\rangle_{\End{V}}} = \sqrt{\trace(A\ctranspose{A})}= \sqrt{\sum_{ij} |A_{ij}|^2}$.
\end{proposition}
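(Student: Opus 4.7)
The plan is to derive Parseval's identity as an immediate specialization of Plancherel's identity (Proposition~\ref{prop:plancherels}) applied with $g = f$. Starting from Plancherel for the pair $(f,f)$, the right-hand side is $\sum_{\rho \in \irr(G)} \dim(\rho) \cdot \langle \hat{f}(\rho), \hat{f}(\rho)\rangle_{\End{\rho_V}}$, while the left-hand side is $\langle f, f\rangle_{L^2(G)} = \E_{x \in G}[f(x)\overline{f(x)}] = \E_{x \in G}[|f(x)|^2]$ by the very definition of the $L^2(G)$ inner product (Definition~\ref{def:norm}).

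It then remains only to unpack the $\End{\rho_V}$ inner product on the right. By definition $\langle A, B\rangle_{\End{V}} = \trace(A B^\star)$, so specialising to $A = B = \hat{f}(\rho)$ gives $\trace\bigl(\hat{f}(\rho)\hat{f}(\rho)^\star\bigr) = \sum_{i,j}|\hat{f}(\rho)_{ij}|^2 = \hsnorm{\hat{f}(\rho)}^2$, which is exactly the definition of the Hilbert--Schmidt norm recorded in the statement. Substituting into each summand yields the claimed identity.

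There is essentially no obstacle: the whole argument consists of plugging $g = f$ into Plancherel and expanding two definitions. Should one wish to avoid invoking Plancherel as a black box, the natural alternative is to expand $|f(x)|^2 = f(x)\overline{f(x)}$ via the Fourier inversion theorem on both factors, average over $x \in G$, and collapse the resulting double sum over pairs $(\rho,\tau) \in \irr(G) \times \irr(G)$ using the orthogonality of matrix entries (Proposition~\ref{prop:orthmatrixentries}). Only the diagonal terms $\rho \cong \tau$ survive by \eqref{eq:diffG_Product}, and the $\dim(\rho)^{-1}$ appearing in \eqref{eq:sameG_Product} combines with the two factors of $\dim(\rho)$ supplied by Fourier inversion to leave precisely $\dim(\rho) \cdot \hsnorm{\hat{f}(\rho)}^2$ for each irreducible, reproducing the same formula.
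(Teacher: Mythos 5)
Your derivation is correct. The paper itself states both Plancherel (Proposition~\ref{prop:plancherels}) and Parseval (Proposition~\ref{prop:parsevals}) without proof, deferring to Terras~\cite{Terras99}, so there is no in-paper argument to compare against; but specializing Plancherel to $g=f$, computing $\langle f,f\rangle_{L^2(G)}=\E_{x\in G}[|f(x)|^2]$ from Definition~\ref{def:norm}, and unwinding $\langle \hat f(\rho),\hat f(\rho)\rangle_{\End{\rho_V}}=\trace\bigl(\hat f(\rho)\ctranspose{\hat f(\rho)}\bigr)=\hsnorm{\hat f(\rho)}^2$ is exactly the immediate route, and every step checks out. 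Your sketched alternative (expand $|f(x)|^2=f(x)\overline{f(x)}$ by Fourier inversion, average over $x$, and kill the off-diagonal $\rho\not\cong\tau$ cross-terms via Proposition~\ref{prop:orthmatrixentries}, noting for unitary representations that $\overline{\tau(x)_{k\ell}}=\tau(x^{-1})_{\ell k}$ so the symmetric bilinear form of \eqref{eq:diffG_Product}--\eqref{eq:sameG_Product} applies) is also sound and is in effect a proof of the $f=g$ case of Plancherel from first principles, trading one black box for a slightly longer but self-contained computation.
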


Note that the norm $\hsnorm{\cdot}$ satisfies a triangle inequality.
\begin{claim}
	\label{claim:normteq}
	$\hsnorm{AB}\leq \hsnorm{A}\cdot \hsnorm{B}.$
\end{claim}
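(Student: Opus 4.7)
The claim is the standard submultiplicativity of the Hilbert--Schmidt norm, and the plan is to reduce it to an entry-wise application of Cauchy--Schwarz. Writing out the matrix product, for every entry $(i,j)$ we have $(AB)_{ij} = \sum_k A_{ik} B_{kj}$, so by the Cauchy--Schwarz inequality applied to the sequences $(A_{ik})_k$ and $(B_{kj})_k$,
\[
|(AB)_{ij}|^2 \;\leq\; \Bigl(\sum_k |A_{ik}|^2\Bigr)\Bigl(\sum_k |B_{kj}|^2\Bigr).
\]

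The next step is to sum this bound over all $(i,j)$ and observe that the right-hand side factors cleanly:
\[
\hsnorm{AB}^2 \;=\; \sum_{i,j} |(AB)_{ij}|^2 \;\leq\; \Bigl(\sum_{i,k} |A_{ik}|^2\Bigr)\Bigl(\sum_{k,j} |B_{kj}|^2\Bigr) \;=\; \hsnorm{A}^2\cdot \hsnorm{B}^2,
\]
using the formula $\hsnorm{M}^2 = \sum_{ij}|M_{ij}|^2$ recorded in \Cref{prop:parsevals}. Taking square roots yields $\hsnorm{AB}\leq \hsnorm{A}\cdot \hsnorm{B}$.

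There is essentially no obstacle here: the only thing to be careful about is that the Cauchy--Schwarz step must be applied to each entry separately (rather than trying to apply it globally) so that the two sums decouple when we sum over $(i,j)$. An alternative one-line proof would invoke the coordinate-free identity $\hsnorm{M}^2 = \trace(M M^\star)$ together with the operator-norm inequality $\hsnorm{AB} \leq \|A\|_{\text{op}}\cdot\hsnorm{B} \leq \hsnorm{A}\cdot\hsnorm{B}$, but the entry-wise Cauchy--Schwarz argument above is self-contained and requires no additional machinery beyond the definition.
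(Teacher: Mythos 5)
Your proof is correct and takes essentially the same route as the paper: both apply the Cauchy--Schwarz inequality entry-wise to $(AB)_{ij}=\sum_k A_{ik}B_{kj}$ and then sum over $(i,j)$ to factor the bound into $\hsnorm{A}^2\cdot\hsnorm{B}^2$.
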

\begin{proof}
$
	\hsnorm{AB}^2 = \sum_{ij} |(AB)_{ij}|^2 \leq \sum_{ij} \left(\sum_{k} |A_{ik} B_{kj}|\right)^2 
$.
	Using the Cauchy-Schwartz inequality on the inner sum,
	\begin{align*}
\hsnorm{AB}^2 	\leq  \sum_{ij} \left(\sum_{k} |A_{ik}|^2 \right)\left(\sum_{\ell} |B_{\ell j}|^2\right) = \sum_{ijk\ell} |A_{ik}|^2 |B_{\ell j}|^2 =  \left(\sum_{ik} |A_{ik}|^2 \right) \left(\sum_{\ell j} |B_{\ell j}|^2 \right) = \hsnorm{A}^2 \cdot \hsnorm{B}^2.
	\end{align*}
\end{proof}

\begin{claim}
	\label{claim:Uni_norm}
	Let $A$ be any matrix and $U$ be any unitary matrix, then $\hsnorm{UA} = \hsnorm{A}$.
\end{claim}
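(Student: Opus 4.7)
The plan is to unfold the Hilbert--Schmidt norm in terms of the trace, push the unitary factor through using cyclicity of the trace, and then cancel it. Concretely, from the definition $\hsnorm{M}^2 = \trace(M\ctranspose{M})$ together with $(UA)^\star = \ctranspose{A}\ctranspose{U}$, I would first write
\[
\hsnorm{UA}^2 \;=\; \trace\bigl(UA\,\ctranspose{A}\ctranspose{U}\bigr).
\]
Next I would invoke the cyclic property of trace (which holds for square matrices, and the product $UA\ctranspose{A}\ctranspose{U}$ is indeed square whenever $UA$ is defined) to move $\ctranspose{U}$ to the front:
\[
\trace\bigl(UA\,\ctranspose{A}\ctranspose{U}\bigr)
\;=\; \trace\bigl(\ctranspose{U}U\,A\,\ctranspose{A}\bigr).
\]
Using unitarity, $\ctranspose{U}U = I$, so the right-hand side collapses to $\trace(A\ctranspose{A}) = \hsnorm{A}^2$. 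Taking square roots yields the claim.

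There is essentially no obstacle here; the only mild care is in ensuring the cyclic rearrangement is legitimate, which it is because $\trace(XY)=\trace(YX)$ for any $X,Y$ with the appropriate shapes (here $X = UA\ctranspose{A}$ and $Y = \ctranspose{U}$). One could equivalently prove this entry-wise from the formula $\hsnorm{A}^2 = \sum_{ij}|A_{ij}|^2$ by observing that the rows of $UA$ are obtained by applying a linear isometry (with matrix $U$) to the columns of $A$ stacked appropriately, but the trace-based derivation above is the cleanest and matches the definition used earlier in the paper.
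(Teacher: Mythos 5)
Your proof is correct, but it takes a genuinely different route from the paper's. The paper argues via conjugation: it asserts the existence of a unitary $V$ with $VUV^\star = I$ and then writes $\hsnorm{UA} = \hsnorm{VUAV^\star} = \hsnorm{VUV^\star\, VAV^\star} = \hsnorm{VAV^\star} = \hsnorm{A}$, invoking the fact that a ``change of basis does not change $\hsnorm{\cdot}$.'' As written that argument is shaky on two counts: a unitary $V$ with $VUV^\star = I$ exists only if $U = I$ (unitary conjugation preserves eigenvalues, so only the identity is conjugate to the identity), and the invariance of $\hsnorm{\cdot}$ under unitary conjugation is itself an instance of the very claim being proved, so the argument is at best circular without an independent justification. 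Your approach sidesteps all of this: from $\hsnorm{UA}^2 = \trace\bigl(UA\,\ctranspose{A}\ctranspose{U}\bigr)$, cyclicity of the trace gives $\trace\bigl(\ctranspose{U}U\,A\ctranspose{A}\bigr) = \trace\bigl(A\ctranspose{A}\bigr) = \hsnorm{A}^2$, using only $\ctranspose{U}U = I$. This is the standard, airtight derivation, and it is strictly more elementary than what the paper attempts; in this instance your proof is the more reliable of the two.
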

\begin{proof}
	Let $V$ be an unitary matrix which converts $U$ to the identity matrix, i.e., $VUV^\star = I$. Since the change of basis does not change the $\hsnorm{\cdot}$, we have
	$$\hsnorm{UA} = \hsnorm{VUAV^\star} = \hsnorm{VUV^\star VAV^\star} = \hsnorm{I VAV^\star} = \hsnorm{A}.$$
\end{proof}
	
\begin{proposition}[Convolution theorem]
	For $f, g\in L^2(G)$ we have 
	$$\hat{f\conv g}(\rho) = \hat{f}(\rho) \hat{g}(\rho).$$
\end{proposition}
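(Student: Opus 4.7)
The plan is to unfold both definitions and use the homomorphism property $\rho(xy) = \rho(x)\rho(y)$ together with a change of variables in the averaging. Concretely, I would start from
\[
\hat{f \conv g}(\rho) = \E_{x \in G}\bigl[(f \conv g)(x)\, \rho(x)\bigr] = \E_{x \in G}\Bigl[\E_{y \in G}[f(y) g(y^{-1} x)]\, \rho(x)\Bigr],
\]
and then substitute $\rho(x) = \rho(y) \rho(y^{-1}x)$ (which is valid because $\rho$ is a group homomorphism) to obtain
\[
\hat{f \conv g}(\rho) = \E_{x, y \in G}\bigl[f(y) g(y^{-1} x)\, \rho(y)\, \rho(y^{-1} x)\bigr].
\]

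Next I would change variables, letting $z = y^{-1}x$. For each fixed $y$, as $x$ ranges uniformly over $G$, so does $z$, since left multiplication by $y^{-1}$ is a bijection on $G$. This decouples the two averages and gives
\[
\hat{f \conv g}(\rho) = \E_{y \in G} \E_{z \in G} \bigl[ f(y) g(z)\, \rho(y)\, \rho(z) \bigr] = \Bigl(\E_{y \in G}[f(y)\, \rho(y)]\Bigr)\Bigl(\E_{z \in G}[g(z)\, \rho(z)]\Bigr) = \hat{f}(\rho)\, \hat{g}(\rho),
\]
where in the middle step I separate the matrix product across the two independent averages, which is legitimate because scalars $f(y), g(z)$ commute with everything and the matrix factors $\rho(y)$ and $\rho(z)$ appear in the correct order.

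There is no substantial obstacle; the only subtlety worth flagging is the non-commutativity of $\rho(y)$ and $\rho(z)$, which is why the order of multiplication in $\rho(y)\rho(z)$ matches that in $\hat{f}(\rho)\hat{g}(\rho)$. Writing $\rho(x) = \rho(y)\rho(y^{-1}x)$ rather than $\rho(y^{-1}x)\rho(y)$ is crucial for the identity to come out in the stated order. The argument uses only the definition of convolution, the definition of the Fourier transform, and the homomorphism property of $\rho$, so no further lemmas from the preliminaries are needed.
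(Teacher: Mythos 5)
Your proof is correct and is the standard argument; the paper itself states this proposition without proof (deferring to Terras), and your derivation via $\rho(x)=\rho(y)\rho(y^{-1}x)$ followed by the change of variables $z=y^{-1}x$ is exactly the textbook route. You are right to flag the ordering issue: writing $\rho(x)=\rho(y)\rho(y^{-1}x)$ rather than the reverse is what makes the factors land as $\hat f(\rho)\hat g(\rho)$ rather than $\hat g(\rho)\hat f(\rho)$.
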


\subsection{Important claims}
In this section, we prove a few statements that will be used in the soundness analysis. The following claim shows that the character functions always come in 'pairs' with respect to the complex conjugation.
\begin{claim}
	\label{claim:dim1conj}
	Let $G$ be any non abelian group. For every $\rho \in \irr(G)$, such that $\dim(\rho)=1$, there exists $\widetilde{\rho}\in \irr(G)$ with $\dim(\widetilde{\rho}) = 1$ such that 
	$$ \chi_\rho(g) = \overline{\chi_{\widetilde{\rho}}(g)}, \quad\quad \forall g\in G.$$
\end{claim}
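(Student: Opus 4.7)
The plan is to construct $\widetilde{\rho}$ explicitly as the complex conjugate representation of $\rho$ and then verify that it satisfies the three required properties: (i) being a bona fide element of $\irr(G)$, (ii) having dimension $1$, and (iii) having character equal to the complex conjugate of $\chi_\rho$.

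First, I would define $\widetilde{\rho} : G \rightarrow \GL_1(\C) = \C^*$ by $\widetilde{\rho}(g) := \overline{\rho(g)}$ for every $g \in G$. Since complex conjugation is a field automorphism of $\C$, the map $\widetilde{\rho}$ is a group homomorphism: for any $g, h \in G$,
\[
\widetilde{\rho}(g\gp h) = \overline{\rho(g\gp h)} = \overline{\rho(g)\rho(h)} = \overline{\rho(g)}\cdot \overline{\rho(h)} = \widetilde{\rho}(g)\widetilde{\rho}(h).
\]
Hence $\widetilde{\rho}$ is a representation of $G$ of dimension $1$. Every one-dimensional representation is trivially irreducible (the only subspaces of a one-dimensional space are $\{0\}$ and itself), so $\widetilde{\rho} \in \irr(G)$.

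Second, since $\dim(\widetilde{\rho}) = 1$, the character is simply the representation itself, so
\[
\chi_{\widetilde{\rho}}(g) = \widetilde{\rho}(g) = \overline{\rho(g)} = \overline{\chi_\rho(g)},
\]
which is equivalent to $\chi_\rho(g) = \overline{\chi_{\widetilde{\rho}}(g)}$ as desired. Note that we make no claim that $\widetilde{\rho}$ is distinct from $\rho$: if $\chi_\rho$ happens to be real-valued, then $\widetilde{\rho} \cong \rho$ and the statement holds with $\widetilde{\rho} = \rho$ in $\irr(G)$.

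There is no real obstacle here; the argument is essentially a one-line observation that complex conjugation turns a one-dimensional representation into another one-dimensional representation. The only thing one should be mindful of is that this construction relies crucially on $\dim(\rho) = 1$ (so that the entries of $\rho(g)$ lie in $\C$ and their product is commutative); for higher dimensional representations the naive entrywise conjugate is still a representation, but one would need to be more careful about unitarity conventions.
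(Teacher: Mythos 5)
Your proof is correct, and it takes a different and more direct route than the paper. You construct $\widetilde{\rho}$ explicitly as the complex-conjugate representation $g \mapsto \overline{\rho(g)}$ and verify directly that it is a one-dimensional (hence irreducible) representation with the required character. The paper instead argues structurally: it shows that the dimension-$1$ characters of $G$ form a group under pointwise multiplication, by identifying them with the multiplicative characters of the abelianization $G/[G,G]$, and then relies (implicitly) on the facts that a group is closed under inverses and that for a unitary one-dimensional $\rho$ the pointwise inverse is exactly $\overline{\chi_\rho}$. Your argument is more elementary and self-contained, avoiding the detour through $G/[G,G]$; the paper's argument packages the conclusion inside the abelian character group, which is perhaps motivated by the surrounding development but is not needed for this particular claim. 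One tiny inaccuracy in your commentary: the reason the entrywise conjugate of a higher-dimensional representation is still a representation is simply that complex conjugation is a ring automorphism applied entrywise to matrices, not anything to do with commutativity of scalars — but you already note the construction goes through in higher dimensions, so this does not affect your proof.
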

\begin{proof}
	We claim that the set of characters corresponding to dimension $1$ irreducible representations of $G$ forms a group under point-wise multiplication. This will be enough to show the claim.

	Let $G' = G/\commutator{G}{G}$ be the abelian quotient group. Assume $\rho$ is a degree $1$ representation of $G$. Then it satisfies $\rho(a)\rho(b) = \rho(ab)$ for all $a, b\in G$. Define a map $\Gamma_\rho : G' \rightarrow \C$ as $\Gamma_\rho(g') = \rho(g)$ where $g' = g\commutator{G}{G}$. This is a well defined map as
	$$\rho(aba^{-1}b^{-1}) = \rho(a)\rho(b) \rho(a^{-1})\rho(b^{-1}) = \rho(a)\rho(a^{-1})\rho(b)\rho(b^{-1}) = 1.$$ 
	Thus, the map $\rho$ is constant on every coset of $\commutator{G}{G}$ and hence $\Gamma_\rho$ is well defined. The set of all $\{\Gamma_\rho \mid \rho \in \irr(G), \dim(\rho) = 1\}$ is  the set of all the multiplicative characters of the abelian group $G'$ and hence form a group under coordinate-wise multiplication. There is a one-to-one correspondence between the coordinate wise multiplicative action of $\Gamma_\rho$'s and $\rho$'s. Thus, $\{\chi_\rho \mid \rho \in \irr(G), \dim(\rho) = 1\}$ form a group under point-wise multiplication.
\end{proof}

The following lemma shows that the direct sum decomposition of tensors of large dimension irreducible representations cannot contain overwhelming copies of a single dimension $1$ representation. 
\begin{lemma}
	\label{lemma:tensordecomp}
	Let $\rho = \otimes_{k=1}^t \rho_{i_k}$ be a representation of $G$ where each $\rho_{i_k} \in \irr(G)$ and $\dim(\rho_{i_k})\geq 2$ for all $k\in [t]$. Suppose following is the decomposition of $\rho$ into its irreducible components
	$$\otimes_{k=1}^t \rho_{i_k} \cong \oplus_{\ell=1}^{r} n_{j_\ell} \rho_{j_\ell},$$
	where $\rho_{j_\ell}$ and $\rho_{j_{\ell'}}$ are distinct for every $\ell \neq \ell'$. Then for all $\ell \in [r]$, $n_{j_\ell} \leq \left(1-\frac{1}{|G|}\right)\dim(\rho)$.
	\end{lemma}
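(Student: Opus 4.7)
The plan is to split into two cases based on $\dim(\rho_{j_\ell})$. The first case will fall to a one-line dimension count, and the second, which looks harder, reduces to the first by twisting with a one-dimensional character and then applying the same dimension count once more.

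First, I would record the crude bound that comes straight out of the decomposition $\otimes_k\rho_{i_k}\cong\oplus_\ell n_{j_\ell}\rho_{j_\ell}$: comparing dimensions on both sides gives $\sum_\ell n_{j_\ell}\dim(\rho_{j_\ell})=\dim(\rho)$, hence $n_{j_\ell}\dim(\rho_{j_\ell})\le\dim(\rho)$. When $\dim(\rho_{j_\ell})\ge 2$ this already yields $n_{j_\ell}\le\dim(\rho)/2\le(1-1/|G|)\dim(\rho)$, using $|G|\ge 2$ (in fact $|G|\ge 6$, since $G$ is non-abelian). So the only case that requires any work is $\dim(\rho_{j_\ell})=1$.

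For that case, write $\sigma=\rho_{j_\ell}$; since $\sigma$ is one-dimensional and unitary, its conjugate $\overline{\sigma}$ equals its inverse and is itself a one-dimensional irreducible representation. From $\chi_{\rho\otimes\overline{\sigma}}(g)=\chi_\rho(g)\,\overline{\chi_\sigma(g)}$ combined with \Cref{prop:orthochar}, the multiplicity of $\sigma$ in $\rho$ equals the multiplicity of $\trivrep$ in $\rho\otimes\overline{\sigma}$. Tensoring any irreducible representation with a one-dimensional one preserves both irreducibility and dimension, so
\[
\rho\otimes\overline{\sigma}\;\cong\;\rho_{i_1}\otimes\cdots\otimes\rho_{i_{t-1}}\otimes(\rho_{i_t}\otimes\overline{\sigma})
\]
is still a tensor product of irreducibles each of dimension $\ge 2$, with total dimension $\dim(\rho)$. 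Replacing $\rho$ by this twisted product, it suffices to prove the lemma in the special case $\rho_{j_\ell}=\trivrep$.

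In that reduced setting I would peel off one tensor factor: set $\pi=\rho_{i_t}$ and $\tau=\rho_{i_1}\otimes\cdots\otimes\rho_{i_{t-1}}$, so that $\rho=\tau\otimes\pi$ and $\dim(\tau)\dim(\pi)=\dim(\rho)$. Using $\chi_{\overline{\pi}}(g)=\overline{\chi_\pi(g)}$ and \Cref{prop:orthochar},
\[
n_{\trivrep}(\rho)\;=\;\tfrac{1}{|G|}\sum_g\chi_\tau(g)\chi_\pi(g)\;=\;\langle\chi_\tau,\chi_{\overline{\pi}}\rangle_G
\]
is exactly the multiplicity of the irreducible $\overline{\pi}$ in $\tau$. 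Applying the same dimensional bound to $\tau$ and using $\dim(\overline{\pi})=\dim(\pi)\ge 2$ gives
\[
n_{\trivrep}(\rho)\;\le\;\frac{\dim(\tau)}{\dim(\pi)}\;=\;\frac{\dim(\rho)}{\dim(\pi)^2}\;\le\;\frac{\dim(\rho)}{4}\;\le\;\Bigl(1-\tfrac{1}{|G|}\Bigr)\dim(\rho),
\]
the last step being valid for any $|G|\ge 2$. The only nontrivial step in this plan is the twist reduction in the case $\dim(\rho_{j_\ell})=1$; once that is in place the result drops out of the crude dimensional bound applied twice, and in fact yields the considerably stronger absolute bound $\dim(\rho)/4$.
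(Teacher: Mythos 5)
Your proof is correct, and it takes a genuinely different and arguably cleaner route than the paper. The paper first establishes the $t=2$ case by contradiction: it observes that since $\dim(\rho_1\otimes\rho_2) < |G|$, the bound $(1-1/|G|)\dim(\rho)$ forces an integer constraint, so the only way it could fail is if $\rho_1\otimes\rho_2\cong\tau\cdot I$ for a one-dimensional $\tau$; this is then ruled out by an eigenvalue/character-sum argument using \Cref{prop:orthochar}. The general $t$ is then handled by induction. Your argument replaces both the contradiction and the induction with a single twist-and-count: tensoring with $\overline{\sigma}$ to reduce to the multiplicity of $\trivrep$, then rewriting $n_{\trivrep}(\tau\otimes\pi)$ as $\langle\chi_\tau,\chi_{\overline\pi}\rangle_G$ and applying the crude dimension count to $\tau$. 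This is shorter, avoids any reliance on $\dim(\rho)<|G|$ or the precise constant $(1-1/|G|)$, and gives the stronger bound $\dim(\rho)/4$ in the one-dimensional case (and $\dim(\rho)/2$ overall), which is more than enough for the use in \Cref{lemma:magic_complicated}. One small note: the claim that your method yields $\dim(\rho)/4$ applies only when $\dim(\rho_{j_\ell})=1$; for $\dim(\rho_{j_\ell})=2$ the dimension count alone gives only $\dim(\rho)/2$, which still suffices.
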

\begin{proof}
	As $\sum_{\ell=1}^ r n_{j_\ell} \dim(\rho_{j_\ell}) = \dim(\rho)$, the claim is trivially true for $\ell$ such that $\dim(\rho_{j_\ell})\geq 2$. Thus, we will show the conclusion for $\ell$ such that $\dim(\rho_{j_\ell}) = 1$. 
 We first prove the lemma when $t=2$ and then prove it for arbitrary $t$. Let $\rho = \rho_1 \otimes \rho_2$. The only way the conclusion cannot be true for this $\rho$ is when $\rho \cong \tau\cdot I$ where $I$ is a $\dim(\rho)$ sized identity matrix and $\dim(\tau) = 1$ (i.e, all the irreducible components are the same and are of dimension $1$). This is because, $\dim(\rho_i)$ is always upper bounded by $\sqrt{G} -1$ (\Cref{prop:sumdimsquare}). Thus, $\dim(\rho) < |G|$ and hence if the conclusion is not true for $\tau$ then $\lceil \left(1-\frac{1}{|G|}\right)\dim(\rho) \rceil = \dim(\rho)$. We now show that $\rho \cong \tau\cdot I$ cannot happen.  Since $\tau$ is a scalar, 
	$$\rho \cong \tau\cdot I \implies (\rho_1 \otimes (\tau\rho_2)) \cong I.$$
	Now, both $\rho_1$ and $(\tau\rho_2)$ are irreducible representations of $G$. Since, the eigenvalues of a tensor are the pairwise product of eigenvalues of individual matrices, only way $(\rho_1 \otimes (\tau\rho_2)) \cong I$ can happen is if there exists $\omega$, with $|\omega|= 1$, such that all the eigenvalues of $\rho_1(g)$ are $\omega$ for all $g\in G$  as well as that of $(\tau\rho_2)(g)$ are $\overline{\omega}$ for all $g\in G$. This means $\chi_{\rho_1}(g) = \dim(\rho_1)\cdot \omega$ for all $g\in G$ as the trace of a matrix is equal to sum of the eigenvalues of the matrix. This contradicts \Cref{prop:orthochar}, i.e., $\sum_{g\in G}\chi_{\rho_1}(g) = |G|\dim(\rho_1) \cdot \omega \neq  0$.

	Now consider $\rho = \otimes_{k=1}^{m+1} \rho_{i_k} = \otimes_{k=1}^m \rho_{i_k}\otimes \rho_{i_{m+1}}$, where $m\geq 2$. We have,
\begin{align*}
\rho &= \otimes_{k=1}^{m+1} \rho_{i_k} \\
& = \otimes_{k=1}^m \rho_{i_k}\otimes \rho_{i_{m+1}} \\
& \cong ( \oplus_{\ell=1}^{r'} n_{j_\ell} \rho_{j_\ell})\otimes \rho_{i_{m+1}} \\
& =  \oplus_{\ell=1}^{r'} n_{j_\ell} ( \rho_{j_\ell} \otimes \rho_{i_{m+1}}) \\
& \cong \oplus_{\ell=1}^{r'} n_{j_\ell}  \left( \oplus_{\ell'=1}^{r''} n^\ell_{\ell'}\rho_{j^\ell_{\ell'}}\right). 
\end{align*}
Using the $t=2$ case, we have $n^\ell_{\ell'} \leq \left(1-\frac{1}{|G|}\right)\dim(\rho_{j_\ell})\dim(\rho_{i_{m+1}})$. We also know that for two different indices $\ell'_1 \neq \ell'_2$, $\rho_{j^\ell_{\ell'_1}} \neq \rho_{j^\ell_{\ell'_2}}$ by definition. Consider any representation $\tau$ of dimension $1$. Let $(\ell, \ell') = (\ell, \ell'_\tau)$ be the unique index in the inner direct sum where it appears (it might not appear at all in which case we treat $n^{\ell}_{\ell'_\tau} = 0$ ). The total count of the occurrences of $\tau$ in the direct sum is upper bounded by
\begin{align*}
\sum_{\ell=1}^r n_{j_\ell} \cdot n^{\ell}_{\ell'_\tau} &\leq  \sum_{\ell=1}^r n_{j_\ell} \cdot \left(1-\frac{1}{|G|}\right)\dim(\rho_{j_\ell})\dim(\rho_{i_{m+1}})\\
&  = \left(1-\frac{1}{|G|}\right) \sum_{\ell=1}^r n_{j_\ell} \cdot \dim(\rho_{j_\ell})\dim(\rho_{i_{m+1}})\\
&   = \left(1-\frac{1}{|G|}\right) \dim(\rho).
\end{align*}
\end{proof}

We have a following corollary that follows from the previous lemma.

\begin{corollary}
	\label{corollary:tensordecomp}
	Let $\rho = \otimes_{k=1}^t \rho_{i_k}$ be a representation of $G$ where each $\rho_{i_k} \in \irr(G)$ for all $k\in [t]$, and $\dim(\rho)\geq 2$. Suppose following is the decomposition of $\rho$ into its irreducible components
	$$\otimes_{k=1}^t \rho_{i_k} \cong \oplus_{\ell=1}^{r} n_{j_\ell} \rho_{j_\ell},$$
		where $\rho_{j_\ell}$ and $\rho_{j_{\ell'}}$ are distinct for every $\ell \neq \ell'$. Then for all $\ell \in [r]$, $n_{j_\ell} \leq \left(1-\frac{1}{|G|}\right)\dim(\rho)$.
\end{corollary}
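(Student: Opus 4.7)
The plan is to reduce the corollary to \Cref{lemma:tensordecomp} by isolating the dimension-$1$ tensor factors and ``absorbing'' them into a single one-dimensional twist. Write $S_1=\{k:\dim(\rho_{i_k})=1\}$ and $S_2=\{k:\dim(\rho_{i_k})\ge 2\}$, and let $\tau=\otimes_{k\in S_1}\rho_{i_k}$ and $\sigma=\otimes_{k\in S_2}\rho_{i_k}$ so that $\rho\cong \tau\otimes \sigma$. Since $\dim(\rho)\geq 2$ and $\dim(\tau)=1$, we must have $S_2\neq\emptyset$ and $\dim(\sigma)=\dim(\rho)\geq 2$.

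First I would handle the easy case $|S_2|=1$: then $\sigma=\rho_{i_k}$ is itself irreducible, and $\tau\otimes\sigma$ is a scalar twist of an irreducible representation, hence irreducible. So the decomposition of $\rho$ has a single summand with multiplicity $1$, and the conclusion reduces to the trivially verified inequality $1 \leq \left(1-\frac{1}{|G|}\right)\dim(\rho)$, which holds because $\dim(\rho)\geq 2$ and $|G|\geq 2$.

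The main case is $|S_2|\geq 2$. Here I apply \Cref{lemma:tensordecomp} directly to $\sigma$, obtaining a decomposition $\sigma \cong \oplus_{\ell=1}^{r} m_{j_\ell}\sigma_{j_\ell}$ with each multiplicity $m_{j_\ell}\leq \left(1-\frac{1}{|G|}\right)\dim(\sigma)=\left(1-\frac{1}{|G|}\right)\dim(\rho)$. Tensoring with $\tau$ gives $\rho \cong \oplus_{\ell=1}^{r} m_{j_\ell}\,(\tau\otimes \sigma_{j_\ell})$. The key observation is that, because $\tau$ is one-dimensional, each $\tau\otimes \sigma_{j_\ell}$ is a scalar twist of $\sigma_{j_\ell}$ and is therefore still irreducible with the same dimension; and two such twists $\tau\otimes \sigma_{j_\ell}$, $\tau\otimes \sigma_{j_{\ell'}}$ are isomorphic iff $\sigma_{j_\ell}\cong \sigma_{j_{\ell'}}$ (tensor with $\tau^{-1}$), so distinctness of the $\sigma_{j_\ell}$'s forces distinctness of the $\tau\otimes \sigma_{j_\ell}$'s. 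Hence the displayed identity is already the unique irreducible decomposition of $\rho$, and its multiplicities inherit the bound from the lemma.

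I do not expect a real obstacle here; the only delicate point is to verify that twisting by a one-dimensional representation preserves irreducibility and pairwise distinctness, which is a straightforward character-theoretic check. Once this is in place, the corollary follows by combining the two cases above.
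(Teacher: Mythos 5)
Your proof is correct, and it relies on the same key ingredient (\Cref{lemma:tensordecomp}), but it organizes the reduction differently. The paper merges the accumulated one-dimensional twist $\tau$ into a single dimension-$\geq 2$ factor, rewriting $\rho \cong (\tau \otimes \rho_{i_{t'+1}}) \otimes (\otimes_{k \geq t'+2}\rho_{i_k})$, so that the resulting tensor product has all factors of dimension at least $2$ and \Cref{lemma:tensordecomp} applies directly; you instead apply \Cref{lemma:tensordecomp} to $\sigma = \otimes_{k\in S_2}\rho_{i_k}$ and then twist the whole decomposition by $\tau$, which requires the extra (true and easy) observations that a scalar twist of an irreducible remains irreducible and that twisting preserves pairwise distinctness of summands, so that by uniqueness of the irreducible decomposition the multiplicities carry over. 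Your separate treatment of the $|S_2|=1$ case is actually a small improvement in rigor: the paper's absorption step, when only one factor has dimension $\geq 2$, invokes \Cref{lemma:tensordecomp} in the degenerate case $t=1$, which the lemma's proof (whose base case is $t=2$) leaves implicit, whereas you dispatch it directly with the trivial inequality $1 \leq \bigl(1-\frac{1}{|G|}\bigr)\dim(\rho)$. Net: same lemma, slightly different bookkeeping; yours is a bit longer but more self-contained.
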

\begin{proof}
Assume without loss of generality that the first $t'$ terms are all the dimension $1$ representations in the tensor product $\rho$. Now, the (tensor) product of dimension $1$ representations is also a dimension $1$ representation of $G$. Suppose $\tau =  (\otimes_{k=1}^{t'} \rho_{i_k})$ where $\dim(\tau) = 1$. We can write $\rho$ as:
	$$\rho = \otimes_{k=1}^t \rho_{i_k} = (\otimes_{k=1}^{t'} \rho_{i_k}) \otimes \rho_{i_{t'+1}} \otimes  ( \otimes_{k=t'+2}^{t'} \rho_{i_k}) = (\tau\rho_{i_{t'+1}} )\otimes  ( \otimes_{k=t'+2}^{t'} \rho_{i_k}) .$$
	Now, $\tau\rho_{i_{t'+1}}$ itself is a irreducible representation of $G$ of dimension at least $2$. Therefore, the conclusion follows from \Cref{lemma:tensordecomp}.
\end{proof}

%
%
%
%
%
%

\subsection{Functions on $G^n$}

For any non-abelian group $G$ and $n\geq 1$, we have a group $G^n$ where the the group operation is defined coordinate wise.
The irreducible representations of $G^n$ are precisely those representations obtained by taking tensor products of $n$ irreducible representations of $G$.
\begin{proposition}[\cite{Terras99}]
	\label{prop:irr_Gn}
The set of irreducible representations of $G^n$ is given by 
	$$\irr(G^n) = \{ \alpha \mid  \alpha = \otimes_{i\in [n]} \rho_i \mbox{ where } \rho_i \in \irr(G)\}.$$
\end{proposition}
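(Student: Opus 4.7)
The plan is to establish the statement via character theory, which has already been developed in the preliminaries. I will show three things: (a) every tensor product of irreducible representations of $G$ is an irreducible representation of $G^n$, (b) distinct such tuples give pairwise non-isomorphic representations, and (c) these representations exhaust $\irr(G^n)$ by a dimension count. The key tool throughout is \Cref{prop:orthochar} together with the identity $\sum_{\rho \in \irr(G)} \dim(\rho)^2 = |G|$ from \Cref{prop:sumdimsquare}.

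First I would verify that for any $(\rho_1, \ldots, \rho_n) \in \irr(G)^n$, the map $\alpha = \otimes_{i} \rho_i : G^n \to \GL(\otimes_i \rho_{i,V})$ defined by $\alpha(g_1, \ldots, g_n) = \rho_1(g_1) \otimes \cdots \otimes \rho_n(g_n)$ is a group homomorphism (this is immediate from the coordinate-wise group law on $G^n$ and the multiplicativity of tensor products of linear maps). Its character satisfies
\[
\chi_\alpha(g_1, \ldots, g_n) \;=\; \prod_{i=1}^n \chi_{\rho_i}(g_i),
\]
since the trace of a tensor product is the product of traces. To show $\alpha$ is irreducible, I would compute
\[
\frac{1}{|G^n|}\sum_{(g_1,\ldots,g_n) \in G^n} |\chi_\alpha(g_1,\ldots,g_n)|^2 \;=\; \prod_{i=1}^n \left( \frac{1}{|G|} \sum_{g_i \in G} |\chi_{\rho_i}(g_i)|^2 \right) \;=\; \prod_{i=1}^n 1 \;=\; 1,
\]
where each factor equals $1$ by \Cref{prop:orthochar} applied to the irreducible $\rho_i$. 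Since a representation is irreducible iff the inner product of its character with itself equals $1$, this shows $\alpha \in \irr(G^n)$.

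Second, for distinct tuples $(\rho_1, \ldots, \rho_n) \neq (\tau_1, \ldots, \tau_n)$ in $\irr(G)^n$, there is some index $i_0$ with $\rho_{i_0} \not\cong \tau_{i_0}$. Factoring the character inner product in exactly the same way,
\[
\frac{1}{|G^n|}\sum_{(g_1,\ldots,g_n)} \chi_{\otimes_i \rho_i}(g_1,\ldots,g_n) \, \overline{\chi_{\otimes_i \tau_i}(g_1,\ldots,g_n)} \;=\; \prod_{i=1}^n \frac{1}{|G|} \sum_{g_i \in G} \chi_{\rho_i}(g_i) \overline{\chi_{\tau_i}(g_i)},
\]
the $i_0$-th factor is $0$ by \Cref{prop:orthochar}, so the whole product vanishes. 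Therefore $\otimes_i \rho_i$ and $\otimes_i \tau_i$ are non-isomorphic irreducible representations of $G^n$.

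Finally, I would count dimensions to confirm that these are all irreducible representations. Applying \Cref{prop:sumdimsquare} to $G^n$ gives $\sum_{\beta \in \irr(G^n)} \dim(\beta)^2 = |G^n| = |G|^n$. On the other hand, over the tuples constructed above,
\[
\sum_{(\rho_1,\ldots,\rho_n) \in \irr(G)^n} \dim\!\left(\otimes_i \rho_i\right)^2 \;=\; \sum_{(\rho_1,\ldots,\rho_n)} \prod_{i=1}^n \dim(\rho_i)^2 \;=\; \left( \sum_{\rho \in \irr(G)} \dim(\rho)^2 \right)^n \;=\; |G|^n.
\]
Combined with step two (no double counting), the tensor products already account for the full sum $|G|^n$, so no other irreducible representations of $G^n$ can exist. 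There is no real obstacle here; the only mildly delicate point is being careful that step one produces genuinely irreducible representations before summing squared dimensions, which is exactly why the character computation comes first.
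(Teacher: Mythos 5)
Your proof is correct and complete; the paper itself simply cites this result from Terras's book and gives no argument of its own, so there is nothing to compare against. The character-theoretic route you take — irreducibility of $\otimes_i \rho_i$ from $\langle \chi_\alpha, \chi_\alpha \rangle = 1$, pairwise non-isomorphism from orthogonality of characters, and exhaustion by the dimension count $\bigl(\sum_\rho \dim(\rho)^2\bigr)^n = |G|^n$ — is the standard one and fits cleanly with the tools the paper has already stated in \Cref{prop:orthochar} and \Cref{prop:sumdimsquare}. The only unstated ingredient you lean on is the equivalence ``$\langle \chi, \chi \rangle = 1$ iff irreducible,'' which requires complete reducibility (Maschke) in addition to \Cref{prop:orthochar}; this is routine over $\C$ for finite groups and is implicit in the paper's framework, but it would be worth a one-line remark that you are invoking it.
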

We denote $\alpha$ by the corresponding tuple $(\rho_1, \rho_2, \ldots, \rho_n)$. We define the weight of a representation $\alpha = (\rho_1, \rho_2, \ldots, \rho_n)$ (denoted by $|\alpha|$) to be the number of non-trivial representations in $(\rho_1, \rho_2, \ldots, \rho_n)$.

We will be working with functions $f:G^n \rightarrow G$ which are {\em folded}. $f$ is said to be folded if $f(c\V x) = cf(\V x)$ for all $c\in G$ and $\V x \in G^n$. The following claim shows that for all functions $g(\V x):=\rho(f(\V x))_{ij}$ where $\dim(\rho)\geq 2$ and $1\leq i,j\leq \dim(\rho)$,  all the Fourier coefficients corresponding to representations of dimension $1$ are zero, if $f$ is folded.

\begin{lemma}
	\label{claim:folded_zero_coeff}
Let $f:G^n \rightarrow G$ be any folded function and $g(\V x):=\rho(f(\V x))_{ij}$ where $\rho \in \irr(G), \dim(\rho)\geq 2$ and $1\leq i,j\leq \dim(\rho)$. Let $\alpha$ be {\em any} representation of $G^n$ such that $\dim(\alpha) = 1$, then $\hat{g}(\alpha) = 0$.
\end{lemma}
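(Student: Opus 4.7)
The plan is to exploit the folding of $f$ by averaging over a diagonal action of $G$ on the input $\V x \in G^n$. Since $\dim(\alpha)=1$, $\alpha$ is irreducible, so by \Cref{prop:irr_Gn} we may write $\alpha = \otimes_{i=1}^n \rho_i$ with each $\rho_i \in \irr(G)$ of dimension $1$. In particular, the restriction of $\alpha$ to the diagonal subgroup $\{(c,\ldots,c) : c\in G\}$ is the multiplicative character $\psi(c) := \prod_{i=1}^n \rho_i(c)$, itself a $1$-dimensional representation of $G$.

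The first step is to start from $\hat{g}(\alpha) = \E_{\V x}[g(\V x)\,\alpha(\V x)]$ and, using translation-invariance of the uniform measure on $G^n$, rewrite it as $\hat{g}(\alpha) = \E_{\V x}[g(c\V x)\,\alpha(c\V x)]$ for every fixed $c \in G$. By folding, $g(c\V x) = \rho(c\cdot f(\V x))_{ij} = \sum_k \rho(c)_{ik}\,\rho(f(\V x))_{kj}$, and by the tensor structure of $\alpha$, $\alpha(c\V x) = \psi(c)\,\alpha(\V x)$. Setting $M_k := \E_{\V x}[\rho(f(\V x))_{kj}\,\alpha(\V x)]$, this yields the identity
$$\hat{g}(\alpha) = \psi(c)\sum_{k=1}^{\dim(\rho)} \rho(c)_{ik}\, M_k, \qquad \forall c \in G.$$
Averaging this identity over $c$ uniform in $G$ gives
$$\hat{g}(\alpha) = \sum_{k=1}^{\dim(\rho)} M_k \cdot \E_{c\in G}\bigl[\psi(c)\,\rho(c)_{ik}\bigr],$$
so it suffices to show that each expectation $\E_{c\in G}[\psi(c)\,\rho(c)_{ik}]$ vanishes.

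To finish, I would observe that the map $c \mapsto \psi(c)\,\rho(c)$ is itself an irreducible representation of $G$: since $\psi(c)$ is a nonzero scalar for every $c$, any subspace invariant under $\psi\cdot\rho$ is invariant under $\rho$, and $\rho$ is irreducible by hypothesis. This representation has dimension $\dim(\rho) \geq 2$, so it is not isomorphic to the trivial representation $\trivrep$. \Cref{prop:summapzero} then gives $\sum_{c \in G}\psi(c)\,\rho(c) = 0$ as a matrix, whose $(i,k)$-entry is precisely $\sum_c \psi(c)\,\rho(c)_{ik}=0$. Therefore $\hat{g}(\alpha)=0$, as desired.

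The only conceptual step is the realization that twisting $\rho$ by the scalar character $\psi$ preserves irreducibility and keeps the dimension $\geq 2$; once this is noted, the identity $\sum_c \psi(c)\rho(c) = 0$ is immediate from \Cref{prop:summapzero} (or equivalently from the orthogonality of matrix entries in \Cref{prop:orthmatrixentries}), and everything else is routine averaging.
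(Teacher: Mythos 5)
Your proof is correct and follows essentially the same route as the paper: average over diagonal shifts by $c\in G$, use folding together with the tensor structure of $\alpha$ to factor out the scalar $\alpha(\V c)$, and conclude via the vanishing of $\sum_{c\in G}$ of a non-trivial irreducible representation (\Cref{prop:summapzero}). The one minor streamlining is that you work directly with the $1$-dimensional character $\psi(c)=\alpha(\V c)$ and note that $\psi\cdot\rho$ is itself an irreducible representation of dimension $\geq 2$, whereas the paper reaches the same point through the conjugate character $\widetilde{\alpha}$ of \Cref{claim:dim1conj}; your version avoids invoking that claim and makes the irreducibility/non-triviality of the twisted representation explicit, which the paper leaves implicit.
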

\begin{proof}
	Recall, for any $\V x\in G^n$, $\alpha(\V x)$ is a scalar as $\dim(\alpha) = 1$. $f$ is folded which means that $f(c \V x) = c f(\V x)$ for all $c\in G$ and $\V x \in G^n$. Since $\rho(.)$ has dimension at least $2$, in the following analysis, we use $\matrixb{\rho(.)}$ to denote that matrix of linear transformation for clarity.
	\begin{align*}
	\hat{g}(\alpha) = \E_{\V x} [ g(\V x)\alpha(\V x)]   &= \E_{\V x} [ \matrixb{\rho(f(\V x))}_{ij}\cdot \alpha(\V x)] \\
	& = \frac{1}{|G|} \E_{\V x} \left[ \sum_{c\in G} \matrixb{ \rho(f(c \V x))}_{ij} \cdot \alpha(c \V x)\right]\\
	& = \frac{1}{|G|} \E_{\V x} \left[ \sum_{c\in G}  \matrixb{\rho(cf(\V x))}_{ij}\cdot \alpha(\V c) \alpha(\V x)\right]\\
	& = \frac{1}{|G|} \E_{\V x} \left[ \sum_{c\in G}  \left( \alpha(\V c) \matrixb{\rho(c)}\cdot \matrixb{\rho(f(\V x))}\right)_{ij} \alpha(\V x)\right]\\
	& = \frac{1}{|G|} \E_{\V x} \left[  \left(\left(\sum_{c\in G} \alpha(\V c) \matrixb{\rho(c)}\right)\cdot \matrixb{\rho(f(\V x))}\right )_{ij} \alpha(\V x)\right].
	\end{align*}
	Now, for $\alpha\in \irr(G^n)$, let  $\widetilde{\alpha}\in \irr(G^n)$ be the dimension $1$ representation satisfying the condition in \Cref{claim:dim1conj}. We have:
	\begin{align*}
	\sum_{c\in G} \alpha(\V c) \matrixb{\rho(c)}	& = \sum_{c\in G} {\overline{{\widetilde{\alpha}}(\V c)}}\cdot  \matrixb{\rho(c)}\\
	& = \sum_{c\in G} {{\widetilde{\alpha}}(\V c^{-1})}\cdot  \matrixb{\rho(c)} \\
	& = \sum_{c\in G} { ( \otimes_{i=1}^n {\widetilde{\alpha}_i}(c^{-1}))}\cdot  \matrixb{\rho(c)} \\
	& = \sum_{c\in G} { \tau(c^{-1})}\cdot  \matrixb{\rho(c)} \tag*{$\dim(\tau)=1$}\\
	& = 0,  \tag*{(Using \Cref{prop:summapzero})}
	\end{align*}
	where in the second last step, we used the fact that the product of dimension $1$ representations ($\otimes_{i=1}^n {\widetilde{\alpha}_i}$) og $G$ is itself a dimension $1$ representation ($\tau$) of $G$. Therefore, $\hat{g}(\alpha) = 0$.

\end{proof}

Fix any surjective projection map $\pi: [R]\rightarrow [L]$ for some $R\geq L$. Consider the following subgroup of $G^R$ given by the elements 
$$ \{ (x\circ \pi) \in G^R \mid x\in  G^L\},$$
where $(x\circ \pi)_i = x_{\pi(i)}$. Let us denote this group by $\pi(G^R)$. Note that this group is isomorphic to $G^L$. Thus, any representation $\alpha \in \irr(G^R)$ (which is a representation of $G^L$ using \Cref{fact:subgroup_rep}), can be decomposed into irreducible representations of $G^L$. 

The following lemma says that if $\alpha$ satisfies certain property, then for each irreducible representation occurring in the decomposition, either its dimension is large or its multiplicity is small.
\begin{lemma}
	\label{lemma:magic_complicated}
	Let $\pi : [R]\rightarrow [L]$ be any surjective projection map.  Let $\eps_0 \in (0, \frac{1}{2}]$ and $c\geq 10|G|\log(\frac{1}{\eps_0})$. Suppose	$\alpha \in \irr(G^R)$ ,
	$$\alpha = \mathop{\otimes}_{i=1}^R \rho_i = \mathop{\otimes}_{\ell=1}^{L} \underbrace{\left(\mathop{\otimes}_{j\in \pi_{uv}^{-1}(\ell)} \rho_j\right)}_{=: B_\ell}$$
	such that number of $\ell$ with $\dim(B_\ell)\geq 2$ is at least $c$. If $\alpha  \cong \oplus_m n_m\beta_m$ be the decomposition of $\alpha$ into irreducible representations of $\pi(G^L) \cong G^L$, then for every $m$ either $\dim(\beta_m)\geq c$ or $n_m\leq \eps_0^2\cdot \dim(\alpha)$.
\end{lemma}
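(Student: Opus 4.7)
The strategy is to first understand $\alpha$ as a representation of $G^L$ through the embedding $\pi(G^R) \cong G^L$, and then reduce the multiplicity bound to a coordinate-wise application of Corollary~\ref{corollary:tensordecomp}. The key observation is that the embedding $x \mapsto x\circ\pi$ from $G^L$ into $G^R$ groups the coordinates of $G^R$ into blocks indexed by $\ell\in[L]$, and under this embedding $\alpha$ factors as an external tensor product over those blocks.

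Concretely, for $x\in G^L$, the $i$-th coordinate of $x\circ\pi$ is $x_{\pi(i)}$, so
\[
\alpha(x\circ\pi) \;=\; \bigotimes_{i=1}^R \rho_i(x_{\pi(i)}) \;=\; \bigotimes_{\ell=1}^L \Big(\bigotimes_{j\in \pi^{-1}(\ell)} \rho_j(x_\ell)\Big) \;=\; \bigotimes_{\ell=1}^L B_\ell(x_\ell).
\]
Thus, viewed as a representation of $G^L$, $\alpha$ coincides with the external tensor product of the $B_\ell$'s, where each $B_\ell$ is a representation of the $\ell$-th copy of $G$.

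Next I would decompose each $B_\ell$ into irreducibles of $G$, writing $B_\ell \cong \bigoplus_{\gamma\in\irr(G)} m_{\ell,\gamma}\,\gamma$. Distributing the tensor product, the decomposition of $\otimes_\ell B_\ell$ into irreducibles of $G^L$ (which are, by Proposition~\ref{prop:irr_Gn}, exactly the tensor products $\otimes_\ell \gamma_\ell$ with $\gamma_\ell\in\irr(G)$) becomes
\[
\alpha \;\cong\; \bigoplus_{(\gamma_1,\ldots,\gamma_L)} \Big(\prod_{\ell=1}^L m_{\ell,\gamma_\ell}\Big) \cdot \bigotimes_{\ell=1}^L \gamma_\ell .
\]
So each $\beta_m$ corresponds to a tuple $(\gamma_1^{(m)},\ldots,\gamma_L^{(m)})$, with $n_m = \prod_\ell m_{\ell,\gamma_\ell^{(m)}}$ and $\dim(\alpha) = \prod_\ell \dim(B_\ell)$. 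The task reduces to bounding the ratio $n_m/\dim(\alpha) = \prod_\ell m_{\ell,\gamma_\ell^{(m)}}/\dim(B_\ell)$.

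Finally I would bound each factor by splitting on $\dim(B_\ell)$. If $\dim(B_\ell)=1$, then $B_\ell$ is itself a $1$-dimensional (hence irreducible) character, so the ratio is at most $1$. If $\dim(B_\ell)\ge 2$, Corollary~\ref{corollary:tensordecomp} applied to $B_\ell = \otimes_{j\in\pi^{-1}(\ell)} \rho_j$ gives $m_{\ell,\gamma_\ell^{(m)}} \le (1-1/|G|)\dim(B_\ell)$. Since at least $c$ indices $\ell$ have $\dim(B_\ell)\ge 2$, multiplying yields
\[
\frac{n_m}{\dim(\alpha)} \;\le\; \Big(1-\tfrac{1}{|G|}\Big)^{c} \;\le\; e^{-c/|G|} \;\le\; \eps_0^{10} \;\le\; \eps_0^2,
\]
using $c\ge 10|G|\log(1/\eps_0)$ and $\eps_0\le 1/2$. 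Thus the second disjunct always holds, which certainly implies the stated disjunction. The only real step that requires thought is the first one, namely correctly identifying the restriction of $\alpha$ to $\pi(G^R)$ with the external tensor product $\otimes_\ell B_\ell$; once that factoring is in place, everything reduces to a one-coordinate-at-a-time application of Corollary~\ref{corollary:tensordecomp}, and no genuine obstacle appears.
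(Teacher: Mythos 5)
Your proof is correct, and it follows the same core route as the paper --- identify $\alpha$ restricted to $\pi(G^R)\cong G^L$ with the external tensor product $\otimes_\ell B_\ell$, distribute the per-block decompositions so that $n_m = \prod_\ell m_{\ell,\gamma_\ell^{(m)}}$ and $\dim(\alpha) = \prod_\ell \dim(B_\ell)$, and then bound the per-block ratio via Corollary~\ref{corollary:tensordecomp} --- but you streamline the crucial step. The paper first reduces to the case $\dim(\beta_m)\leq c$, notes that then at most $\log c$ of the $\ell\in S$ can contribute a constituent $\gamma_\ell^{(m)}$ of dimension $\geq 2$, and applies the corollary only at the remaining $\geq c-\log c$ coordinates in $S$ where $\dim(\gamma_\ell^{(m)})=1$, arriving at the bound $(1-1/|G|)^{c-\log c}$. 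You instead observe, correctly, that Corollary~\ref{corollary:tensordecomp} bounds the multiplicity of \emph{every} irreducible constituent of $B_\ell$ (its statement makes no restriction to dimension-$1$ constituents, and the Lemma~\ref{lemma:tensordecomp} proof dispatches the dimension-$\geq 2$ case trivially via $n_{j_\ell}\dim(\rho_{j_\ell})\leq\dim(\rho)$). So you may apply it uniformly at all $|S|\geq c$ coordinates with $\dim(B_\ell)\geq 2$, irrespective of $\dim(\beta_m)$, and conclude $n_m/\dim(\alpha)\leq (1-1/|G|)^c\leq\eps_0^2$ outright. This eliminates both the case split on $\dim(\beta_m)$ and the $\log c$ slack, and in fact shows the second disjunct of the lemma holds unconditionally; the lemma as stated is weaker than what your argument establishes.
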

\begin{proof}
	We can decompose $\alpha$ as follows:
	$$\alpha = \mathop{\otimes}_{i=1}^R \rho_i = \mathop{\otimes}_{\ell=1}^{L} \underbrace{ \left(\mathop{\otimes}_{j\in \pi_{uv}^{-1}(\ell)} \rho_j\right)}_{B_\ell} \cong  \mathop{\otimes}_{\ell=1}^{L} \left(\oplus_{ k = 1}^{t_\ell}  n^\ell_k \rho^\ell_k\right)  = \oplus_m n_m \beta_m,$$
	where for every $\ell$ and $k$, $\rho^\ell_k \in \irr(G)$. Let $d_\ell = \dim(B_\ell)$. By assumption, there are at least $c$ coordinates $\ell$ such that $d_\ell \geq 2$. Let us denote this subset by $S\subseteq [L]$. Fix any $\beta_m = (\rho^1_{k_1}, \rho^2_{k_2}, \ldots, \rho^L_{k_L})$ in the direct sum, such that $\dim(\beta_m) \leq c$. Then we have,
	$$n_m = \prod_{\ell = 1}^L  n^\ell_{k_\ell}.$$
	As the dimension of $\beta_m$ is at most $c$, it must be the case that for at least $c-\log c$ many $\ell \in S$, $\dim(\rho^\ell_{k_\ell}) = 1$. Let us denote these coordinates by $S'\subseteq S$. Therefore, using \Cref{corollary:tensordecomp},
$$n_m = \prod_{\ell \in S'}  n^\ell_{k_\ell} \prod_{\ell \notin S'} n^\ell_{k_\ell} \leq  \prod_{\ell \in S'}  \left( 1- \frac{1}{|G|}\right)d_\ell\prod_{\ell \notin S'} d_\ell \leq \left( 1- \frac{1}{|G|}\right)^{c-\log c} \prod_{\ell=1}^L d_\ell.$$
Since $\prod_{\ell=1}^L d_\ell = \dim(\alpha)$, we have
$$\frac{n_m}{\dim(\alpha)} \leq \left( 1- \frac{1}{|G|}\right)^{c-\log c} \leq e^{-\frac{c-\log c}{|G|}}  \leq e^{-\frac{c}{2|G|}}  \leq \eps_0^2,$$
where we used the fact that $\frac{c}{2}\geq \log c$.
\end{proof}

\subsection{Notations}

Whenever possible, we use the notation $\alpha, \beta$ to denote the  representations of group $G^n$ and $\rho, \tau$ for group $G$. Also, we use bold letters $\V x, \V c$ to denote the elements of $G^n$.

For a representation $\alpha\in \irr(G^n)$ where $\alpha = \otimes_{i=1}^n \rho_i$, we use the notation $\dimgeqi{\alpha}{k}$ to denote the number of $i\in [n]$ such that $\dim(\rho_i)\geq k$.

\section{Warm-up: Dictatorship Test}

In this section, we analyze the dictatorship test where the test involves checking some linear equation over a non-abelian group. The analysis will highlight a few important differences between our test and the linearity test over abelian groups.

Fix a non-abelian group $G$. Let $f: G^n \rightarrow G$ be a function. A function is called a dictator function if it is for the form $f(\V x) = x_i$ for some $i\in [n]$. We use $\gp$ to denote the group operation.  Consider the following $3$-query  dictatorship test for $f$:
\begin{enumerate}
	\item Sample $\V a = (a_1, a_2, \ldots, a_n)$ from $G^n$ uniformly at random.
	\item Sample $\V b = (b_1, b_2, \ldots, b_n)$ from $G^n$ uniformly at random.
	\item Calculate $\V c = (c_1, c_2, \ldots, c_n)$ such that $c_i = b_i^{-1} a_i^{-1}$.
	\item Check if $f(\V a) \gp f(\V b)\gp f(\V c) = 1_{G}$.
\end{enumerate}
Completeness is trivial: If $f$ is an $i^{th}$ dictator function, i.e., $f(x_1, x_2, \ldots, x_n) = x_i$, then the test passes with probability $1$. This is because we are essentially checking if $a_i\gp b_i \gp c_i = 1_{G}$, which is always true by the definition of $c_i$.

  We analyze the soundness of the test. The following lemma says that if the test passes with some non-trivial probability then it must be the case that $f$ (or a minor variation of $f$) has a low dimension Fourier coefficient whose Hilbert-Schmidt norm is large.  The actual conclusion is somewhat stronger than this. In the next section, we will show that such a conclusion can be used to analyze the soundness of the final reduction (which is also presented in next section).

\begin{lemma}
Assume $f$ is folded. For all $\eps>0$  and $\delta>0$, if $f$ passes the test with probability $\frac{1}{\left| \commutator{G}{G}\right|}+ \eps$, then there exist $\rho\in \irr(G)$ and $1\leq i, j\leq \dim(\rho)$ such that for  $h (\V x) := \rho(f(\V x))_{ij}$ ,
$$ \max_{\substack{{\alpha},\\ \dim(\alpha)\geq 2, \\ \dimgeqi{\alpha}{2}< \frac{1}{2\delta^2}.}} \hsnorm{  {\hat{h}(\alpha)}}  \geq \frac{\eps}{|G|} - \delta.$$
\end{lemma}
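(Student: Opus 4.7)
The plan is to expand the acceptance probability via character theory on $G$, separate out the unavoidable contribution from one-dimensional representations (which already accounts for $1/|\commutator{G}{G}|$), and then convert the remaining $\eps$ amount of ``extra'' acceptance into a large low-dimensional Fourier coefficient of an entry function $H_{ij}(\V x):=\rho(f(\V x))_{ij}$ using the convolution bound \eqref{eq:fourier_sum}.

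Using $\mathbf{1}[g=1_G]=\frac{1}{|G|}\sum_{\rho\in\irr(G)}\dim(\rho)\chi_\rho(g)$,
\[
\Pr[\text{test passes}]=\frac{1}{|G|}\sum_{\rho\in\irr(G)}\dim(\rho)\,T_\rho,\qquad T_\rho:=\E_{\V a,\V b}\bigl[\chi_\rho(f(\V a)\gp f(\V b)\gp f(\V c))\bigr].
\]
Every one-dimensional $\rho$ factors through the abelianisation $G/\commutator{G}{G}$, so $|T_\rho|\le 1$ and the dimension-one representations together contribute at most $|G/\commutator{G}{G}|/|G|=1/|\commutator{G}{G}|$. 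The hypothesis therefore forces $\sum_{\dim(\rho)\ge 2}\dim(\rho)\,T_\rho\ge|G|\eps$ (after taking real parts), and since $\sum_\rho \dim(\rho)\le \sum_\rho \dim(\rho)^2=|G|$, an averaging argument produces some $\rho$ with $\dim(\rho)\ge 2$ and $|T_\rho|\ge \eps$.

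Fix such a $\rho$ and set $H_{ij}(\V x):=\rho(f(\V x))_{ij}$. Expanding the trace of the matrix product and using that the test distribution with $\V c=(\V a\V b)^{-1}$ is precisely the one that realizes $(g_1\conv g_2\conv g_3)(1_{G^n})$, we get $T_\rho=\sum_{i,j,k}(H_{ij}\conv H_{jk}\conv H_{ki})(1_{G^n})$, which \eqref{eq:fourier_sum} bounds above by
\[
\sum_{\alpha\in\irr(G^n)}\dim(\alpha)\sum_{i,j,k}\hsnorm{\hat H_{ij}(\alpha)}\hsnorm{\hat H_{jk}(\alpha)}\hsnorm{\hat H_{ki}(\alpha)}.
\]
I then split at a threshold $D:=\dim(\rho)/\delta^2$. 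For the low-dimensional piece ($\dim(\alpha)<D$), I pull out the maximum $M:=\max_{i,j,\alpha:\dim(\alpha)<D}\hsnorm{\hat H_{ij}(\alpha)}$ as a single factor, bound $\sum_{i,j,k}\hsnorm{\hat H_{jk}(\alpha)}\hsnorm{\hat H_{ki}(\alpha)}\le \dim(\rho)\sum_{i,j}\hsnorm{\hat H_{ij}(\alpha)}^2$ by two applications of Cauchy--Schwarz, and finish with Parseval $\sum_\alpha\dim(\alpha)\sum_{i,j}\hsnorm{\hat H_{ij}(\alpha)}^2=\dim(\rho)$ to obtain a low-dim contribution of at most $M\dim(\rho)^2$. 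For the high-dimensional tail ($\dim(\alpha)\ge D$), I observe $\sum_{i,j,k}a_{ij}a_{jk}a_{ki}=\trace(A_\alpha^3)\le \dim(\rho)\hsnorm{A_\alpha}^3$ for the real matrix $A_\alpha=(\hsnorm{\hat H_{ij}(\alpha)})$, and combine this with $\hsnorm{A_\alpha}^2\le \dim(\rho)/\dim(\alpha)$ (from the same Parseval identity) to bound the tail by $\dim(\rho)^{5/2}/\sqrt{D}=\delta\dim(\rho)^2$.

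Putting the two halves together, $\eps\le|T_\rho|\le(M+\delta)\dim(\rho)^2$, giving $M\ge \eps/\dim(\rho)^2-\delta\ge \eps/|G|-\delta$ since $\dim(\rho)^2\le|G|$. The maximising $\alpha$ lies in $\dim(\alpha)<D\le\sqrt{|G|}/\delta^2$, and since $\dim(\alpha)\ge 2^{\dimgeqi{\alpha}{2}}$, we obtain $\dimgeqi{\alpha}{2}<\tfrac{1}{2}\log_2|G|+2\log_2(1/\delta)$, which sits well below $1/(2\delta^2)$ for small enough $\delta$. Finally, \Cref{claim:folded_zero_coeff} forces $\hat H_{ij}(\alpha)=0$ for every $\alpha$ with $\dim(\alpha)=1$ (using folded-ness together with $\dim(\rho)\ge 2$), so the optimising $\alpha$ automatically satisfies $\dim(\alpha)\ge 2$. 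The main delicate step is performing the Cauchy--Schwarz/Parseval contraction on the triple product carefully enough that only a single factor of $\dim(\rho)^2$ is lost, matching the $\eps/|G|$ target; a careless decomposition would shed an extra $\dim(\rho)$ and yield merely $\eps/|G|^{3/2}$.
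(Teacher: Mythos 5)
Your proof is correct and follows the same high-level route as the paper: expand $\Pr[\text{accept}]$ through $\mathbf{1}[g=1_G]=\frac{1}{|G|}\sum_\rho\dim(\rho)\chi_\rho(g)$, peel off the dimension-one representations to account for the $1/|\commutator{G}{G}|$ baseline, write the remainder as triple convolutions at $1_{G^n}$, use folding (\Cref{claim:folded_zero_coeff}) to kill $\dim(\alpha)=1$ terms, and then do a low/high split controlled by Cauchy--Schwarz and Parseval. Two details differ. First, the paper averages all the way down to a single triple $(i,j,k)$ and works with three scalar functions $h_1,h_2,h_3$; you stop the averaging at $\rho$, keep the full $\sum_{i,j,k}$, and package the HS-norms into the matrix $A_\alpha$, using $\trace(A_\alpha^3)\leq\hsnorm{A_\alpha}^3$ together with the Parseval identity $\sum_\alpha\dim(\alpha)\hsnorm{A_\alpha}^2=\dim(\rho)$. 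This is a perfectly serviceable variant and arguably a bit more structural. Second, the paper splits directly on $\dimgeqi{\alpha}{2}$ (taking $D=\frac{1}{2\delta^2}$), which makes the conclusion match the lemma with no further conversion; you split on $\dim(\alpha)<D$ with $D=\dim(\rho)/\delta^2$ and then need the extra inference $\dimgeqi{\alpha}{2}<\log_2 D<\frac{1}{2\delta^2}$. That last inequality is the one place you handle informally (``sits well below\ldots for small enough $\delta$''). It does hold on the only range that matters: if $\delta\geq\eps/|G|$ the claimed bound is nonpositive and trivially true, and if $\delta<\eps/|G|\leq 1/|G|$ then $\log_2 D\leq\frac{1}{2}\log_2|G|+2\log_2(1/\delta)<\frac{5}{2}\log_2(1/\delta)<\frac{1}{2\delta^2}$ for $|G|\geq 6$. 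You should spell out this case split (or simply adopt the paper's $\dimgeqi{\alpha}{2}$-based threshold), but it is a cosmetic fix, not a gap in the argument.
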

\begin{proof}
Using \Cref{prop:sumdimsquare}, the probability that the test passes can be expressed as follows:
	\begin{align*}
		\Pr[\mbox{Test passes}] &= \frac{1}{|G|} \sum_{\rho\in \irr(G)} \dim(\rho)\E_{\V a, \V b, \V c} [\chi_\rho(f(\V a) \gp f(\V b) \gp f(\V c))]\\
		& = \frac{1}{|G|}\sum_{\substack{\rho\in \irr(G),\\ \dim(\rho) = 1}} \E_{\V a, \V b, \V c} [\chi_\rho(f(\V a) \gp f(\V b)\gp  f(\V c))] \\
		&\quad\quad\quad +  \frac{1}{|G|}\sum_{\substack{\rho\in \irr(G),\\ \dim(\rho) \geq 2}} \dim(\rho) \E_{\V a, \V b, \V c} [\chi_\rho(f(\V a) \gp f(\V b)\gp f(\V c))].
	\end{align*}
	In the first summation, for any $\rho\in \irr(G)$ such that $\dim(\rho) = 1$, using the multiplicativity of the characters, we have
	\begin{align*}
	\chi_\rho(f(\V a) \gp f(\V b) \gp f(\V c)) & = \chi_\rho(f(\V a)) \chi_\rho(f(\V b)) \chi_\rho(f(\V c))\\
	& \leq \left| \chi_\rho(f(\V a)) \right|\cdot \left|\chi_\rho(f(\V b))\right|\cdot \left|\chi_\rho(f(\V c))\right|\\
	&= 1. \tag*{(unitary representations)}
	\end{align*}
	As the number of dimension $1$ representations of a group $G$ is equal to the size of the quotient  $G/\commutator{G}{G}$, we get
		\begin{equation}
		\label{eq:testpassing}
		\Pr[\mbox{Test passes}] = \frac{1}{\left| \commutator{G}{G}\right|}  +  \frac{1}{|G|}\sum_{\substack{\rho\in \irr(G),\\ \dim(\rho) \geq 2}} \dim(\rho) \E_{\V a, \V b, \V c} [\chi_\rho(f(\V a)\gp f(\V b)\gp f(\V c))].
	\end{equation}
	Now, fix any $\rho\in \irr(G)$ such that $\dim(\rho) \geq 2$. For $1\leq i,j\in \dim(\rho)$, let $g_{ij} : G^n \rightarrow \C$ be defined as $g_{ij}(\V x) := \rho(f(\V x))_{ij}$. Using the definition of characters, we have
	
	\begin{align}
	\E_{\V a, \V b, \V c} [\chi_\rho(f(\V a)\gp  f(\V b)\gp f(\V c))] & =  	\E_{\V a, \V b, \V c} [\trace(\rho (f(\V a) \cdot f(\V b) \cdot f(\V c)))]\nonumber\\
\mbox{($\rho$ is a homomorphism)\quad\quad}	&= 	\E_{\V a, \V b, \V c} [\trace(\rho(f(\V a)) \cdot \rho(f(\V b) \cdot \rho(f(\V c))]\nonumber\\
	& = \E_{\V a, \V b, \V c} \left[\sum_{1\leq i,j,k \leq \dim(\rho)} \rho(f(\V a))_{ij} \cdot \rho(f(\V b)_{jk} \cdot \rho(f(\V c))_{ki}\right]\nonumber\\
	& = \sum_{1\leq i,j,k \leq \dim(\rho)}  \E_{\V a, \V b, \V c} \left[g_{ij}(\V a)g_{jk}(\V b)g_{ki}(\V c )\right]\nonumber\\
	& = \sum_{1\leq i,j,k\leq\dim(\rho)} (g_{ij}\conv g_{jk}\conv g_{ki})(1_{G^n}). \label{eq:finalconv}
	\end{align}
	
Since we assume that the test passes with probability $\frac{1}{\left| \commutator{G}{G}\right|}  +\eps$, from \Cref{eq:testpassing} and \Cref{eq:finalconv} (and using $\dim(\rho) \leq \sqrt{|G|}$), we conclude that there exists $\rho$ and $1\leq i, j, k\leq \dim(\rho)$ such that 
$$ |(g_{ij}\conv g_{jk}\conv g_{ki})(1_{G^n})|\geq \frac{\eps}{|G|}.$$
We now analyze the term $(g_{ij}\conv g_{jk}\conv g_{ki})(1_{G^n})$ for a fixed $(i,j,k)$. For the ease of notation, we write $h_1 := g_{ij}, h_2 := g_{jk}$ and $h_3 := g_{ki}$.
\begin{align*}
 \frac{\eps}{|G|}\leq |(g_{ij}\conv g_{jk}\conv g_{ki})(1_{G^n})| & = |(h_1\conv h_2\conv h_3)(1_{G^n})| \\
 &=\left| \sum_{\alpha\in \irr(G^n)} \dim(\alpha)\cdot \trace(\hat{h_1\conv h_2\conv h_3}(\alpha))\right|\\
 &\leq  \sum_{\alpha\in \irr(G^n)} \dim(\alpha)\cdot |\trace(\hat{h}_1(\alpha) \hat{h}_2(\alpha)  \hat{h}_3(\alpha))|\\
    &= \sum_{\alpha\in \irr(G^n)} \dim(\alpha)\cdot \left|\langle \hat{h}_1(\alpha) \hat{h}_2(\alpha),  \ctranspose{\hat{h}_3(\alpha)}\rangle_{\End{\alpha_V}}\right|\\
    &\leq \sum_{\alpha\in \irr(G^n)} \dim(\alpha)\cdot \hsnorm{ \hat{h}_1(\alpha) \hat{h}_2(\alpha)}\hsnorm{  \ctranspose{\hat{h}_3(\alpha)}}.
\end{align*}

We now use \Cref{claim:folded_zero_coeff} to conclude that for all $1\leq i\leq 3$, $\hat{h}_i(\alpha)=0$ if $\dim(\alpha) = 1$.  Using this, we continue as follows:
\begin{align*}
|(h_1\conv h_2\conv h_3)(1_G)| &\leq \sum_{\substack{\alpha\in \irr(G^n),\\ \dim(\alpha)\geq 2}} \dim(\alpha)\cdot \hsnorm{ \hat{h}_1(\alpha) \hat{h}_2(\alpha)}\hsnorm{  \ctranspose{\hat{h}_3(\alpha)}}\\
& = \sum_{\substack{{\alpha},\\ \dim(\alpha)\geq 2}} \dim(\alpha)\cdot \hsnorm{ \hat{h}_1(\alpha) \hat{h}_2(\alpha)}\hsnorm{\hat{h}_3(\alpha)}.\\
\end{align*}
 Let $D:= \frac{1}{2\delta^2}$. Now, we split the sum into two parts $|(h_1\conv h_2\conv h_3)(1_{G^n})|\leq \Theta_\low + \Theta_\high$ where 
$$\Theta_{\low} = \sum_{\substack{{\alpha},\\ \dim(\alpha)\geq 2, \\ \dimgeqi{\alpha}{2}< D}} \dim(\alpha)\cdot \hsnorm{ \hat{h}_1(\alpha) \hat{h}_2(\alpha)}\hsnorm{  {\hat{h}_3(\alpha)}},$$
and 
$$\Theta_{\high} = \sum_{\substack{{\alpha},\\ \dimgeqi{\alpha}{2}\geq D}} \dim(\alpha)\cdot \hsnorm{ \hat{h}_1(\alpha) \hat{h}_2(\alpha)}\hsnorm{  {\hat{h}_3(\alpha)}}.$$
\subsection{Bounding higher order terms}
In this section, we show that the high degree terms can be upper bounded by a small constant, even though the $three$ queries are perfectly correlated. 

We  bound $\Theta_{\high}$ as follows:
\begin{align*}
\Theta_{\high} &=  \sum_{\substack{{\alpha},\\ \dimgeqi{\alpha}{2}\geq D}} \dim(\alpha)\cdot \hsnorm{ \hat{h}_1(\alpha) \hat{h}_2(\alpha)}\hsnorm{  {\hat{h}_3(\alpha)}}\\ 
 &\leq  \sum_{\substack{{\alpha},\\ \dimgeqi{\alpha}{2}\geq D}} \dim(\alpha)\cdot \hsnorm{ \hat{h}_1(\alpha)}\hsnorm{ \hat{h}_2(\alpha)}\hsnorm{\hat{h}_3(\alpha)}\tag*{(\Cref{claim:normteq})}\\
&\leq \frac{1}{\sqrt{2D}}\sum_{\substack{{\alpha},\\ \dimgeqi{\alpha}{2}\geq D}} \dim(\alpha)^{3/2}\cdot \hsnorm{ \hat{h}_1(\alpha)}\hsnorm{ \hat{h}_2(\alpha)}\hsnorm{\hat{h}_3(\alpha)}.
\end{align*}
Here, we used that fact that all the representations ${\alpha}$ of $G$ with $\dimgeqi{\alpha}{2}\geq D$ have dimensions at least $2D$.  At this point, we would like to point out the main source of effectively bounding the higher order terms. It is the size of $\dim(\alpha)$ in the summation. In Gowers'~\cite{Gowers08} proof, a similar expression appears in the analysis, with the same condition that all the representations in the summation have large dimension. It is in some sense the main difference between the abelian and the non-abelian setting (both in this work and Gowers'), similar to the ~\Cref{eq:convimprovement} mentioned in the introduction.

Now, using Cauchy-Schwartz inequality,
\begin{align*}
\Theta_{\high} &\leq \frac{1}{\sqrt{2D}}\sum_{\substack{{\alpha},\\ \dimgeqi{\alpha}{2}\geq D}} \dim(\alpha)^{3/2}\cdot \hsnorm{ \hat{h}_1(\alpha)}\hsnorm{ \hat{h}_2(\alpha)}\hsnorm{\hat{h}_3(\alpha)}\\
&\leq \frac{1}{\sqrt{2D}} \Bigg( \sum_{\substack{{\alpha},\\ \dimgeqi{\alpha}{2}\geq D}} \hspace{-10pt} \dim(\alpha) \cdot \hsnorm{ \hat{h}_1(\alpha)}^2\Bigg)^{1/2} \cdot  \Bigg( \sum_{\substack{{\alpha},\\ \dimgeqi{\alpha}{2}\geq D}} \hspace{-10pt} \dim(\alpha)^2 \cdot \hsnorm{ \hat{h}_2(\alpha)}^2 \hsnorm{ \hat{h}_3(\alpha)}^2\Bigg)^{1/2} \\
&\leq \frac{1}{\sqrt{2D}} \Bigg( \sum_{\substack{{\alpha},\\ \dimgeqi{\alpha}{2}\geq D}} \hspace{-10pt} \dim(\alpha) \cdot \hsnorm{ \hat{h}_1(\alpha)}^2\Bigg)^{1/2} \cdot\\
& \quad \quad \quad  \left(\Bigg( \sum_{\substack{{\alpha},\\ \dimgeqi{\alpha}{2}\geq D}} \hspace{-10pt} \dim(\alpha) \cdot \hsnorm{ \hat{h}_2(\alpha)}^2 \Bigg) \cdot  \Bigg( \sum_{\substack{{\alpha},\\ \dimgeqi{\alpha}{2}\geq D}} \hspace{-10pt} \dim(\alpha) \cdot \hsnorm{ \hat{h}_3(\alpha)}^2\Bigg) \right)^{1/2}\\
& \leq \frac{1}{\sqrt{2D}} \cdot\|h_1\|_{L^2(G^n)} \cdot \|h_2\|_{L^2(G^n)} \cdot \|h_3\|_{L^2(G^n)}.\tag*{(Using \Cref{prop:parsevals})}
\end{align*}
Si	nce $h_1$ was defined as $h_1(\V x) = \rho(f(\V x))_{ij}$ where $\rho \in \irr(G)$, $|h_1(\V x)|\leq 1$. Same is true for $h_2$ and $h_3$, and hence all the norms are bounded by $1$. Therefore,
$$\Theta_{\high} \leq \frac{1}{\sqrt{2D}}= \delta.$$

\subsection{Bounding lower order terms}
It remain to show that $\Theta_\low$ is related to the Fourier mass of $h_3$ on the low dimension representations. 

\begin{align*}
\Theta_{\low} &= \sum_{\substack{{\alpha},\\ \dim(\alpha)\geq 2, \\ \dimgeqi{\alpha}{2}< D}} \dim(\alpha)\cdot \hsnorm{ \hat{h}_1(\alpha) \hat{h}_2(\alpha)}\hsnorm{  {\hat{h}_3(\alpha)}}\\
&\leq  \max_{\substack{{\alpha},\\ \dim(\alpha)\geq 2, \\ \dimgeqi{\alpha}{2}< D}} \hsnorm{  {\hat{h}_3(\alpha)}} \cdot \Bigg(\sum_{\alpha} \dim(\alpha)\cdot \hsnorm{ \hat{h}_1(\alpha) \hat{h}_2(\alpha)}\Bigg).
\end{align*}
We can upper bound the summation by $1$ using the Cauchy-Schwartz inequality as follows:

\begin{align*}
\sum_{\alpha} \dim(\alpha)\cdot \hsnorm{ \hat{h}_1(\alpha) \hat{h}_2(\alpha)} &\leq \sum_{\alpha} \dim(\alpha)\cdot \hsnorm{ \hat{h}_1(\alpha)}\hsnorm{\hat{h}_2(\alpha)}\\
&\leq  \Bigg( \sum_{\alpha}  \dim(\alpha) \cdot \hsnorm{ \hat{h}_1(\alpha)}^2\Bigg)^{1/2}  \cdot  \Bigg( \sum_{\alpha} \dim(\alpha) \cdot \hsnorm{ \hat{h}_2(\alpha)}^2\Bigg)^{1/2}.\\
&= \|h_1\|_2 \cdot \|h_2\|_2 \tag*{(\Cref{prop:parsevals})}\\
&\leq 1.
\end{align*}
where the last inequality uses the fact that $|h_1(\V x)|, |h_2(\V x)| \leq 1$ for all $\V x\in G^n$. Using the upper bound on $\Theta_\high$, we have $\Theta_\low \geq \frac{\eps}{|G|}-\delta$. Therefore, we get
$$   \max_{\substack{{\alpha},\\ \dim(\alpha)\geq 2, \\ \dimgeqi{\alpha}{2}< D}} \hsnorm{  {\hat{h}_3(\alpha)}}  \geq \left(\frac{\eps}{|G|}-\delta\right).$$


\end{proof}

\section{Main Reduction}

\newcommand{\termm}[3]{\mathbf{Term}^{#1}(#2, #3)}

\newcommand{\ff}{{F^{ik}_{\alpha}}}

In this section, we prove \Cref{thm:mainnphardness}. We give a
reduction from an instance of a {\LC}, $\calH=(\calU,\calV,E,[L],
[R],\{\pi_e\}_{e\in E})$ as in \Cref{def:label-cover}, to
a $3$-\LIN\ instance $\calI$ over a non-abelian group $G$. 

The set of variables of  $\calI$ 
is $(\calU\times G^L) \cup (\calV\times G^R)$. Any assignment to the instance $\calI$ is given by
a set of functions $f_u : G^L \rightarrow G$ and $f_v : G^R \rightarrow G$  for each
$u\in \calU$ and $v\in \calV$. We further assume that these functions are {\em folded}.

The distribution of the $3$-\LIN\ constraints in $\calI$ is given by the following test:
\begin{enumerate}
	
	\item Choose an edge $e(u,v)\in E$ of $\calH$ uniformly at random.
	\item Sample $\V a = (a_1, a_2, \ldots, a_R)$ from $G^R$ uniformly at random.
	\item Sample $\V b = (b_1, b_2, \ldots, b_L)$ from $G^L$ uniformly at random.
	\item  Let $\V c= (c_1, c_2, \ldots, c_R)$ be such that $c_i = (b\circ \pi_{uv})_i^{-1}\gp  a_i^{-1}$, here ${\V x }\circ \pi\in G^R$ is the string defined as $(x \circ \pi)_i := x_{\pi(i)}$
	for $i \in [R]$.
	\item Test if $f_v(\V a)\gp f_u(\V b)\gp f_v(\V c) = 1_G$.
	
\end{enumerate}

The value of the instance $\val(\calI)$ is the maximum probability that the above test is satisfied, where the maximum is over all folded functions $\{f_v\}_{v\in \calV}, \{f_u\}_{u\in \calU}$.
\subsection{Analysis}

\begin{lemma}[Completeness]
	\label{lemma:completeness_nphard}
	If $\calH$ is a satisfiable instance   of {\LC}, then  $\val(\calI) = 1$.
\end{lemma}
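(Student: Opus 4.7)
The plan is to exhibit explicit folded dictator functions as assignments, with the dictator coordinates chosen according to a satisfying labeling of $\calH$, and then verify that for every edge of $\calH$ the test accepts with probability $1$.

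More concretely, let $\ell: \calU \to [L]$, $\ell: \calV \to [R]$ be a labeling of $\calH$ that satisfies every edge, i.e.\ $\pi_{uv}(\ell(v)) = \ell(u)$ for all $(u,v) \in E$. I would define
\[
f_v(\V x) := x_{\ell(v)} \quad \text{for } \V x \in G^R, \qquad f_u(\V y) := y_{\ell(u)} \quad \text{for } \V y \in G^L.
\]
These are coordinate-projection (``dictator'') functions, and a quick check shows they are folded: for any $c \in G$, $f_v(c \V x) = (c\V x)_{\ell(v)} = c \gp x_{\ell(v)} = c \gp f_v(\V x)$, and similarly for $f_u$.

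Next I would verify that for every edge $(u,v) \in E$ chosen by the test, and for every sample $(\V a, \V b)$ drawn in the test, the product $f_v(\V a) \gp f_u(\V b) \gp f_v(\V c)$ equals $1_G$. Setting $j := \ell(v)$ and $i := \ell(u)$, the satisfying labeling gives $\pi_{uv}(j) = i$, so $(\V b \circ \pi_{uv})_j = b_{\pi_{uv}(j)} = b_i$. Then by the definition of $\V c$ in the test, $c_j = (\V b \circ \pi_{uv})_j^{-1} \gp a_j^{-1} = b_i^{-1} \gp a_j^{-1}$, and therefore
\[
f_v(\V a) \gp f_u(\V b) \gp f_v(\V c) = a_j \gp b_i \gp (b_i^{-1} \gp a_j^{-1}) = 1_G.
\]
This telescoping is the whole content of the lemma. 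Since this holds for every edge and every sample in the support of the test, the chosen folded assignments satisfy every constraint in $\calI$, so $\val(\calI) = 1$. There is no real obstacle here; the only mildly subtle point is confirming that the explicit dictator assignments respect the folding convention, which is immediate.
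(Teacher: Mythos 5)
Your proof is correct and takes exactly the same route as the paper: use the satisfying labeling to define dictator (long-code) assignments and verify the telescoping cancellation $a_{\ell(v)} \gp b_{\ell(u)} \gp b_{\ell(u)}^{-1} \gp a_{\ell(v)}^{-1} = 1_G$. You additionally spell out that dictators are folded, a point the paper leaves implicit; otherwise the arguments coincide.
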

\begin{proof}
	Fix a satisfying assignment $\ell : \calU \rightarrow [L], \ell:\calV \rightarrow
	[R]$ of $\calH$. Consider the long code encoding of the labeling $\ell$ : $f_v(\V x) = x_{\ell(v)}$ and $f_u(\V x) = x_{\ell(u)}$, for every $v\in \calV$ and $u\in \calU$. We show that this assignment to $\calI$ satisfies all the constraints. 
	\begin{align*}
	f_v(\V a)\gp f_u(\V b) \gp f_v(\V c) &= a_{\ell(v)}\gp b_{\ell(u)} \gp c_{\ell(v)}\\
	& = a_{\ell(v)}\gp b_{\ell(u)} \gp (b\circ \pi_{uv})_{\ell(v)}^{-1}\gp a_{\ell(v)}^{-1}\\
	& = a_{\ell(v)}\gp b_{\ell(u)} \gp b_{\pi_{uv}(\ell(v))}^{-1}\gp a_{\ell(v)}^{-1}\\
	& = a_{\ell(v)}\gp b_{\ell(u)} \gp b_{\ell(u)}^{-1}\gp a_{\ell(v)}^{-1} \tag*{ ($\pi_{uv}(\ell(v)) = \ell(u)$)}\\
	& = 1_G.
	\end{align*}
\end{proof}

We now prove the main soundness lemma. Note that \Cref{lemma:completeness_nphard} and \Cref{lemma:soundness_nphard} along with the $\np$-hardness of \LC\ from \Cref{thm:lc-hard} for large enough $r$ imply our main theorem \Cref{thm:mainnphardness} for any constant $\eps>0$.
\begin{lemma}[Soundness]
	\label{lemma:soundness_nphard} 
	Let $\delta\in (0,1)$. Let $C$ be a constant such that  $C^{- d_0/2}\leq \frac{\delta^2}{12|G|^6}$, where $d_0$ is the constant from \Cref{thm:lc-hard}.	If $\calH$ is at most $ \frac{\delta^2}{10|G|^{10C}}$-satisfiable, then $\val(\calI) \leq \frac{1}{\left| \commutator{G}{G}\right|} + \delta$.
\end{lemma}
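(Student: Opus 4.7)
The plan is a contrapositive reduction: assuming $\val(\calI)\geq \frac{1}{|\commutator{G}{G}|}+\delta$, I construct a randomized labeling of $\calH$ of expected value at least $\frac{\delta^2}{10|G|^{10C}}$. The warm-up dictatorship analysis is the template; the new ingredients are the projection $\pi_{uv}\colon [R]\to[L]$ coupling the two domains $G^R$ and $G^L$, the folding hypothesis, and the smoothness guarantee of \Cref{thm:lc-hard}.

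First, by an averaging argument a $\delta/2$ fraction of edges $(u,v)$ pass the test with conditional probability at least $\frac{1}{|\commutator{G}{G}|}+\delta/2$; call these \emph{good} edges. On a good edge, expand the indicator $\mathbf{1}[f_v(\V a)\gp f_u(\V b)\gp f_v(\V c)=1_G]$ via \Cref{prop:sumdimsquare} exactly as in the warm-up; the dimension-$1$ contribution is at most $\frac{1}{|\commutator{G}{G}|}$, so there must exist $\rho\in\irr(G)$ with $\dim(\rho)\geq 2$ and indices $i,j,k$ for which
$$\Bigl|\E_{\V a,\V b,\V c}\bigl[\rho(f_v(\V a))_{ij}\,\rho(f_u(\V b))_{jk}\,\rho(f_v(\V c))_{ki}\bigr]\Bigr|\geq \tfrac{\delta}{4|G|^2}.$$
Set $h_1(\V x)=\rho(f_v(\V x))_{ij}$, $h_3(\V x)=\rho(f_v(\V x))_{ki}$ on $G^R$ and $h_2(\V y)=\rho(f_u(\V y))_{jk}$ on $G^L$, and Fourier-expand each factor. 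Because $\V a\gp\V c=(\V b\circ\pi_{uv})^{-1}$ coordinatewise, the expectation collapses to a sum over $\alpha\in\irr(G^R)$; writing $\alpha|_{\pi_{uv}(G^R)}\cong\oplus_m n_m\beta_m$ as a sum of irreducibles of $G^L$ (via \Cref{fact:subgroup_rep}), the Fourier coefficients $\widehat{h_2}(\beta_m)$ assemble, after a unitary change of basis (\Cref{claim:Uni_norm}), into a block-diagonal operator $\widehat{H_2}(\alpha)$, and the target takes the form
$$\sum_{\alpha\in\irr(G^R)}\dim(\alpha)\,\trace\!\bigl(\widehat{h_1}(\alpha)\,\widehat{H_2}(\alpha)\,\widehat{h_3}(\alpha)\bigr).$$

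Next, mirroring the warm-up, I split this sum at $\dimgeqi{\alpha}{2}\geq C$. The high tail is bounded by the same three-way Cauchy–Schwartz/Parseval manipulation (noting $\sum_m n_m\hsnorm{\widehat{h_2}(\beta_m)}^2\leq\|h_2\|_2^2\leq 1$, so $\widehat{H_2}$ behaves like an ordinary Fourier coefficient in the Parseval bookkeeping), and for the chosen $C$ this tail is below $\delta/(8|G|^2)$; folding plus \Cref{claim:folded_zero_coeff} kills all dimension-$1$ $\alpha$. Hence the low part has magnitude at least $\delta/(8|G|^2)$, and pulling out $\max_\alpha\hsnorm{\widehat{h_1}(\alpha)}$ yields some $\alpha$ with $\dim(\alpha)\geq 2$, $\dimgeqi{\alpha}{2}<C$, and $\hsnorm{\widehat{h_1}(\alpha)}\geq\operatorname{poly}(\delta,1/|G|)$. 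This motivates the decoding: for each $v\in\calV$ sample a random $(\rho,i,j)$ with $\dim(\rho)\geq 2$, then sample $\alpha$ proportional to $\dim(\alpha)\hsnorm{\widehat{\rho(f_v)_{ij}}(\alpha)}^2$ restricted to $\dimgeqi{\alpha}{2}<C$, and output a uniformly random \emph{heavy} coordinate of $\alpha$ (one with $\dim(\rho_i)\geq 2$); define the analogous decoding on the $[L]$ side for $u$.

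The main obstacle is matching labels across $\pi_{uv}$. The at most $C$ heavy coordinates of $\alpha$ on the $[R]$ side may collide $d$-to-$1$ under $\pi_{uv}$, so the block $B_\ell=\otimes_{j\in\pi_{uv}^{-1}(\ell)}\rho_j$ can be reducible with dimension-$1$ constituents of large multiplicity, potentially concentrating $\widehat{H_2}(\alpha)$ on $\beta_m$'s that do not identify useful labels. The remedy is \Cref{lemma:magic_complicated}: once the number of such collisions is at least $c=\Theta(|G|\log(1/\eps_0))$, every irreducible component of $\alpha|_{\pi_{uv}(G^R)}$ either has dimension $\geq c$ (in which case it is itself a ``low'' representation of $G^L$ from which we can still decode) or has multiplicity $\leq\eps_0^2\dim(\alpha)$ (contributing negligibly to $\widehat{H_2}(\alpha)$ after the Parseval count). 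Combined with the smoothness bound $\Pr_u[|\pi_{uv}(\alpha)|<|\alpha|^{d_0}]\leq|\alpha|^{-d_0}$ applied to the $\leq C$ heavy coordinates (using the hypothesis $C^{-d_0/2}\leq\delta^2/(12|G|^6)$), this forces $\pi_{uv}$ to be essentially injective on the heavy set of $\alpha$ on the typical good edge, so the heavy coordinates on both sides correspond under $\pi_{uv}$ and the decoded labels match. Summing the success probability over the $\delta/2$ fraction of good edges gives a labeling of value at least $\frac{\delta^2}{10|G|^{10C}}$, the desired contradiction.
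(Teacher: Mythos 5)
The skeleton you lay out---character expansion, separating the dimension-one contribution, Fourier expansion into a sum over $\alpha\in\irr(G^R)$, a low/high split at $\dimgeqi{\alpha}{2}=C$, decoding from the low part, and invoking \Cref{lemma:magic_complicated} and smoothness for the remainder---matches the paper's plan. But two of the steps as stated do not go through, and one of them is the crux of the whole argument.

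You claim the high tail is bounded ``by the same three-way Cauchy--Schwartz/Parseval manipulation'' as the warm-up, arguing that $\sum_m n_m\hsnorm{\widehat{h_2}(\beta_m)}^2\leq\|h_2\|_2^2\leq 1$ so that $\widehat{H_2}$ behaves like an ordinary Fourier coefficient. This is not correct. Parseval on $G^L$ gives $\sum_\beta\dim(\beta)\hsnorm{\widehat{h_2}(\beta)}^2\leq 1$, where the weight is $\dim(\beta)$, not the multiplicity $n_m$ with which $\beta_m$ occurs inside $\alpha|_{\pi(G^R)}$. When heavy coordinates of $\alpha$ collide under $\pi_{uv}$, a block $B_\ell$ can decompose into one-dimensional constituents with multiplicity comparable to $\dim(B_\ell)$; then $\hsnorm{\widehat{H_2}(\alpha)}^2=\sum_m n_m\hsnorm{\widehat{h_2}(\beta_m)}^2$ can be as large as $\Omega(\dim(\alpha))$ for a single $\alpha$, and the warm-up bound collapses. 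The warm-up's gain came precisely from $\dim(\alpha)^{-1/2}$ sitting in the denominator; here the multiplicities eat it. That is exactly the obstacle the paper's Claim~\ref{claim:highterms} is built to overcome: it introduces a carefully chosen column-permutation $P(e,\alpha)$ of the block-diagonalized $U\alpha U^\star$, and Claim~\ref{claim:random_unitary_works0}, proved by averaging over permutations and applying \Cref{lemma:magic_complicated}, is what recovers a usable estimate. Without that argument (or a genuine substitute), your high-term bound is simply asserted, not proved.

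You also route \Cref{lemma:magic_complicated} and the smoothness property into the decoding step, concluding that $\pi_{uv}$ is ``essentially injective on the heavy set'' so that ``the decoded labels match.'' Neither \Cref{lemma:magic_complicated} nor smoothness says anything resembling injectivity of $\pi_{uv}$ on a constant-size heavy set---and, more to the point, the decoding doesn't need it. For the low terms ($\dimgeqi{\alpha}{2}\leq C$), the relevant structural fact is that the nonzero $(\alpha,\beta)$ pairs satisfy $\beta_{\geq 2}\subseteq\pi_{\geq 2}(\alpha)$: whenever $\dim(\tau_{\ell'})\geq 2$ there must be some $\ell\in\pi^{-1}(\ell')$ with $\dim(\rho_\ell)\geq 2$, by \Cref{prop:orthmatrixentries}, since a product of one-dimensional representations is one-dimensional. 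With that inclusion, decoding a uniformly random heavy coordinate on each side matches with probability at least $1/\dimgeqi{\alpha}{2}\geq 1/C$ per nonzero term; no injectivity is used, and \Cref{lemma:magic_complicated} plays no role there. In short, the tools you invoke belong to the high-term bound (where they are currently missing), while the decoding is simpler than you make it.
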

\begin{proof}
Fix any assignment to the instance $\calI$ given by
a set of functions $f_u : G^L \rightarrow G$ and $f_v : G^R \rightarrow G$  for each
$u\in \calU$ and $v\in \calV$.	The value of the instance for this assignment is given by:
\begin{align*}
\val(\calI) &= \E_{e(u,v)\in E} \E_{\V a, \V b, \V c} \left[ \frac{1}{|G|} \sum_{\rho\in \irr(G)} \dim(\rho)\chi_\rho(f_v(\V a) \gp f_u(\V b)\gp f_v(\V c))\right]\\
\end{align*}
For any $\rho\in \irr(G)$ such that $\dim(\rho)= 1$, we have
\begin{align*}
\chi_\rho(f_v(\V a)\gp f_u(\V b)\gp f_v(\V c)) & = \chi_\rho(f_v(\V a)) \cp \chi_\rho(f_u(\V b))\cp \chi_\rho(f_v(\V c))\\
& \leq \left| \chi_\rho(f_v(\V a)) \right|\cp \left|\chi_\rho(f_u(\V b))\right|\cp \left|\chi_\rho(f_v(\V c))\right|\\
&= 1. \tag*{(unitary representations)}
\end{align*}
As the number of dimension $1$ representations of a group $G$ is equal to the size of the quotient  $G/\commutator{G}{G}$, we get
\begin{align*}
\val(\calI) &\leq   \frac{1}{\left| \commutator{G}{G}\right|}  + \frac{1}{|G|} \E_{e(u,v)\in E} \E_{\V a, \V b, \V c} \left[ \sum_{\substack{\rho\in \irr(G), \\ \dim(\rho)\geq 2}} \dim(\rho)\E_{\V a, \V b, \V c} [\chi_\rho(f_v(\V a) \gp f_u(\V b)\gp f_v(\V c))]\right]\\
 &\leq   \frac{1}{\left| \commutator{G}{G}\right|}  +   \sum_{\substack{\rho\in \irr(G), \\ \dim(\rho)\geq 2}} \left|\E_{e(u,v)\in E}  \E_{\V a, \V b, \V c} [\chi_\rho(f_v(\V a) \gp f_u(\V b)\gp f_v(\V c))]\right|.
\end{align*}

The lemma follows from the following \Cref{claim:onetermbound}.
\end{proof}

\begin{claim}
	\label{claim:onetermbound}
		If $\calH$ is at most $\frac{\delta^2}{10|G|^{10C}}$-satisfiable, then for every $\rho\in \irr(G)$ such that $\dim(\rho)\geq 2$,
	$$\left|\E_{e(u,v)\in E}  \E_{\V a, \V b, \V c} [\chi_\rho(f_v(\V a) \gp f_u(\V b)\gp f_v(\V c))]\right|\leq \frac{\delta}{|G|}.$$
\end{claim}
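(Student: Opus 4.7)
The plan is to adapt the soundness analysis of the dictatorship test from Section 3 to the label-cover composition, extracting a labeling of $\calH$ from any large low-dimensional Fourier mass of the folded functions. Assume for contradiction the inequality fails for some $\rho \in \irr(G)$ with $\dim(\rho) \geq 2$. Expanding the character trace gives
\[
\chi_\rho\bigl(f_v(\V a)\gp f_u(\V b) \gp f_v(\V c)\bigr) = \sum_{i,j,k} g^v_{ij}(\V a)\, g^u_{jk}(\V b)\, g^v_{ki}(\V c),
\]
where $g^v_{ij}(\V x) := \rho(f_v(\V x))_{ij}$ and $g^u_{jk}(\V x) := \rho(f_u(\V x))_{jk}$. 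Since $\dim(\rho) \leq \sqrt{|G|}$, some triple $(i,j,k)$ contributes at least $\delta/|G|^{5/2}$ in absolute value after averaging over $(u,v)$; fix such a triple and call the resulting expression $T_{ijk}$.

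For each fixed edge $(u,v)$, I Fourier-expand $g^v_{ij}$ and $g^v_{ki}$ over $G^R$. Using $\V c^{-1} = \V a \cdot (\V b \circ \pi_{uv})$ and Schur orthogonality (\Cref{prop:orthmatrixentries}) applied to the $\V a$-average, the double sum over $(\alpha,\gamma)\in\irr(G^R)^2$ collapses to the diagonal $\alpha = \gamma$, yielding
\[
\E_{\V a,\V b}[g^v_{ij}(\V a)\, g^u_{jk}(\V b)\, g^v_{ki}(\V c)] = \sum_{\alpha\in\irr(G^R)} \dim(\alpha) \, \trace\bigl(\hat g^v_{ij}(\alpha)\cdot \hat g^u_{jk}(\alpha\circ\pi_{uv}) \cdot \hat g^v_{ki}(\alpha)\bigr),
\]
where $\hat g^u_{jk}(\alpha\circ\pi_{uv}) := \E_{\V b}[g^u_{jk}(\V b)(\alpha\circ\pi_{uv})(\V b)]$ is the pullback Fourier coefficient. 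Decomposing $\alpha\circ\pi_{uv} \cong U^\star(\oplus_m n_m \beta_m)U$ into $G^L$-irreducibles and invoking unitary invariance (\Cref{claim:Uni_norm}) gives $\hsnorm{\hat g^u_{jk}(\alpha\circ\pi_{uv})}^2 = \sum_m n_m \hsnorm{\hat g^u_{jk}(\beta_m)}^2$, and by \Cref{claim:folded_zero_coeff} applied to $f_v,f_u$, only $\alpha$ and $\beta_m$ of dimension at least two survive in the sum.

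Applying $|\trace(ABC)| \leq \hsnorm{A}\hsnorm{B}\hsnorm{C}$ and partitioning the $\alpha$-sum by $\dimgeqi{\alpha}{2}$: the high-dimension contribution ($\dimgeqi{\alpha}{2} \geq C$) is bounded by the same Cauchy--Schwarz plus Parseval chain from Section~3, and comes out to at most $|G|^{O(1)}\cdot 2^{-C/2}$, which is negligible compared to $\delta/|G|^{5/2}$ by the choice of $C$. Hence the low-dimension part must contribute $\Omega(\delta/|G|^{O(1)})$. For the low-dimension $\alpha$, further partition edges according to the number of grouped blocks $B_\ell = \otimes_{i\in \pi_{uv}^{-1}(\ell)}\rho_i$ of dimension $\geq 2$: when this number exceeds $10|G|\log(1/\eps_0)$, \Cref{lemma:magic_complicated} forces every $\beta_m$ with small $\dim(\beta_m)$ to have multiplicity $n_m \leq \eps_0^2 \dim(\alpha)$, which together with Parseval on $g^u_{jk}$ absorbs these edges into the slack; otherwise, the coordinates $\ell \in [L]$ where $B_\ell$ is non-trivial number fewer than $10|G|\log(1/\eps_0)$, and the smoothness property in \Cref{thm:lc-hard} spreads the $\leq C$ non-trivial coordinates of $\alpha$ across sufficiently many labels in $[L]$.

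The decoding is then standard: for each $v \in \calV$, sample an $\alpha \in \irr(G^R)$ with $\dim(\alpha) \geq 2$ and $\dimgeqi{\alpha}{2} < C$ weighted by $\dim(\alpha)\hsnorm{\hat g^v_{ij}(\alpha)}\hsnorm{\hat g^v_{ki}(\alpha)}$ (averaged over $i,j,k$) and return a uniformly random $i$ with $\dim(\rho_i)\geq 2$; for each $u \in \calU$, sample a dim-$\geq 2$ irreducible component $\beta_m$ of $\alpha\circ\pi_{uv}$ and return a random non-trivial coordinate of $\beta_m$. The two low-dimension sub-cases guarantee that more than a $\delta^2/(10|G|^{10C})$ fraction of edges are projection-consistent under this labeling, contradicting the soundness assumption on $\calH$. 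The main obstacle will be the third technical issue flagged in the techniques overview: reconciling the block-diagonal structure of $\alpha\circ\pi_{uv}$ with the Fourier matrices $\hat g^u_{jk}(\beta_m)$, and in particular preventing artifactual contributions from dim-$1$ blocks of $\beta_m$ with large multiplicity. This requires carefully permuting the columns of $\hat g^u_{jk}(\beta_m)$ and the change-of-basis $U$ so that the Cauchy--Schwarz estimate isolates the dim-$\geq 2$ decodable portions, rather than being spoiled by the folding-annihilated dim-$1$ terms.
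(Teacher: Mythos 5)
Your Fourier expansion of $\Theta^e_{i,j,k}$ in terms of a single sum over $\alpha\in\irr(G^R)$ with the pullback coefficient $\hat g^u_{jk}(\alpha\circ\pi_{uv})$ is a legitimate reformulation of the paper's double sum over $(\alpha,\beta)$, and the overall split into $\dimgeqi{\alpha}{2} < C$ versus $\dimgeqi{\alpha}{2} \geq C$ with a randomized decoding for the low part matches the paper's architecture. However, your treatment of the high-dimension terms has a real gap that the rest of the argument cannot paper over.

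You claim that the contribution from $\dimgeqi{\alpha}{2}\geq C$ ``is bounded by the same Cauchy--Schwarz plus Parseval chain from Section~3, and comes out to at most $|G|^{O(1)}\cdot 2^{-C/2}$.'' This is false, and it is exactly the point where the label-cover composition is genuinely harder than the dictatorship test. In Section~3 all three functions live on the same $G^n$, so after extracting $\dim(\alpha)^{-1/2}\le 2^{-C/2}$ you can pay one factor of $\dim(\alpha)$ to each of three Parseval sums. Here $g^u_{jk}$ lives on $G^L$, and what appears is $\hsnorm{\hat g^u_{jk}(\alpha\circ\pi_{uv})}^2 = \sum_m n_m\,\hsnorm{\hat g^u_{jk}(\beta_m)}^2$, where $\alpha\circ\pi_{uv}\cong\oplus_m n_m\beta_m$. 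When $\alpha$ has many non-trivial coordinates but $\pi_{uv}$ collapses them, a single low-dimensional $\beta_m$ (of dimension $2$, say---and \emph{not} annihilated by folding, which only kills $\dim=1$) can appear with multiplicity $n_m=\Theta(\dim(\alpha))$, so that $\dim(\alpha)\,\hsnorm{\hat g^u_{jk}(\alpha\circ\pi_{uv})}^2$ is of order $\dim(\alpha)^2$ rather than $O(1)$. The Parseval chain then gives nothing. (A Frobenius-reciprocity count confirms that $\sum_\alpha\dim(\alpha)\,\hsnorm{\hat h(\alpha\circ\pi)}^2 = |G|^{R-L}\|h\|_2^2$, which is exponentially large.) This is precisely what forces the paper to prove Lemma~\ref{lemma:magic_complicated} (controlling multiplicities of small-dimension constituents), invoke the smoothness property of Label Cover to ensure $|(\pi_{uv})_{\geq 2}(\alpha)|$ is large for most $u$, and introduce the column-permutation trick of Claim~\ref{claim:random_unitary_works0}---all \emph{for the high-dimension case}, not the low one.

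You do flag ``reconciling the block-diagonal structure'' and the need for a column permutation as the main obstacle, which is the right machinery; but you attach it to the low-dimension $\alpha$, where the paper needs no such thing (the low-dim case goes directly to the randomized decoding after a single Cauchy--Schwarz paying a $|G|^C$ loss), and you mischaracterize the danger as coming from dim-$1$ blocks (already killed by Lemma~\ref{claim:folded_zero_coeff}) rather than from small-but-$\geq 2$-dimensional $\beta_m$ of high multiplicity. Your smoothness-based sub-case for low-dim $\alpha$ is also not sound as written: if $\dimgeqi{\alpha}{2}<C$ then $|(\pi_{uv})_{\geq 2}(\alpha)|\le\dimgeqi{\alpha}{2}<C$ unconditionally, and smoothness only prevents further collapse---it cannot ``spread'' the support to more labels than $\alpha$ has non-trivial coordinates. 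The net effect is that your argument handles the easy case with the hard machinery, claims the hard case is easy, and so leaves the claim unproved.
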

\begin{proof} 
Fix any $\rho\in \irr(G)$ such that $\dim(\rho)\geq 2$. Let 
$$\Theta := \E_{e(u,v)\in E}  \E_{\V a, \V b, \V c} [\chi_\rho(f_v(\V a) \gp f_u(\V b)\gp f_v(\V c))].$$
We first look at the inner expectation. For $1\leq \ii,\jj\in \dim(\rho)$, let $g_{\ii\jj} : G^R \rightarrow \C$ be defined as $g_{\ii\jj}(\V x) := \rho(f_v(\V x))_{\ii\jj}$. Also, let $h_{\ii\jj} : G^L \rightarrow \C$ be defined as $h_{\ii\jj}(\V y) := \rho(f_u(\V y))_{\ii\jj}$. We have

\begin{align*}
\E_{\V a, \V b, \V c} [\chi_\rho(f_v(\V a) f_u(\V b) f_v(\V c))] & =  	\E_{\V a, \V b, \V c} [\trace(\rho (f_v(\V a) \gp f_u(\V b) \gp f_v(\V c)))]\\
\mbox{($\rho$ is a homomorphism)\quad\quad}	&= 	\E_{\V a, \V b, \V c} [\trace(\rho(f_v(\V a)) \cp \rho(f_u(\V b) \cp \rho(f_v(\V c))]\\
& = \E_{\V a, \V b, \V c} \left[\sum_{1\leq \ii,\jj,\kk \leq \dim(\rho)} \rho(f_v(\V a))_{\ii\jj} \cp \rho(f_u(\V b)_{\jj\kk} \cp \rho(f_v(\V c))_{\kk\ii}\right]\\
& = \sum_{1\leq \ii,\jj,\kk \leq \dim(\rho)}  \E_{\V a, \V b, \V c} \left[g_{\ii\jj}(\V a)\cp h_{\jj\kk}(\V b)\cp g_{\kk\ii}(\V c )\right].
\end{align*}
We now analyze the term $\Theta^{e}_{\ii,\jj,\kk} := \E_{\V a, \V b, \V c} \left[g_{\ii\jj}(\V a)h_{\jj\kk}(\V b)g_{\kk\ii}(\V c )\right]$ for a fixed $(\ii,\jj,\kk)$. For the ease of notations, we write $g := g_{\ii\jj}, h := h_{\jj\kk}$ and $g' := g_{\kk\ii}$. Also, we use $\pi$ for $\pi_{uv}$.

$$\E_{\V a, \V b}[g(\V a) \cp g'((\V b\circ \pi)^{-1}\gp \V a^{-1})] = \E_{\V b}[(g\conv g')((\V b\circ \pi)^{-1})].$$
We now bound the expectation as follows:
\begin{align*}
&\E_{\V a, \V b, \V c} \left[g_{\ii\jj}(\V a)\cp h_{\jj\kk}(\V b)\cp g_{\kk\ii}(\V c )\right] \\
& =  \E_{\V a, \V b, \V c} \left[g(\V a)\cp h(\V b)\cp g'((\V b\circ \pi)^{-1}\gp \V a^{-1} )\right]\\
& =  \E_{\V b} \left[(g\conv g')(\V b^{-1}\circ \pi)\cp h(\V b)\right]\\
& =  \E_{\V b} \left[(g\conv g')(\V b\circ \pi)\cp h(\V b^{-1})\right]\\
& =  \E_{\V b} \left[\left(\sum_{\alpha} \dim(\alpha) \trace(\hat{g}(\alpha)\hat{g'}(\alpha) \alpha(\V b\circ \pi))\right) \cp \left(\sum_{\beta} \dim(\beta) \trace(\hat{h}(\beta) \beta(\V b^{-1}))\right)\right]\\
& =  \sum_{\substack{\alpha, \beta , \\ \dim(\alpha), \dim(\beta)\geq 2}} \underbrace{\dim(\alpha)\dim(\beta) \E_{\V b} \left[\trace(\hat{g}(\alpha)\hat{g'}(\alpha)  {\alpha}(\V b\circ \pi)) \cp  \trace(\hat{h}(\beta)  {\beta}(\V b^{-1}))\right]}_{\termm{e}{\alpha}{\beta}},
\end{align*}
where the last step uses the fact that the functions $g, g'$ and $h$ satisfy the condition of \Cref{claim:folded_zero_coeff} and hence $\hat{g}(\alpha) = 0$ if $\dim(\alpha)  = 1$ (same for $\hat{g'}(\alpha)$ and $\hat{h}(\beta)$ ).

We now break the sum into two parts:

\begin{align*}
\Theta^{e}_{\ii,\jj,\kk} (\low) : = \sum_{\substack{\alpha, \beta , \\ \dim(\alpha), \dim(\beta)\geq 2, \\\dimgeqi{\alpha}{2} \leq  C}} {\termm{e}{\alpha}{\beta}}, \quad\quad\quad \Theta^{e}_{\ii,\jj,\kk} (\high): = \sum_{\substack{\alpha, \beta,  \\ \dim(\alpha), \dim(\beta)\geq 2, \\ \dimgeqi{\alpha}{2} >  C}} {\termm{e}{\alpha}{\beta}}.
\end{align*}
Recall, $\dimgeqi{\alpha}{2}$ denotes the number of representations in $\alpha = (\rho_1, \rho_2,\ldots, \rho_R)$ which are of dimensions at least $2$. With these notations, we have

$$\Theta := \sum_{\ii,\jj,\kk}  \E_{e(u,v)\in E}[ \Theta^{e}_{\ii,\jj,\kk} (\low)] +  \E_{e(u,v)\in E}[\Theta^{e}_{\ii,\jj,\kk} (\high)] .$$
The upper bound on $\Theta$ follows from \Cref{claim:lowterms} and \Cref{claim:highterms} and triangle inequality (and also noting that $\ii, \jj$ and $\kk$ take at most $\sqrt{G}$ distinct values).
\end{proof}

\begin{claim}
	\label{claim:lowterms}
	If $\calH$ is at most $\frac{\delta^2}{10|G|^{10C}}$-satisfiable,  then for every $\rho\in \irr(G)$ such that $\dim(\rho)\geq 2$, and every $1\leq\ii,\jj,\kk\leq \dim(\rho)$, 
	$$\left| \E_{e(u,v)\in E}[\Theta^{e}_{\ii,\jj,\kk}(\low)] \right|\leq \frac{\delta}{2|G|^3}.$$
\end{claim}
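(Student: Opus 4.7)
The plan is a standard decoding argument: show that if $\bigl|\E_{e(u,v)\in E}[\Theta^{e}_{\ii,\jj,\kk}(\low)]\bigr|$ exceeds $\delta/(2|G|^3)$, then a randomized labeling of $\calH$ derived from the Fourier supports of $g_{\ii\jj},\,g_{\kk\ii}$ and $h_{\jj\kk}$ satisfies more than a $\tfrac{\delta^2}{10|G|^{10C}}$ fraction of the edges, contradicting the hypothesis on $\calH$. Throughout, write $g:=g_{\ii\jj}$, $g':=g_{\kk\ii}$ and $h:=h_{\jj\kk}$.

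First, I would open up $\termm{e}{\alpha}{\beta}$ by decomposing $\alpha$ as a representation of $G^L$ through the edge projection $\pi=\pi_{uv}$. Writing $\alpha\circ \pi \cong U^\star\bigl(\oplus_m n_m\, \gamma_m\bigr)U$ with $\gamma_m \in \irr(G^L)$ distinct and $U$ unitary, the matrix $\alpha(\V b\circ\pi)$ becomes block-diagonal (after conjugating by $U$) with blocks $\gamma_m(\V b)$ of multiplicity $n_m$. Orthogonality of matrix entries over $G^L$ (Proposition~\ref{prop:orthmatrixentries}) forces the $\V b$-expectation in the definition of $\termm{e}{\alpha}{\beta}$ to vanish unless some $\gamma_m \cong \beta$. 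When this occurs, a direct computation together with Cauchy--Schwartz (Claim~\ref{claim:normteq}) gives
\[
|\termm{e}{\alpha}{\beta}| \;\le\; \dim(\alpha)\,\hsnorm{\hat g(\alpha)}\,\hsnorm{\hat{g'}(\alpha)}\,\hsnorm{\hat h(\beta)}\cdot \sqrt{n^{\pi}_{\alpha,\beta}},
\]
where $n^{\pi}_{\alpha,\beta}$ is the multiplicity of $\beta$ in $\alpha\circ\pi$; the square root arises from a Cauchy--Schwartz step inside the trace, combined with Claim~\ref{claim:Uni_norm} to absorb the unitary $U$.

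Next, I would sum over $(\alpha,\beta)$ in the low regime ($\dim(\alpha),\dim(\beta)\ge 2$ and $\dimgeqi{\alpha}{2}\le C$) and apply Cauchy--Schwartz once more, separating the $v$-dependent factors $\hsnorm{\hat g(\alpha)},\hsnorm{\hat{g'}(\alpha)}$ from the $u$-dependent factor $\hsnorm{\hat h(\beta)}$. Using Parseval's identity (Proposition~\ref{prop:parsevals}) and $|g|,|g'|,|h|\le 1$, the raw Fourier-mass sums are bounded by $1$, and what remains to control is the weighting by $n^{\pi}_{\alpha,\beta}$ and the edge average. The labeling strategy is the following: for each $v\in\calV$, sample $\alpha$ with $\dimgeqi{\alpha}{2}\le C$ with probability proportional to $\dim(\alpha)\hsnorm{\hat g(\alpha)}^2$ and output a uniformly random coordinate $i\in[R]$ with $\dim(\rho_i)\ge 2$; symmetrically, for each $u\in\calU$, sample $\beta$ proportionally to $\dim(\beta)\hsnorm{\hat h(\beta)}^2$ and output a uniformly random coordinate $\ell\in[L]$ with $\dim(\tau_\ell)\ge 2$. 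Whenever $\beta$ actually appears in $\alpha\circ\pi$, each non-trivial coordinate of $\beta$ must be the $\pi_{uv}$-image of at least one non-trivial coordinate of $\alpha$, so conditional on a contributing pair $(\alpha,\beta)$, the labeling satisfies $\pi_{uv}(i)=\ell$ with probability $\Omega(1/C^2)$.

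The main obstacle I expect is controlling the multiplicity $n^{\pi}_{\alpha,\beta}$, and this is precisely where the choice of $C$ and the smoothness of $\calH$ enter. By item~2 of Theorem~\ref{thm:lc-hard}, for a random neighbour $u$ of $v$,
\[
\Pr_u\!\bigl[\,|\pi_{uv}(\dimgeqi{\alpha}{2})| < \dimgeqi{\alpha}{2}^{\,d_0}\,\bigr] \;\le\; \dimgeqi{\alpha}{2}^{-d_0} \;\le\; C^{-d_0/2},
\]
matching the hypothesis on $C$. On edges where $|\pi_{uv}(\dimgeqi{\alpha}{2})|\ge c=10|G|\log(1/\eps_0)$, for an appropriately small $\eps_0\sim \delta/|G|^{O(1)}$, Lemma~\ref{lemma:magic_complicated} guarantees that every $\gamma_m$ in the decomposition of $\alpha\circ\pi$ either has $\dim(\gamma_m)\ge c$ or has multiplicity $n_m\le \eps_0^2\dim(\alpha)$. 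Both alternatives are usable in the Cauchy--Schwartz bound of the preceding step. The last technicality is bookkeeping: aligning columns of $\hat g(\alpha),\hat{g'}(\alpha),\hat h(\beta)$ with the block structure of $\alpha\circ\pi$ via a careful index permutation through the $d$-to-$1$ map $\pi_{uv}$, as alluded to in the introduction. Collecting everything, I would obtain a bound of the form
\[
\bigl|\E_{e}[\Theta^{e}_{\ii,\jj,\kk}(\low)]\bigr| \;\le\; |G|^{O(C)}\sqrt{\mathrm{val}(\calH)} \;+\; C^{-d_0/2}\cdot \mathrm{poly}(|G|),
\]
which is at most $\delta/(2|G|^3)$ by the choice of $C$ (so that $C^{-d_0/2}\le \delta^2/(12|G|^6)$) and the $\tfrac{\delta^2}{10|G|^{10C}}$-satisfiability assumption on $\calH$.
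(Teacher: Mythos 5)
Your plan runs the low-degree decoding through the same machinery the paper reserves for the high-degree terms (decomposing $\alpha\circ\pi_{uv}$ into irreducibles of $G^L$, controlling multiplicities via Lemma~\ref{lemma:magic_complicated}, invoking smoothness). In the $\low$ regime this does not work and is also unnecessary.

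The concrete error is the inequality
$\dimgeqi{\alpha}{2}^{-d_0} \le C^{-d_0/2}$.
In $\Theta^{e}_{\ii,\jj,\kk}(\low)$ the defining constraint is $\dimgeqi{\alpha}{2}\le C$, so $\dimgeqi{\alpha}{2}^{-d_0} \ge C^{-d_0}$ — the inequality goes the wrong way, and $\dimgeqi{\alpha}{2}$ could be as small as $1$, making smoothness give you nothing. For the same reason Lemma~\ref{lemma:magic_complicated} is not applicable: it needs the number of non-trivially-preimaged coordinates to be at least $c\ge 10|G|\log(1/\eps_0)$, which you cannot guarantee when $\dimgeqi{\alpha}{2}\le C$ is bounded. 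Your final displayed bound $|G|^{O(C)}\sqrt{\val(\calH)}+C^{-d_0/2}\cdot\mathrm{poly}(|G|)$ therefore has an unjustified second term, and in fact the claim needs no such term: as stated it follows solely from the satisfiability of $\calH$, with no smoothness hypothesis.

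The paper's proof of \Cref{claim:lowterms} avoids all of this. It expands $\termm{e}{\alpha}{\beta}$ directly, observes via \Cref{prop:orthmatrixentries} that only pairs with $\beta_{\geq 2}\subseteq\pi_{\geq 2}(\alpha)$ contribute, introduces $F^{ik}_\alpha(\V b):=\alpha(\V b^{-1}\circ\pi)_{ki}$ so that the $\V b$-expectation becomes a Fourier coefficient $\hat{F^{ik}_\alpha}(\beta)_{k',i'}$, and then applies Cauchy--Schwartz once — using $\sum_k\|F^{ik}_\alpha\|_2^2=1$ to kill one factor and $\dim(\alpha)\le |G|^{C/2}$ (a consequence of $\dimgeqi{\alpha}{2}\le C$) to absorb the index sum as an overall $|G|^C$ factor. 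The decoding step is then exactly the randomized labeling you describe, and the containment $\beta_{\geq 2}\subseteq\pi_{\geq 2}(\alpha)$ is what makes a contributing pair yield a consistent label with probability at least $1/C$. If you drop the block-decomposition, the multiplicity bookkeeping, and the smoothness appeal, the remainder of your plan aligns with the paper's argument.
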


\begin{claim} Let $C$ be a constant such that  $C^{- d_0/2}\leq \frac{\delta^2}{12|G|^6}$, where $d_0$ is the constant from \Cref{thm:lc-hard}. For every $1\leq\ii,\jj,\kk\leq \dim(\rho)$, 
	\label{claim:highterms}
	$$\left|\E_{e(u,v)\in E}[\Theta^{e}_{\ii,\jj,\kk} (\high)] \right| \leq \frac{\delta}{2|G|^3}.$$
\end{claim}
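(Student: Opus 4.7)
The plan is to exploit the combinatorial structure of how an irreducible representation $\alpha\in\irr(G^R)$ in the ``high'' part decomposes when restricted to the subgroup $\pi_{uv}(G^R)\cong G^L$, combined with the smoothness of the Label Cover instance. First, since $h_u$ is folded, Lemma~\ref{claim:folded_zero_coeff} kills its dimension-$1$ Fourier coefficients, and Fourier inversion eliminates the sum over $\beta$ inside $\termm{e}{\alpha}{\beta}$:
\[
\sum_\beta \termm{e}{\alpha}{\beta} \;=\; \dim(\alpha)\cdot \E_{\V b}\!\left[\trace\!\bigl(\hat g_v(\alpha)\hat g'_v(\alpha)\,\alpha(\V b\circ \pi_{uv})\bigr)\cdot h_u(\V b^{-1})\right].
\]
Applying Cauchy--Schwarz in $\V b$ together with $\|h_u\|_{L^2(G^L)}\leq 1$ gives $\bigl|\sum_\beta \termm{e}{\alpha}{\beta}\bigr| \leq \dim(\alpha)\,\|F_{v,\alpha}\|_{L^2(G^L)}$, where $F_{v,\alpha}(\V b):=\trace(\hat g_v(\alpha)\hat g'_v(\alpha)\,\alpha(\V b\circ \pi_{uv}))$.

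The next step is to control $\|F_{v,\alpha}\|_{L^2(G^L)}$ via the decomposition of $\alpha\circ \pi_{uv}$, viewed as a representation of $G^L$ (using Fact~\ref{fact:subgroup_rep}), into irreducibles $\bigoplus_m n_m\beta_m$ with $\beta_m\in\irr(G^L)$. Expanding $F_{v,\alpha}$ as a linear combination of matrix entries of the $\beta_m$'s and applying Schur orthogonality (Proposition~\ref{prop:orthmatrixentries}) yields
\[
\|F_{v,\alpha}\|_{L^2(G^L)}^2 \;\leq\; \Bigl(\max_m n_m/\dim(\beta_m)\Bigr)\cdot \hsnorm{\hat g_v(\alpha)\hat g'_v(\alpha)}^2.
\]
This is where Lemma~\ref{lemma:magic_complicated} is used. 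For $\alpha$ in the high part, let $T_\alpha\subseteq [R]$ denote the coordinates with $\dim(\rho_i)\geq 2$, so $|T_\alpha|>C$. The smoothness property of Theorem~\ref{thm:lc-hard} gives $\Pr_u[|\pi_{uv}(T_\alpha)|<C^{d_0}]\leq |T_\alpha|^{-d_0}\leq C^{-d_0}$; on the complementary ``good'' event, at least $C^{d_0}$ of the blocks $B_\ell=\bigotimes_{j\in\pi_{uv}^{-1}(\ell)}\rho_j$ have dimension $\geq 2$, so Lemma~\ref{lemma:magic_complicated} applied with $c:=C^{d_0}$ and $\eps_0:=C^{-d_0/2}$ (the hypothesis $C^{-d_0/2}\leq \delta^2/(12|G|^6)$ comfortably implies $C^{d_0}\geq 10|G|\log(1/\eps_0)$) gives $\max_m n_m/\dim(\beta_m) \leq 2\,C^{-d_0}\,\dim(\alpha)$ on the good event. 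Hence, on the good event,
\[
\Bigl|\sum_\beta \termm{e}{\alpha}{\beta}\Bigr| \;\leq\; \sqrt{2}\,C^{-d_0/2}\,\dim(\alpha)^{3/2}\,\hsnorm{\hat g_v(\alpha)\hat g'_v(\alpha)},
\]
while on the bad event I would use the trivial bound $\dim(\alpha)^{3/2}\,\hsnorm{\hat g_v(\alpha)\hat g'_v(\alpha)}$, accepting a loss that is recouped by the probability factor $C^{-d_0}$.

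Finally, I sum over $\alpha$ in the high part and take expectation over the random edge $e$. Splitting according to the good/bad event, the effective prefactor in front of the residual sum $\sum_\alpha \dim(\alpha)^{3/2}\,\E_v\hsnorm{\hat g_v(\alpha)\hat g'_v(\alpha)}$ is $O(C^{-d_0/2})$. The main obstacle, and the most delicate step, is controlling this residual sum: a naive two-way Cauchy--Schwarz plus Parseval argument does not suffice, because the second factor $\sum_\alpha \dim(\alpha)^2\hsnorm{\hat g'_v(\alpha)}^2$ is not controlled by $\|g'_v\|_{L^2}^2$. To tame it I appeal to the column-permutation trick described in the Introduction: by judiciously permuting the columns of the Fourier coefficient matrices $\hat g_v(\alpha),\hat g'_v(\alpha)$ and the corresponding rows of $\alpha(\cdot)$ in a way aligned with the $d$-to-$1$ structure of $\pi_{uv}$, the obstructing cross-terms can be absorbed into an inner product that is bounded via Parseval, yielding a residual sum bounded by $|G|^{O(1)}$. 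Combining this with the $C^{-d_0/2}$ prefactor and the hypothesis $C^{-d_0/2}\leq \delta^2/(12|G|^6)$ then delivers $|\E_{e}[\Theta^{e}_{\ii,\jj,\kk}(\high)]|\leq \delta/(2|G|^3)$, as required.
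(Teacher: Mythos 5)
Your route diverges from the paper's in a critical way: you apply Cauchy--Schwarz in $\V b$ first, splitting off $\|h_u\|_{L^2}\leq 1$, and are then left with the quantity $\|F_{v,\alpha}\|_{L^2(G^L)}$. Your Schur-orthogonality bound on this norm, together with \Cref{lemma:magic_complicated}, yields (correctly) something of the form
\[
\Bigl|\sum_\beta \termm{e}{\alpha}{\beta}\Bigr| \;\lesssim\; C^{-d_0/2}\,\dim(\alpha)^{3/2}\,\hsnorm{\hat g_v(\alpha)}\,\hsnorm{\hat g'_v(\alpha)},
\]
and the proof then requires controlling $\sum_{\alpha\in\high}\dim(\alpha)^{3/2}\hsnorm{\hat g_v(\alpha)}\hsnorm{\hat g'_v(\alpha)}$. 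This sum is \emph{not} bounded by $|G|^{O(1)}$ or anything independent of $n$: if $g_v$ (and $g'_v$) concentrate their Fourier mass on a single $\alpha$ with $\dim(\alpha)=D$, then Parseval gives $\hsnorm{\hat g_v(\alpha)}\approx D^{-1/2}$ and the sum is $\approx D^{1/2}$, which grows without bound with $n$. The extra $\dim(\alpha)^{1/2}$ relative to Parseval is precisely the loss incurred by stripping off $h$ too early.

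You recognize this issue yourself and propose to fix it with the column-permutation trick. But the permutation trick cannot rescue your argument at this late stage, because the quantity it is designed to control has already been replaced. In the paper, the permutation $P(e,\alpha)$ reorganizes the blocks of $U\alpha(\V b\circ\pi)U^\star$ so that, after two careful Cauchy--Schwarz steps that keep $g$, $g'$, and $h$ coupled, the resulting bound is
\[
|\Theta^e_{\ii,\jj,\kk}(\high)|^2 \;\leq\; \sum_\alpha \dim(\alpha)\,\sum_{i,j}|U\hat g(\alpha)_{ij}|^2\;\hsnorm{\hat h(\beta^{\col}_{P,U,\alpha,i})}^2 ,
\]
with no extra powers of $\dim(\alpha)$, and the whole point of averaging over random permutations (\Cref{claim:random_unitary_works0}) is to control the weight $\hsnorm{\hat h(\beta^{\col}_{P,U,\alpha,i})}^2$ appearing next to $|U\hat g(\alpha)_{ij}|^2$. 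That is, the trick operates on the interplay between the columns of $U\alpha U^\star$ and the Fourier coefficients of $h$; once you have discarded $h$ via $\|h\|_2\leq 1$, there is no object left for the permutation to act on, and the residual sum you are left with is genuinely unbounded. The fix is to delay the Cauchy--Schwarz that removes $h$ and instead follow the paper's order of operations, where $g'$ is the one removed by Parseval (absorbed into $\|g'\|_2^2\leq 1$) and the permutation is applied before the step involving $h$.
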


\subsubsection{Bounding the $\low$ terms}

\begin{claim} (Restatement of \Cref{claim:lowterms})
	If $\calH$ is at most $\frac{\delta^2}{10|G|^{10C}}$-satisfiable,then for every $\rho\in \irr(G)$ such that $\dim(\rho)\geq 2$, and every $1\leq\ii,\jj,\kk\leq \dim(\rho)$, 
	$$\left|\E_{e(u,v)\in E}[\Theta^{e}_{\ii,\jj,\kk}(\low)] \right|\leq \frac{\delta}{2|G|^3}.$$
\end{claim}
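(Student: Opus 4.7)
The plan is to argue by contradiction: assuming $|\E_{e(u,v)\in E}[\Theta^{e}_{\ii,\jj,\kk}(\low)]| > \frac{\delta}{2|G|^3}$, I will exhibit a randomized labeling strategy for $\calH$ that satisfies more than a $\frac{\delta^2}{10|G|^{10C}}$ fraction of the edges, contradicting the soundness hypothesis of the lemma. This follows the standard decoding-based template used throughout the hardness-of-approximation literature, adapted to the non-abelian / irreducible-representation setting.

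The first technical step is to analyze the inner expectation $\E_{\V b\in G^L}[\,\cdot\,]$ inside $\termm{e}{\alpha}{\beta}$. Note that $\V b \mapsto \alpha(\V b\circ \pi_{uv})$ is a representation of $G^L$, obtained by restricting $\alpha\in\irr(G^R)$ to the subgroup $\pi_{uv}(G^R)\cong G^L$ (\Cref{fact:subgroup_rep}). Decomposing $\alpha|_{\pi_{uv}(G^R)} \cong \oplus_m n_m\beta_m$ with $\beta_m\in \irr(G^L)$ via an appropriate unitary change of basis, and using $\beta(\V b^{-1}) = \beta(\V b)^\star$ together with the matrix-entry orthogonality of \Cref{prop:orthmatrixentries}, the expectation is nonzero only when some $\beta_m \cong \beta$ and collapses to an inner product of the corresponding block of $\hat g(\alpha)\hat{g'}(\alpha)$ (in the new basis) with $\hat h(\beta)^\star$, divided by $\dim(\beta)$. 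Together with \Cref{claim:normteq} and \Cref{claim:Uni_norm} this yields a bound of the form $|\termm{e}{\alpha}{\beta}| \le \dim(\alpha) \cdot \hsnorm{\hat g(\alpha)}\hsnorm{\hat{g'}(\alpha)}\hsnorm{\hat h(\beta)}$ (aggregated over the blocks where $\beta$ appears), and in particular forces $\beta$ to be one of the irreducible components of $\alpha|_{\pi_{uv}(G^R)}$.

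Next I set up the decoding. For each $v\in \calV$, sample a representation $\alpha\in \irr(G^R)$ with $\dim(\alpha)\ge 2$ and $\dimgeqi{\alpha}{2}\le C$ with probability proportional to $\dim(\alpha)\hsnorm{\hat g(\alpha)}\hsnorm{\hat{g'}(\alpha)}$ (this is at most $1$ by Cauchy--Schwartz and \Cref{prop:parsevals}), then assign $v$ a label $r$ chosen uniformly from the at most $C$ coordinates of $\alpha$ with $\dim(\rho_r)\ge 2$. Symmetrically, for each $u \in \calU$, sample $\beta\in \irr(G^L)$ using weight $\dim(\beta)\hsnorm{\hat h(\beta)}^2$ and pick a uniformly random nontrivial coordinate $s$. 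By the analysis above, the contributions to $\E_e[\Theta^e_{\ii,\jj,\kk}(\low)]$ come exclusively from pairs $(\alpha,\beta)$ with $\beta$ appearing in $\alpha|_{\pi_{uv}(G^R)}$; the nontrivial positions of such a $\beta$ are precisely images under $\pi_{uv}$ of nontrivial positions of $\alpha$. Therefore $\pi_{uv}(r) = s$ holds with probability at least $1/C$ conditioned on the sample. A Cauchy--Schwartz squaring on the per-edge contribution then gives that the expected fraction of satisfied edges is at least $\Omega\!\left(|\E_e[\Theta^e_{\ii,\jj,\kk}(\low)]|^2 / (|G|^{O(1)}\, C^{O(1)})\right)$, which exceeds $\frac{\delta^2}{10|G|^{10C}}$ under our assumption.

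The main obstacle will be controlling the multiplicity $n_m$ of $\beta\cong\beta_m$ in the decomposition of $\alpha|_{\pi_{uv}(G^R)}$, which can be as large as $\dim(\alpha)$ and would naively overwhelm the bound on $|\termm{e}{\alpha}{\beta}|$. Handling this requires the ``careful choice of permuting the columns'' described in the technical overview: one selects the isomorphism between the multiple $\beta_m$-copies and $\beta$ so that the norms of the corresponding blocks of $\hat g(\alpha)\hat{g'}(\alpha)$ are summed cleanly by Parseval rather than overcounted. The exponential factor $|G|^{10C}$ in the soundness threshold arises from the coupling between the $\dim(\rho_r)\le \sqrt{|G|}$ per nontrivial coordinate of $\alpha$ and the up-to-$C$ many such coordinates, together with the $1/C$ loss incurred when the decoding picks among the nontrivial coordinates and the constant-many $(\ii,\jj,\kk)$ choices.
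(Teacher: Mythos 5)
Your overall skeleton --- contradiction plus a randomized decoding weighted by Fourier mass, with the $1/C$ loss from choosing among the $\leq C$ positions of $\alpha$ with dimension $\geq 2$ --- matches the paper. The slightly different decoding weight $\dim(\alpha)\hsnorm{\hat g(\alpha)}\hsnorm{\hat{g'}(\alpha)}$ in place of the paper's $\dim(\alpha)\hsnorm{\hat{g'}(\alpha)}^2$ is a legitimate variant; both are sub-probability measures by Cauchy--Schwartz and Parseval.

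However, there is a genuine gap in the middle. Your claimed per-term bound
$|\termm{e}{\alpha}{\beta}| \le \dim(\alpha)\,\hsnorm{\hat g(\alpha)}\hsnorm{\hat{g'}(\alpha)}\hsnorm{\hat h(\beta)}$
is missing a factor of $\sqrt{n_m}$, where $n_m$ is the multiplicity of $\beta$ in $\alpha|_{\pi_{uv}(G^R)}\cong\oplus_m n_m\beta_m$: after the change of basis, the contribution is $\dim(\alpha)\sum_j \trace(M'_{\beta,j}N)$ summed over the $n_m$ copies of $\beta$, and Cauchy--Schwartz on that sum produces $\sqrt{n_m}\,\hsnorm{\hat g\hat{g'}}\hsnorm{\hat h(\beta)}$, not $\hsnorm{\hat g\hat{g'}}\hsnorm{\hat h(\beta)}$. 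You recognize the multiplicity problem, but the fix you invoke --- the ``careful choice of permuting columns'' --- is the wrong tool. That device (\Cref{claim:random_unitary_works0}) is used in the paper only for $\Theta(\high)$; it requires $|\pi_{\geq 2}(\alpha)|\geq c$ via Label-Cover smoothness and produces a bound involving $\eps_0 + \sqrt{\max_{\dim(\beta)\geq c}\hsnorm{\hat h(\beta)}^2}$, which is not a multiplicity-absorbing estimate of the kind your argument needs here.

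The paper's actual resolution for the low terms is different and avoids tracking multiplicities altogether: it introduces the functions $F^{ik}_\alpha(\V b) := \alpha(\V b^{-1}\circ\pi)_{ki}$, observes via orthogonality of matrix entries that $\E_{\V b}[\alpha(\V b\circ\pi)_{ki}\,\beta(\V b^{-1})_{k'i'}] = \widehat{F^{ik}_\alpha}(\beta)_{k'i'}$, and then exploits the unitarity identity $\sum_k \|F^{ik}_\alpha\|_2^2 = 1$ (the rows of a unitary matrix are orthonormal) in a Cauchy--Schwarz step that makes the $F$-Fourier mass sum to exactly $1$. All the multiplicity bookkeeping is silently absorbed by this single normalization. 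Your route is not unsalvageable --- since $n_m\leq\dim(\alpha)\leq|G|^{C/2}$ for the low terms, the extra $\sqrt{n_m}$ can in principle be pushed into the $|G|^{10C}$ slack --- but you would need to redo the Cauchy--Schwarz accounting with that factor present and verify that the decoding inequality $p_e\gtrsim|\Theta^e(\low)|^2$ still closes, rather than cite a lemma that does not apply. A smaller slip: the set of coordinates of $\beta$ with $\dim\geq 2$ is only \emph{contained in}, not equal to, $\pi_{\geq 2}(\alpha)$; the containment direction is all the decoding needs, so this is harmless, but ``precisely'' is wrong.
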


\begin{proof}
Fix any $\rho\in \irr(G)$ such that $\dim(\rho)\geq 2$.
Assume towards contradiction that there exists $1\leq \ii,\jj,\kk \leq \dim(\rho)$ such that 
$$\E_{e(u,v)\in E}[| \Theta^{e}_{\ii,\jj,\kk}(\low)|]  \geq \left| \E_{e(u,v)\in E}[\Theta^{e}_{\ii,\jj,\kk}(\low)]  \right|> \frac{\delta}{2|G|^3}.$$ 
We show that in this case, $\calH$ has a $>\frac{\delta^2}{10|G|^{10C}}$- satisfying assignment which is a contradiction. Consider the term $\termm{e}{\alpha}{\beta}$ when $\alpha = (\rho_1, \rho_2,\ldots, \rho_R)$ and $\beta = (\tau_1, \tau_2, \ldots, \tau_L)$.

	\begin{align*}
\termm{e}{\alpha}{\beta} & =  \dim(\alpha)\dim(\beta) \E_{\V b} \left[\trace(\hat{g}(\alpha)\hat{g'}(\alpha)  {\alpha}(\V b\circ \pi)) \cp  \trace(\hat{h}(\beta)  {\beta}(\V b^{-1}))\right]\\
	& = \dim(\alpha)\dim(\beta) \E_{\V b} \left[\sum_{i,k} (\hat{g}(\alpha)\hat{g'}(\alpha))_{ik}  {\alpha}(\V b\circ \pi)_{ki} \cp \sum_{i',k'}\hat{h}(\beta)_{i'k'}  {\beta}(\V b^{-1})_{k'i'}\right]\\
	& =  \dim(\alpha)\dim(\beta) \E_{\V b} \left[\sum_{\substack{i,k\\ i',k'}} (\hat{g}(\alpha)\hat{g'}(\alpha))_{ik}  {\alpha}(\V b\circ \pi)_{ki} \hat{h}(\beta)_{i'k'}  {\beta}(\V b^{-1})_{k'i'}\right]\\
	& =  \dim(\alpha)\dim(\beta) \sum_{\substack{i,k\\ i',k'}} (\hat{g}(\alpha)\hat{g'}(\alpha))_{ik}  \hat{h}(\beta)_{i'k'}  \E_{\V b} \left[  {\alpha}(\V b\circ \pi)_{ki} \cp  {\beta}(\V b^{-1})_{k'i'}\right],
	\end{align*}

	where $(i,jk)$ are the tuples $i = (i_1, i_2, \ldots, i_R)$ and $k = (k_1, k_2, \ldots, k_R)$ such that for all $\ell\in [R]$, $1\leq  i_\ell, j_\ell\leq \dim(\rho_\ell)$. Similarly, $(i',k')$ are the tuples $i' = (i'_1, i'_2, \ldots, i'_L)$ and $k' = (k'_1, k'_2, \ldots, k'_L)$ such that for all $\ell'\in [L]$, $1\leq  i'_{\ell'}, j'_{\ell'}\leq \dim(\tau_{\ell'})$. Now, 
	\begin{align*}	
	\E_{\V b}  \left[  {\alpha}(\V b\circ \pi))_{ki} \cp   {\beta}(\V b^{-1})_{k'i'}\right] & = \E_{\V b}  \left[ \prod_{{\ell'} = 1}^{L}  {\tau_{\ell'}}(b_{\ell'}^{-1})_{k'_{\ell'}i'_{\ell'}} \cdot \prod_{\ell \in \pi^{-1}({\ell'})}  {\rho_{\ell}}(b_{\ell'})_{k_{{\ell}}i_{{\ell}}} \right] \\
	& =  \prod_{{\ell'} = 1}^{L} \E_{ b}  \left[  {\tau_{\ell'}}(b^{-1})_{k'_{\ell'}i'_{\ell'}} \cdot \prod_{\ell \in \pi^{-1}({\ell'})}  {\rho_{\ell}}(b)_{k_{\ell}i_{\ell}} \right].
	\end{align*}
	Now suppose there exists ${\ell'}$ such that for all $\ell\in \pi^{-1}(\ell')$, $\dim(\rho_{\ell})=1$. Since product of dimension $1$ representation is also a dimension $1$ representation, for the expectation to be nonzero, $\dim(\tau_{\ell'})$ must be $1$, by \Cref{prop:orthmatrixentries}. Thus, if $\dim(\beta)\geq 2$, then there must exist an $\ell'$ such that $\dim(\tau_{\ell'})\geq 2$ and an $\ell\in  \pi^{-1}(\ell')$ such that $\dim(\rho_\ell)\geq 2$ (again, for the expectation to be non-zero). Therefore, we conclude that the terms that are nonzero in the expression $\Theta_\low$ are all $(\alpha, \beta)$ such that for all $\ell' \in [L]$ whenever $\dim(\tau_{\ell'})\geq 2$, there exists a $\ell\in \pi^{-1}(\ell')$ such that $\dim(\rho_{\ell})\geq 2$. Let us define $\pi_{\geq 2}(\alpha) = \{\ell' \in L \mid \exists \ell\in \pi^{-1}(\ell'), \dim(\rho_{\ell})\geq 2\}$ and $\beta_{\geq 2} = \{ \ell' \mid \dim(\tau_{\ell'})\geq 2\}$. We get
	\begin{align*}
	\Theta^{e}_{\ii,\jj,\kk}(\low) & = \sum_{\substack{\alpha , \beta, \\ \dim(\alpha) \geq 2, \dim(\beta)\geq 2, \\  \dimgeqi{\alpha}{2} \leq C, \\ \beta_{\geq 2} \subseteq \pi_{\geq 2}(\alpha)}} \dim(\alpha)\dim(\beta) \sum_{i,k} (\hat{g}(\alpha)\hat{g'}(\alpha))_{ik} \sum_{i',k'} \hat{h}(\beta)_{i'k'} \E_{\V b}  \left[  {\alpha}(\V b\circ \pi))_{ki} \cp   {\beta}(\V b^{-1})_{k'i'}\right].
	\end{align*}

	Now, let $F^{ik}_{\alpha} : G^L \rightarrow \C$ be the following function:
	$$\ff(\V b) :=  {\alpha}(\V b^{-1}\circ \pi)_{ki}.$$
	Note that,
	\begin{equation}
	\label{eq:sum_one}
	\sum_{k} \|\ff\|_2^2 = \sum_{k}  \E_{\V b}[ |{\alpha}(\V b^{-1}\circ \pi)_{ki}|^2] = \E_{\V b} \sum_{k}  |{\alpha}(\V b^{-1}\circ \pi)_{ki}|^2 = 1,
	\end{equation}
	where the last equality uses the fact that the sum expression is exactly the norm of the column $i$ of representation $ {\alpha}$, which is 1 ($ {\alpha}(.)$ is unitary).
	We now analyze the expectation:
	\begin{align*}
	\E_{\V b} \left[  {\alpha}(\V b\circ \pi)_{ki} \cp  {\beta}(\V b^{-1})_{k'i'}\right] &= \E_{\V b} \left[ \ff(\V b^{-1})\cp  {\beta}(\V b^{-1})_{k'i'}\right]\\
	& = \E_{\V b} \left[\sum_{\beta'} \dim(\beta') \trace(\hat{\ff}(\beta')  {\beta'}(\V b^{-1})^\star) \cp  {\beta}(\V b^{-1})_{k'i'}\right]\\
	& = \E_{\V b} \left[\sum_{\beta'} \dim(\beta') \trace(\hat{\ff}(\beta')  {\beta'}(\V b)) \cp  {\beta}(\V b^{-1})_{k'i'}\right]\\
	& = \E_{\V b} \left[\sum_{\beta'} \dim(\beta') \sum_{i'',k''} \hat{\ff}(\beta')_{i'',k''}  {\beta'}(\V b)_{k'',i''} \cp  {\beta}(\V b^{-1})_{k'i'}\right]\\
	& = \sum_{\beta'} \dim(\beta') \sum_{i'',k''} \hat{\ff}(\beta')_{i'',k''} \E_{\V b} \left[  {\beta'}(\V b)_{k'',i''} \cp  {\beta}(\V b^{-1})_{k'i'}\right].
	\end{align*}
	By \Cref{prop:orthmatrixentries}, the expectation is zero unless $\beta' = \beta$, $i'' = k'$ and $k'' = i'$, otherwise it is $1/\dim(\beta')$. Therefore,
	\begin{align*}
	\E_{\V b} \left[  {\alpha}(\V b\circ \pi)_{ki} \cp  {\beta}(\V b^{-1})_{k'i'}\right] 	& = \hat{\ff}(\beta)_{k',i'}.
	\end{align*}
	Plugging this  into $\Theta^{e}_{\ii,\jj,\kk}(\low)$, we get 
	\begin{align*}
	|\Theta^{e}_{\ii,\jj,\kk}(\low)|^2 & = \left|\sum_{\alpha , \beta} \dim(\alpha)\dim(\beta) \sum_{\substack{i,k,\\ i',k'}} (\hat{g}(\alpha)\hat{g'}(\alpha))_{ik}  \hat{h}(\beta)_{i'k'} \hat{\ff}(\beta)_{k',i'}\right|^2\\
	& = \left|\sum_{\alpha , \beta} \dim(\alpha)\dim(\beta) \sum_{\substack{i,j, k,\\ i',k'}} \hat{g}(\alpha)_{ij} \hat{g'}(\alpha)_{jk}  \hat{h}(\beta)_{i'k'} \hat{\ff}(\beta)_{k',i'}\right|^2\\
	& \leq \left(\sum_{\alpha , \beta} \dim(\alpha)\dim(\beta) \sum_{\substack{i,j, k,\\ i',k'}} |\hat{g'}(\alpha)_{jk}|^2  |\hat{h}(\beta)_{i'k'}|^2 \right) \left(  \sum_{\alpha , \beta} \dim(\alpha)\dim(\beta) \sum_{\substack{i,j, k,\\ i',k'}}  |\hat{g}(\alpha)_{ij}|^2 |\hat{\ff}(\beta)_{k',i'}|^2 \right).
	\end{align*}
	We can bound the second term as follows:
	\begin{align*}
	\left(  \sum_{\alpha, \beta} \dim(\alpha)\dim(\beta) \sum_{\substack{i,j, k,\\ i',k'}}  |\hat{g}(\alpha)_{ij}|^2 |\hat{\ff}(\beta)_{k',i'}|^2 \right) & = \left(  \sum_{\alpha} \dim(\alpha) \sum_{i,j}  |\hat{g}(\alpha)_{ij}|^2 \sum_{k}\sum_{\beta} \dim(\beta) \sum_{i',k'} |\hat{\ff}(\beta)_{k',i'}|^2 \right) \\
	& = \left(  \sum_{\alpha} \dim(\alpha) \sum_{i,j}  |\hat{g}(\alpha)_{ij}|^2 \sum_{k} \|\ff\|_2^2 \right) \\
	& = \left(  \sum_{\alpha} \dim(\alpha) \sum_{i,j}  |\hat{g}(\alpha)_{ij}|^2 \right) \tag*{(Using \Cref{eq:sum_one})}\\
	& = \|g\|_2^2 \leq 1.
	\end{align*}
	Therefore,
	\begin{align*}
	|\Theta^{e}_{\ii,\jj,\kk}(\low)|^2 & \leq \sum_{\substack{\alpha , \beta, \\ \dim(\alpha) , \dim(\beta)\geq 2,\\  \dimgeqi{\alpha}{2} \leq C,\\ \beta_{\geq 2} \subseteq \pi_{\geq 2}(\alpha)}}\dim(\alpha)\dim(\beta) \sum_{\substack{i,j, k, \\ i',k'}} |\hat{g'}(\alpha)_{jk}|^2  |\hat{h}(\beta)_{i'k'}|^2 \\
	& \leq |G|^C \sum_{\substack{\alpha , \beta,\\ \dim(\alpha) , \dim(\beta)\geq 2,\\  \dimgeqi{\alpha}{2} \leq C, \\ \beta_{\geq 2} \subseteq \pi_{\geq 2}(\alpha)}}\dim(\alpha)\dim(\beta) \sum_{\substack{j, k, \\ i',k'}} |\hat{g'}(\alpha)_{jk}|^2  |\hat{h}(\beta)_{i'k'}|^2 \\
	& = |G|^C \sum_{\substack{\alpha , \beta, \\ \dim(\alpha) , \dim(\beta)\geq 2, \\  \dimgeqi{\alpha}{2} \leq C, \\ \beta_{\geq 2} \subseteq \pi_{\geq 2}(\alpha)}}\dim(\alpha)\dim(\beta) \hsnorm{\hat{g'}(\alpha)}^2  \hsnorm{\hat{h}(\beta)}^2 .
	\end{align*}
We now show that if $\Theta^{e}_{\ii,\jj,\kk}(\low)$ is large for a typical $e$, then it can be used to get a good labeling to the Label Cover instance $\calH$. 

\paragraph{Randomized labeling.} Consider the following randomized labeling. For each $v\in \calV$, consider $g_{\kk\ii} : G^R \rightarrow \C$ which is defined as $g_{\kk\ii}(\V x) := \rho(f_v(\V x))_{\kk\ii}$. Select $\alpha = (\rho_1, \rho_2,\ldots, \rho_R)$  with probability $\dim(\alpha) \hsnorm{\hat{g}_{\kk\ii}(\alpha)}^2$. Select a uniformly random $\ell_v \in [R]$ such that $\dim(\rho_{\ell_v})\geq 2$ and assign the label $\ell_v$ to $v$.

For each $u\in \calU$, consider $h_{\jj\kk} : G^L \rightarrow \C$ be defined as $h_{\jj\kk}(\V y) := \rho(f_u(\V y))_{\jj\kk}$. Select $\beta = (\tau_1, \tau_2,\ldots, \tau_L)$  with probability $\dim(\beta) \hsnorm{\hat{h}_{\jj\kk}(\beta)}^2$. Select a uniformly random $\ell_u \in [L]$ such that $\dim(\tau_{\ell_u})\geq 2$ and assign the label $\ell_u$ to $u$.

Now fix an edge $e(u,v)$. The probability $p_e$ that this edge is satisfied by the randomized labeling, is lower bounded by:
\begin{align*}
p_e &\geq \sum_{\substack{\alpha , \beta, \\ \dim(\alpha) , \dim(\beta)\geq 2, \\  \dimgeqi{\alpha}{2} \leq C, \\ \beta_{\geq 2} \subseteq \pi_{\geq 2}(\alpha)}}\dim(\alpha)\dim(\beta) \hsnorm{\hat{g'}(\alpha)}^2  \hsnorm{\hat{h}(\beta)}^2 \cdot  \frac{1}{\dimgeqi{\alpha}{2}}\\
&\geq \frac{1}{C}\sum_{\substack{\alpha , \beta, \\ \dim(\alpha) , \dim(\beta)\geq 2, \\  \dimgeqi{\alpha}{2} \leq C, \\ \beta_{\geq 2} \subseteq \pi_{\geq 2}(\alpha)}}\dim(\alpha)\dim(\beta) \hsnorm{\hat{g'}(\alpha)}^2  \hsnorm{\hat{h}(\beta)}^2   \\
&\geq \frac{1}{C\cdot |G|^C}	|\Theta^{e}_{\ii,\jj,\kk}(\low)|^2 .
\end{align*}
Therefore the expected number of edges satisfied by the randomized labeling is lower bounded by 
\begin{align*}
\E_{e\in E} [p_e] &\geq \E_e\left[\frac{1}{C\cdot |G|^C}	|\Theta^{e}_{\ii,\jj,\kk}(\low)|^2\right] \\
& = \frac{1}{C\cdot |G|^C}	 \E_e\left[|\Theta^{e}_{\ii,\jj,\kk}(\low)|^2\right]\\
& \geq \frac{1}{C\cdot |G|^C}	 \E_e\left[|\Theta^{e}_{\ii,\jj,\kk}(\low)|\right]^2\tag*{(Using convexity)}\\
  & \geq \frac{1}{C\cdot |G|^C}	\cdot  \frac{\delta^2}{4|G|^6}\\
  &> \frac{\delta^2}{10|G|^{10C}}.
\end{align*}
Since the expected fraction of the edges that are satisfied is strictly greater than $ \frac{\delta^2}{10|G|^{10C}}$, by conditional expectation, there exists a labeling to the Label Cover instance $\calH$ that satisfies more than $ \frac{\delta^2}{10|G|^{10C}}$ fraction of the edges, which is a contradiction.
\end{proof}

\subsubsection{Bounding the $\high$ terms}

We now show the following claim:

\begin{claim}(Restatement of \Cref{claim:highterms}) Let $C$ be a constant such that  $C^{- d_0/2}\leq \frac{\delta^2}{12|G|^6}$ , where $d_0$ is the constant from \Cref{thm:lc-hard}. For every $1\leq\ii,\jj,\kk\leq \dim(\rho)$, 
	$$\left| \E_{e(u,v)\in E}[\Theta^{e}_{\ii,\jj,\kk} (\high)]\right| \leq \frac{\delta}{2|G|^3}.$$
\end{claim}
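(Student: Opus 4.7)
The plan is to establish \Cref{claim:highterms} by following the template of the low-terms proof (\Cref{claim:lowterms}), but replacing the crude bound $\dim(\alpha) \leq |G|^C$ (no longer available for $\alpha \in S_{\high}$) with a combination of the smoothness of the Label Cover instance and \Cref{lemma:magic_complicated}. The key new ingredient is the identity
\[
\sum_{i,k} \hsnorm{\hat{F^{ik}_\alpha}(\beta)}^2 \;=\; n_{\alpha,\beta},
\]
where $n_{\alpha,\beta}$ is the multiplicity of $\beta$ in the decomposition $\alpha|_{\pi_{uv}} \cong \bigoplus_m n_{\alpha,m}\beta_m$ as representations of $G^L$. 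This identity follows from a direct calculation using a change of basis that block-diagonalizes $\alpha|_{\pi_{uv}}$ together with \Cref{prop:orthmatrixentries}. With it in hand, a refined Cauchy-Schwartz --- essentially the one used for low terms, but with the careful column permutation of the Fourier coefficients and representation matrices alluded to in the techniques overview --- is designed to yield
\[
|\Theta^{e}_{\ii,\jj,\kk}(\high)|^2 \;\leq\; \sum_{\substack{\alpha \in S_{\high},\, \beta, \\ \beta_{\geq 2} \subseteq \pi_{\geq 2}(\alpha)}} \dim(\alpha)\dim(\beta)\, n_{\alpha,\beta}\, \hsnorm{\hat{g'}(\alpha)}^2 \hsnorm{\hat{h}(\beta)}^2,
\]
so that the multiplicity $n_{\alpha,\beta}$ replaces what would otherwise have been an unmanageable $\dim(\alpha)$ factor coming from a free index summation.

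The next step is to average over the edge $e=(u,v)$. Fix $v$ and consider a uniformly random neighbor $u$. The smoothness property (\Cref{thm:lc-hard}, item 2) gives $\E_u[|\pi_{uv}(\alpha_{\geq 2})|^{-1}] \leq C^{-2d_0}$ for every $\alpha \in S_{\high}$ (for which $\dimgeqi{\alpha}{2} > C$); by Markov's inequality, the ``good'' event $|\pi_{\geq 2}(\alpha)| \geq C^{d_0/2}$ holds except on a set of neighbors of probability at most $C^{-3d_0/2}$. On the good event, I would invoke \Cref{lemma:magic_complicated} with $c = C^{d_0/2}$ and $\eps_0 = C^{-d_0/4}$ (the hypothesis $c \geq 10|G|\log(1/\eps_0)$ holds for the value of $C$ dictated by $C^{-d_0/2}\leq \delta^2/(12|G|^6)$): for every $\beta$ appearing in the decomposition, either $\dim(\beta) \geq C^{d_0/2}$ or $n_{\alpha,\beta} \leq C^{-d_0/2}\dim(\alpha)$. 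Since $n_{\alpha,\beta}\dim(\beta) \leq \dim(\alpha)$ always, in the first case we also have $n_{\alpha,\beta} \leq \dim(\alpha)/\dim(\beta) \leq C^{-d_0/2}\dim(\alpha)$; so uniformly on good edges, $n_{\alpha,\beta} \leq C^{-d_0/2}\dim(\alpha)$ for every $\beta$ in the decomposition. Plugging this into the displayed bound and applying Parseval to $g'$ and $h$ yields a good-edge contribution to $\E_e[|\Theta^{e}_{\ii,\jj,\kk}(\high)|^2]$ of $O(C^{-d_0/2})$; the bad-edge contribution is at most $O(C^{-3d_0/2})$ via a trivial bound $|\Theta^{e}_{\high}|^2 \leq 1$ using unitarity. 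Combining with Jensen's inequality $|\E_e[X]|^2 \leq \E_e[|X|^2]$ and the stated hypothesis, we will conclude $|\E_e[\Theta^{e}_{\ii,\jj,\kk}(\high)]| \leq \delta/(2|G|^3)$.

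The main obstacle is engineering the refined Cauchy-Schwartz in the first step so that $n_{\alpha,\beta}$ appears in precisely the right place: without a careful rearrangement, the free index summation encountered in the low-terms analysis forces a $\dim(\alpha)$ factor which is fine for $\alpha \in S_{\low}$ (bounded by $|G|^C$) but hopeless for $\alpha \in S_{\high}$. The fix is to pair the factors so that the identity $\sum_{i,k}\hsnorm{\hat{F^{ik}_\alpha}(\beta)}^2 = n_{\alpha,\beta}$ can be applied in the evaluation of the ``easy'' factor of the Cauchy-Schwartz, resulting in a multiplicity-weighted bound; this is the real role of the column permutation mentioned in the techniques overview, and also the reason \Cref{lemma:magic_complicated} was proved in the dichotomy form (either large dimension or small multiplicity). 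A secondary technical point is verifying that the bad-edge contribution, where smoothness provides no direct control, is absorbed by the smallness of its probability; here a trivial pointwise bound on $|\Theta^{e}_{\high}|$ suffices because $C^{-3d_0/2}$ is much smaller than the target $\delta^2/(12|G|^6)$.
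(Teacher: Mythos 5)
Your identity $\sum_{i,k}\hsnorm{\hat{F^{ik}_\alpha}(\beta)}^2 = n_{\alpha,\beta}$ is correct (it follows from Proposition~\ref{prop:orthmatrixentries} and unitarity of the block-diagonalizing change of basis), and the merging of the dichotomy of \Cref{lemma:magic_complicated} into the uniform bound $n_{\alpha,\beta}\leq C^{-d_0/2}\dim(\alpha)$ is also fine, as is the Markov/smoothness split into good and bad edges. But there are two linked problems with the core of your argument.

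First, the claimed inequality
$$|\Theta^{e}_{\ii,\jj,\kk}(\high)|^2 \;\leq\; \sum_{\alpha,\beta} \dim(\alpha)\dim(\beta)\, n_{\alpha,\beta}\, \hsnorm{\hat{g'}(\alpha)}^2 \hsnorm{\hat{h}(\beta)}^2$$
is never actually derived, and the natural Cauchy--Schwartz pairings do not give it. If you pair $\hat{g}(\alpha)_{ij}\hat{F^{ik}_\alpha}(\beta)_{k'i'}$ against $\hat{g'}(\alpha)_{jk}\hat{h}(\beta)_{i'k'}$, the first factor closes to $\|g\|_2^2\leq 1$ using $\sum_k\|F^{ik}_\alpha\|_2^2=1$, but the second factor has the column index $i$ free and produces $\sum_\alpha \dim(\alpha)^2\hsnorm{\hat{g'}(\alpha)}^2\cdot\|h\|_2^2$ --- a spurious factor of $\dim(\alpha)$ with nothing in the way of $n_{\alpha,\beta}$. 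The other pairings fare no better. Your multiplicity identity requires summing over \emph{both} $i$ and $k$ to produce $n_{\alpha,\beta}$, but in the expansion of $\Theta^e$ the index $k$ is shared with $\hat{g'}(\alpha)_{jk}$, so there is no pairing that collects exactly $\sum_{i,k}\hsnorm{\hat{F^{ik}_\alpha}(\beta)}^2$ in one factor while leaving $\hsnorm{\hat{g'}(\alpha)}^2\hsnorm{\hat{h}(\beta)}^2$ in the other. This is precisely why the paper does not collapse the bound to an $(\alpha,\beta)$-indexed Hilbert--Schmidt expression: the proof of \Cref{claim:highterms} deliberately keeps the column-index structure, rearranges so that the free sum runs only inside a block, and only then applies Cauchy--Schwartz, arriving at $\sum_\alpha\dim(\alpha)\sum_{i,j}|U\hat{g}(\alpha)_{ij}|^2\hsnorm{\hat{h}(\beta^{\col}_{P,U,\alpha,i})}^2$, with no extra $\dim(\alpha)$.

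Second --- and this is fatal even if the displayed bound were achievable --- the downstream Parseval step fails. Substituting $n_{\alpha,\beta}\leq C^{-d_0/2}\dim(\alpha)$ into your bound produces
$$C^{-d_0/2}\sum_\alpha\dim(\alpha)^2\hsnorm{\hat{g'}(\alpha)}^2\cdot\sum_\beta\dim(\beta)\hsnorm{\hat{h}(\beta)}^2,$$
and while the $\beta$ sum is $\|h\|_2^2\leq 1$ by Parseval, the $\alpha$ sum carries $\dim(\alpha)^2$ rather than $\dim(\alpha)$, so \Cref{prop:parsevals} does not bound it; $\sum_\alpha\dim(\alpha)^2\hsnorm{\hat{g'}(\alpha)}^2$ can be as large as $\max_\alpha\dim(\alpha)\cdot\|g'\|_2^2$, i.e.\ exponential in $R$. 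The real control one needs, per fixed $\alpha$, is on the normalized quantity $\frac{1}{\dim(\alpha)}\sum_m n_m\dim(\beta_m)\hsnorm{\hat{h}(\beta_m)}^2$ --- an average over all $\dim(\alpha)$ columns of which block they land in, with weight $n_m d_m/\dim(\alpha)$ --- not on $\sum_m n_m\dim(\beta_m)\hsnorm{\hat{h}(\beta_m)}^2$ itself. The paper supplies exactly this normalization by choosing a random column permutation $\tilde P$ in \Cref{claim:random_unitary_works0}, so that $\E_{\tilde P}\hsnorm{\hat{h}(\beta^{\col}_{\tilde P,U,\alpha,i})}^2 = \frac{\sum_m n_m d_m\hsnorm{\hat{h}(\beta_m)}^2}{\sum_m n_m d_m}$, and the dichotomy then bounds each summand by $\min\{n_m/\dim(\alpha),\hsnorm{\hat{h}(\beta_m)}^2\}$. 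Your approach, by passing through the multiplicity identity and discarding the column structure, loses the $1/\dim(\alpha)$ normalization and hence the argument does not close.
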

\begin{proof}
	
Recall,
$$\Theta^{e}_{\ii,\jj,\kk} (\high): = \sum_{\substack{\alpha,\beta , \\ \dim(\alpha) , \dim(\beta)\geq 2, \\ \dimgeqi{\alpha}{2} >  C}} {\termm{e}{\alpha}{\beta}}.$$

Let's analyze the expression $\Theta^{e}_{\ii,\jj,\kk} (\high)$ more carefully. For the notational convenience, we suppress the conditions on $\alpha,\beta$ and simply write the sum over pairs $\alpha, \beta$. We will analyze the complete sum with the extra conditions on $\alpha, \beta$, once we simplify the expression. 

Let $U(e, \alpha)$ be the transformation (i.e., change of basis) which takes a representation $ \alpha(.)$ and converts it into a direct sum of irreducible representations of $\pi_e(G^R) := \{\V x\circ \pi_e \mid x\in G^L\}$ which is a subgroup of $G^R$ isomorphic to the group $G^L$. For simplicity, we denote this unitary matrix by $U$. Recall that the decomposition is unique.

We extend the definition of the {\em block diagonal matrices} to include any permutation of columns  of a block diagonal matrix. For clarity, we call such general matrices {\em block matrices}. Note that with this extended definition, it still makes sense to talk about the `blocks', except that the blocks are not contiguous and not necessarily along the diagonal. For a given $(e, \alpha)$, we apply a column-permutation matrix $P(e, \alpha)$ to the block diagonal matrix $U\alpha U^\star$. We will get back to the specific choice of $P(e, \alpha)$ later in the proof, but for now just write the permutation matrix as $P$ for notational convenience.
\begin{align}
\trace(\hat{g}(\alpha)\hat{g'}(\alpha)  {\alpha}(\V b\circ \pi)) &= \trace(U\hat{g}(\alpha)\hat{g'}(\alpha)  {\alpha}(\V b\circ \pi)U^\star)\tag*{(cyclic property of $\trace$, and $UU^\star = I$)} \nonumber\\
&= \trace(U\hat{g}(\alpha)\hat{g'}(\alpha) U^\star U {\alpha}(\V b\circ \pi)U^\star)\nonumber\\
&= \trace(U\hat{g}(\alpha)\hat{g'}(\alpha) U^\star P^{-1} P U {\alpha}(\V b\circ \pi)U^\star) \label{eq:identity_trace}
\end{align}
In this last expression, $U {\alpha}(\V b\circ \pi)U^\star$ is a block diagonal matrix, whereas $PU {\alpha}(\V b\circ \pi)U^\star$ is a block matrix. 

We reiterate that the identity in \Cref{eq:identity_trace} holds for any unitary matrix $U$ and column-permutation matrix $P$. For a fixed $(e,\alpha)$, we will be using an arbitrary fixed $U(e, \alpha)$ (any unitary transformation which converts the representation $\alpha$ into a block diagonal matrix). The choice of $P(e, \alpha)$ will be delicate and in \Cref{claim:random_unitary_works0}, we will show an existence of a permutation matrix $P(e, \alpha)$ using which we can bound $\Theta^{e}_{\ii,\jj,\kk} (\high)$ effectively.

From this point onward, the choice of the unitary matrix does not matter as long as it converts $\alpha(\cdot)$ into a block diagonal matrix (also the arrangement of blocks along the diagonal does not matter). We are going to suppress the use of $U$ and write:
$$U\hat{g}(\alpha) = A(\alpha), \quad \quad \hat{g'}(\alpha) U^\star P^{-1} = A'(\alpha) \quad \quad\mbox{ and } \quad \quad U {\alpha}(\V b\circ \pi)U^\star = B(\alpha)(\V b).$$

Note that by \Cref{claim:Uni_norm}, the $\hsnorm{\cdot}$ of the matrices are preserved, i.e., $\hsnorm{A(\alpha)} = \hsnorm{\hat{g}(\alpha)}$ and $\hsnorm{A'(\alpha)} = \hsnorm{\hat{g'}(\alpha)}$. Coming back to the task of simplifying the expression:
\begin{align*}
\Theta^{e}_{\ii,\jj,\kk} (\high) & = \sum_{\alpha , \beta} \dim(\alpha)\dim(\beta) \E_{\V b} \left[\trace(\hat{g}(\alpha)\hat{g'}(\alpha)  {\alpha}(\V b\circ \pi)) \cp  \trace(\hat{h}(\beta)  {\beta}(\V b^{-1}))\right]\\
& = \sum_{\alpha , \beta} \dim(\alpha)\dim(\beta) \E_{\V b} \left[\trace(U\hat{g}(\alpha)\hat{g'}(\alpha) U^\star P^{-1} P U {\alpha}(\V b\circ \pi)U^\star) \cp  \trace(\hat{h}(\beta)  {\beta}(\V b^{-1}))\right]\\
& = \sum_{\alpha , \beta} \dim(\alpha)\dim(\beta) \E_{\V b} \left[\trace(A(\alpha) A'(\alpha)  P B(\alpha)(\V b)) \cp  \trace(\hat{h}(\beta)  {\beta}(\V b^{-1}))\right]\\
& = \sum_{\alpha , \beta} \dim(\alpha)\dim(\beta) \E_{\V b} \left[\sum_{i,k} (A(\alpha)A'(\alpha))_{ik}\cdot  (PB(\alpha)(\V b))_{ki} \cp \sum_{i',k'}\hat{h}(\beta)_{i'k'}  {\beta}(\V b^{-1})_{k'i'}\right]\\
& = \sum_{\alpha , \beta} \dim(\alpha)\dim(\beta) \sum_{i,k} (A(\alpha)A'(\alpha))_{ik}\cdot  \sum_{i',k'} \hat{h}(\beta)_{i'k'} \E_{\V b}  \left[ (PB(\alpha)(\V b))_{ki} \cp   {\beta}(\V b^{-1})_{k'i'}\right]\\
& = \sum_{\alpha } \dim(\alpha)\sum_{i,k} (A(\alpha)A'(\alpha))_{ik} \cdot \sum_{\beta}  \dim(\beta)  \sum_{i',k'} \hat{h}(\beta)_{i'k'} \E_{\V b}  \left[ (P (\oplus_{m = 1}^t  n_m{\beta_m}( \V b)))_{ki} \cp   {\beta}(\V b^{-1})_{k'i'}\right],
\end{align*}
where $\beta_m$s are the block along the diagonal of the block diagonal matrix $B(\alpha)(\V b)$ with multiplicity $n_m$.

\newcommand{\row}{\mathsf{row}}
\newcommand{\col}{\mathsf{col}}
Consider the expectation:
$$ \E_{\V b}  \left[ (P(\oplus_{m = 1}^t  n_m{\beta_m}( \V b)))_{ki} \cp   {\beta}(\V b^{-1})_{k'i'}\right].$$

For a block matrix $PB(\alpha)(\cdot)$, let $\B(PB(\alpha))$ be the indices $(i,j)$ that belong to the blocks in the matrix. Let $\beta^{\row}_{P, U, \alpha, k}$ ($\beta^{\col}_{P, U, \alpha, i}$) denotes the irreducible representation of $G^L$ present in the $i^{th}$ column ($k^{th}$ row) of the block matrix $PB(\alpha)(\cdot)$.\footnote{Since we allow permutation of columns of a block diagonal matrices, $\beta^{\row}_{P, U, \alpha, i}$ and $\beta^{\col}_{P, U, \alpha, i}$ may be different, and hence the superscript.} For a fixed $(\alpha, k)$ following are the only scenarios when the expectation is nonzero:
\begin{itemize}
	\item $i$ must be such that $(k,i)$ belongs to some block  $ {\beta_{m}}$ in the block matrix $PB(\alpha)(\cdot)$, i.e., $(k,i)\in \B(PB(\alpha))$ (as otherwise $(P(\oplus_{m = 1}^t  n_m{\beta_m}( \V b)))_{ki}  = 0$).
	\item $\beta$ must be equal to $\beta^{\col}_{P, U, \alpha, i}$ (\Cref{prop:orthmatrixentries}). Furthermore, the entry of the matrix  $ {\beta_{m}}$ given by $(k,i)$ must be the ``transpose'' of $(k',i')$ (\Cref{prop:orthmatrixentries}). 
	This also means that if we vary $(i,k)$ inside a block $ {\beta_{m}}$, then we get distinct $(k',i')$ (i.e., transpose of $(k,i)$ in that block) for which the expectation is non-zero (we will use this fact later). Thus, $(\alpha, i, k)$ uniquely determines $(i',k')$ for which the expectation is non-zero. We denote this map by $(i',k') \leftarrow (\alpha, i, k) $.
	\item If both the above conditions are true, then the expectation is $\frac{1}{\dim(\beta_{m})}$ (again, using \Cref{prop:orthmatrixentries}).
\end{itemize}
Therefore, we have
\begin{align*}
\Theta^{e}_{\ii,\jj,\kk} (\high) & = \sum_{\alpha } \dim(\alpha)\sum_{i,k} (A(\alpha)A'(\alpha))_{ik} \sum_{\beta}  \dim(\beta)  \sum_{i',k'} \hat{h}(\beta)_{i'k'} \E_{\V b}  \left[ (P(\oplus_{m = 1}^t  n_m{\beta_m}( \V b)))_{ki} \cp   {\beta}(\V b^{-1})_{k'i'}\right]\\ \\
& = \sum_{\alpha } \dim(\alpha)\sum_{\substack{k, \\ i \mid (k,i) \in \B(PB(\alpha)), \\ (i',k') \leftarrow (\alpha, i, k) }} (A(\alpha)A'(\alpha))_{ik} \cdot \hat{h}(\beta^{\col}_{P, U, \alpha, i})_{i'k'}\\ \\
&=  \sum_{\alpha } \dim(\alpha)\sum_{\substack{k, \\ i \mid (k,i) \in \B(PB(\alpha)), \\ (i',k') \leftarrow (\alpha, i, k) }} \sum_{j} A(\alpha)_{ij} \cdot A'(\alpha)_{jk} \cdot \hat{h}(\beta^{\col}_{P, U, \alpha, i})_{i'k'}\\ \\
&=  \sum_{\alpha } \sum_{\substack{j, k}} \dim(\alpha)A'(\alpha)_{jk}  \cdot \sum_{\substack{ i \mid (k,i) \in \B(PB(\alpha)) }}  A(\alpha)_{ij} \cdot \hat{h}(\beta^{\col}_{P, U, \alpha, i})_{i'k'}. \tag*{(rearranging)}
\end{align*}
We now apply the Cauchy-Schwartz inequality twice to simplify the expression.
\begin{align*}
|\Theta^{e}_{\ii,\jj,\kk} (\high)|^2 &\leq \Bigg( \sum_{\alpha } \sum_{\substack{j, k}} \dim(\alpha) |A'(\alpha)_{jk}|^2\Bigg) \cdot \\
& \quad \quad  \sum_{\alpha } \sum_{\substack{j, k}} \dim(\alpha) \Bigg| \sum_{\substack{ i \mid (k,i) \in \B(PB(\alpha)), \\ (i',k') \leftarrow (\alpha, i, k) }} \hspace{-10pt} A(\alpha)_{ij}\cdot  \hat{h}(\beta^{\col}_{P, U, \alpha, i})_{i'k'} \Bigg|^2
\end{align*}
Consider the first summation,
$$\sum_{\alpha } \sum_{\substack{j, k}} \dim(\alpha) |A'(\alpha)_{jk}|^2 =  \sum_{\alpha }  \dim(\alpha) \sum_{\substack{j, k}} |A'(\alpha)_{jk}|^2=  \sum_{\alpha }  \dim(\alpha) \hsnorm{A'}^2 = \sum_{\alpha }  \dim(\alpha) \hsnorm{\hat{g'}(\alpha)'}^2,$$
where in the last step we use the fact that  $\hsnorm{A'(\alpha)} = \hsnorm{\hat{g'}(\alpha)}$.  Using \Cref{prop:parsevals}, this is upper bounded by $\|g'\|_2^2$  which is at most $1$. Therefore,
\begin{align*}
|\Theta^{e}_{\ii,\jj,\kk} (\high)|^2 
&\leq \sum_{\alpha } \sum_{\substack{j, k}} \dim(\alpha) \Bigg| \sum_{\substack{  i \mid (k,i) \in \B(PB(\alpha)), \\ (i',k') \leftarrow (\alpha, i, k) }} \hspace{-10pt} A(\alpha)_{ij}\cdot  \hat{h}(\beta^{\col}_{P, U, \alpha, i})_{i'k'} \Bigg|^2.
\end{align*}
By applying Cauchy-Schwartz inequality to the innermost summation, we get
\begin{align*}
|\Theta^{e}_{\ii,\jj,\kk} (\high)|^2 &\leq \sum_{\alpha } \sum_{\substack{j, k}} \dim(\alpha) \Bigg(  \sum_{\substack{ i \mid (k,i) \in \B(PB(\alpha)), \\ (i',k') \leftarrow (\alpha, i, k) }} \hspace{-10pt} |A(\alpha)_{ij}|^2 \Bigg) \Bigg(  \sum_{\substack{ \tilde{i} \mid (k, \tilde{i}) \in \B(PB(\alpha)), \\ (i',k') \leftarrow (\alpha, \tilde{i}, k) }} \hspace{-10pt} |\hat{h}(\beta^{\col}_{P, U, \alpha, \tilde{i}})_{i'k'}|^2 \Bigg)\\
&\leq \sum_{\alpha } \sum_{\substack{j, k}} \dim(\alpha) \Bigg(  \sum_{i \mid (k,i) \in \B(PB(\alpha)) } \hspace{-10pt} |A(\alpha)_{ij}|^2 \Bigg) \Bigg(  \sum_{\substack{ \tilde{i} \mid (k, \tilde{i}) \in \B(PB(\alpha)), \\ (i',k') \leftarrow (\alpha, \tilde{i}, k) }} \hspace{-10pt} |\hat{h}(\beta^{\col}_{P, U, \alpha, \tilde{i}})_{i'k'}|^2 \Bigg).
\end{align*}

Now, let's look carefully at the summation. Fix the term $|A(\alpha)_{ij}|^2$. Note that this term appears for every $k$ such that $(k,i) \in \B(PB(\alpha))$. On rearranging the summation,
\begin{align*}
|\Theta^{e}_{\ii,\jj,\kk} (\high)|^2 &\leq \sum_{\alpha } \sum_{\substack{i, j}} \dim(\alpha)  \sum_{k \mid (k,i) \in \B(PB(\alpha)) } \hspace{-10pt} |A(\alpha)_{ij}|^2  \Bigg( \sum_{\substack{ \tilde{i} \mid (k, \tilde{i}) \in \B(PB(\alpha)), \\ (i',k') \leftarrow (\alpha, \tilde{i}, k) }} \hspace{-10pt} |\hat{h}(\beta^{\col}_{P, U, \alpha, \tilde{i}})_{i'k'}|^2 \Bigg)\\ \\
&\leq \sum_{\alpha } \sum_{\substack{i, j}} \dim(\alpha) \cdot  |A(\alpha)_{ij}|^2    \sum_{k \mid (k,i) \in \B(PB(\alpha)) } \sum_{\substack{  \tilde{i} \mid (k, \tilde{i}) \in \B(PB(\alpha)), \\ (i',k') \leftarrow (\alpha,  \tilde{i}, k) }} \hspace{-10pt} |\hat{h}(\beta^{\col}_{P, U, \alpha, \tilde{i}})_{i'k'}|^2.
\end{align*}
Note that in the above expression, $\beta^{\col}_{P, U, \alpha, \tilde{i}} = \beta^{\col}_{P, U, \alpha, i}$ (because of the block matrix nature of $PB(\alpha)(\cdot)$). Therefore, we have
\begin{align*}
|\Theta^{e}_{\ii,\jj,\kk} (\high)|^2 &\leq \sum_{\alpha } \sum_{\substack{i, j}} \dim(\alpha)\cdot  |A(\alpha)_{ij}|^2    \sum_{\substack{k \mid (k,i) \in \B(PB(\alpha)), \\   \tilde{i} \mid (k, \tilde{i}) \in \B(PB(\alpha)), \\ (i',k') \leftarrow (\alpha, \tilde{i}, k)}}  |\hat{h}(\beta^{\col}_{P, U, \alpha, i})_{i'k'}|^2.
\end{align*}
As mentioned earlier,  if we vary $(k, \tilde{i})$ inside a block $ {\beta_{m}}$ of $(P(\oplus_{m = 1}^t  n_m{\beta_m}(.)))$, then we get distinct $(k',i')$ under the map $(i',k') \leftarrow (\alpha, \tilde{i}, k)$. The last sum is precisely varying inside one of the blocks (for a fixed $(\alpha, i, j)$))!  Therefore,
\begin{align*}
|\Theta^{e}_{\ii,\jj,\kk} (\high)|^2 & \leq \sum_{\alpha } \sum_{\substack{i, j}} \dim(\alpha)  |A(\alpha)_{ij}|^2   \sum_{1\leq i', k'\leq \dim(\beta^{\col}_{P, U, \alpha, i})} |\hat{h}(\beta^{\col}_{P, U, \alpha, i})_{i'k'}|^2 \\
& = \sum_{\alpha } \dim(\alpha)  \sum_{\substack{i, j\\ }}  |U\hat{g}(\alpha)_{ij}|^2 \cdot  \hsnorm{\hat{h}(\beta^{\col}_{P, U, \alpha, i})}^2.
\end{align*}

By taking a closer look at the expression above, it is not hard to see that there can be multiple scenarios when the expression is large. For instance, it might happen that some $\beta^{\col}_{P, U, \alpha, i}$s have small dimension and in this case we will not be able to get the advantage that we saw in the dictatorship test. 

We avoid the above mentioned scenario by noting that when this happens then it must be the case that many distinct $\beta^{\col}_{P, U, \alpha, i}$s occur in the expression as we vary $i$. Thus, on average we can efficiently upper bound the expression, by using a careful choice of $P$ given by the following claim, as long as $|(\pi_{uv})_{\geq 2}(\alpha)|$ is large.

\begin{claim}
	\label{claim:random_unitary_works0}
	Let $\eps_0\in (0,\frac{1}{2}]$. Suppose $\alpha, e(u, v)$ are such that  $|(\pi_{uv})_{\geq 2}(\alpha)| \geq c$, where $c\geq 10|G|\log(\frac{1}{\eps_0})$, then  there exists a column-permutation matrix $\tilde{P}$ such that
	$$ \sum_{\substack{i, j\\ }}  |U\hat{g}(\alpha)_{ij}|^2 \cdot  \hsnorm{\hat{h}(\beta^{\col}_{\tilde{P}, U, \alpha, i})}^2  \leq   \hsnorm{\hat{g}(\alpha)}^2 \cdot \left(\eps_0+ \sqrt{ \max_{\beta| \dim(\beta)\geq c}. \hsnorm{\hat{h}(\beta)}^2 }\right).$$
\end{claim}
\begin{proof}
	Fix an edge $e(u, v)$ and $\alpha = (\rho_1, \rho_2, \ldots, \rho_R)$ such that $|(\pi_{uv})_{\geq 2}(\alpha)| \geq c$. 
	We can write $\alpha$ as the direct sum of irreducible representations of $G^L$ as follows:
	$$\mathop{\otimes}_{i=1}^R \rho_i = \mathop{\otimes}_{\ell=1}^{L} \left(\mathop{\otimes}_{j\in \pi_{uv}^{-1}(\ell)} \rho_j\right) \cong  \mathop{\otimes}_{\ell=1}^{L} \underbrace{\left(\oplus_{ k = 1}^{t_\ell} \rho^\ell_k\right)}_{B_\ell}  = \oplus_m n_m \beta_m =: U\alpha U^\star,$$
	where $U$ is an arbitrary unitary matrix which converts $\alpha$ into direct sum of representations in $\irr(G^L)$, and  $\rho_j$, $\rho^\ell_k$ are the irreducible representations of $G$. The last equality is by taking tensors of one representation from each of the blocks $B_\ell$. We now show that if we pick a random permutation of the columns of $U\alpha U^\star$ then it gives the desired bound.
	
	Take a random permutation $\tilde{P}$ of the columns of $U\alpha U^\star$. For brevity, we use $d_m$ to denote $\dim(\beta_m)$. For any fixed $i\in \dim(\alpha)$, we have
	\begin{align*}
	\E_{\tilde{U}}\left[ \hsnorm{\hat{h}(\beta^{\col}_{\tilde{P}, U, \alpha, i})}^2\right] & = \frac{\sum_{m} n_m d_m \hsnorm{\hat{h}(\beta_m)}^2}{\sum_{m} n_m d_m}\\
	&\leq  \frac{ \sqrt{\sum_{m} n_m d_m} \sqrt{\sum_{m} n_m d_m \cdot \hsnorm{\hat{h}(\beta_m)}^4}}{\sum_{m} n_m d_m}\tag*{(Using Cauchy-Schwartz)}\\
	&= \sqrt{ \frac{\sum_{m} n_m d_m \cdot \hsnorm{\hat{h}(\beta_m)}^4}{\sum_{m} n_m d_m}}.\\
	\end{align*}
	Since we know that $\sum_{m} d_m \cdot \hsnorm{\hat{h}(\beta_m)}^2 \leq \|h\|_2^2 \leq 1$, we can upper bound the expression as follows:
	\begin{align*}
	\E_{\tilde{P}}\left[ \hsnorm{\hat{h}(\beta^{\col}_{\tilde{P}, U, \alpha, i})}^2\right] 
	&\leq\sqrt{  \max_{m}  \frac{\ n_m \cdot \hsnorm{\hat{h}(\beta_m)}^2}{\sum_{m} n_m d_m}}\\
	&\leq \sqrt{  \max_{m}  \left\{ \min\left\{\frac{\ n_m}{\sum_{m} n_m d_m}, \hsnorm{\hat{h}(\beta_m)}^2\right\}\right\} }.\\
	\end{align*}
	Using \Cref{lemma:magic_complicated}, for each $m$, we have either $d_m \geq c$ or $n_m \leq \eps_0^2\cdot\dim(\alpha)$. Therefore, we get
	\begin{align*}
	\E_{\tilde{P}} \left[ \hsnorm{\hat{h}(\beta^{\col}_{\tilde{P}, U, \alpha, i})}^2 \right] 
	&\leq \sqrt{ \eps_0^2 + \max_{\beta| \dim(\beta)\geq c} \hsnorm{\hat{h}(\beta)}^2 } \\
	& \leq  \eps_0+ \sqrt{ \max_{\beta| \dim(\beta)\geq c} \hsnorm{\hat{h}(\beta)}^2 }.
	\end{align*}
	By linearity of expectation,
	\begin{align*}
	\E_{\tilde{P}}\left[ \sum_{\substack{i, j\\ }}  |U \hat{g}(\alpha)_{ij}|^2 \cdot  \hsnorm{\hat{h}(\beta^{\col}_{\tilde{P}, U, \alpha, i})}^2 \right] 
	&= \sum_{\substack{i, j\\ }}  |U \hat{g}(\alpha)_{ij}|^2 \cdot  \E_{\tilde{U}}\left[  \hsnorm{\hat{h}(\beta^{\col}_{\tilde{P}, U, \alpha, i})}^2 \right]\\
	& \leq   \hsnorm{U \hat{g}(\alpha)}^2 \cdot \left(\eps_0+ \sqrt{ \max_{\beta| \dim(\beta)\geq c}. \hsnorm{\hat{h}(\beta)}^2 }\right)\\
	& =   \hsnorm{\hat{g}(\alpha)}^2 \cdot \left(\eps_0+ \sqrt{ \max_{\beta| \dim(\beta)\geq c}. \hsnorm{\hat{h}(\beta)}^2 }\right)
	\end{align*} 
	The existence of $\tilde{P}$, as claimed, follows from above and using the conditional expectation.
\end{proof}

\paragraph{Finishing the proof.}
We now proceed to upper bound the $\high$ terms. 	Let $\eta := C^{-d_0}$ where $d_0$ is the constant given in \Cref{thm:lc-hard}, $c := \frac{1}{\eta}$ and $\eps_0:= \sqrt{\eta}$.  Note that the condition on $C$ and $d_0$ implies that $c\geq 10|G|\log(\frac{1}{\eps_0})$. Next, we use the smoothness property of our Label Cover instance in order to apply \Cref{claim:random_unitary_works0}. With these settings of the  parameters, the property says that for every $v\in \calV$ and $\alpha$ such that $\dimgeqi{\alpha}{2}>C$, for at least $(1-\eta)$ fraction of the neighbors $u\sim v$ of $v$,  $|(\pi_{uv})_{\geq 2}(\alpha)| \geq c$. In what follows, we use the column-permutation matrix $\tilde{{P}} = P(e, \alpha)$, given by the \Cref{claim:random_unitary_works0}, for this setting of $c$ and $\epsilon_0$.
\begin{align*}
\E_{(u,v)\in E}\left[|\Theta^{e}_{\ii,\jj,\kk} (\high)|^2\right]  
& \leq \E_{(u,v)\in E}\left[\sum_{\substack{\alpha, \\ \dimgeqi{\alpha}{2}>C} } \dim(\alpha)  \sum_{\substack{i, j\\ }}  |U\hat{g}(\alpha)_{ij}|^2 \cdot  \hsnorm{\hat{h}(\beta^{\col}_{\tilde{P}, U, \alpha, i})}^2\right]  \\
& \leq \E_{(u,v)\in E}\left[ \sum_{\substack{\alpha, \\ \dimgeqi{\alpha}{2}>C}} \dim(\alpha) \cdot \hsnorm{\hat{g}(\alpha)}^2 \cdot  \left( \eps_0+ \sqrt{ \max_{\beta| \dim(\beta)\geq c} \hsnorm{\hat{h}(\beta)}^2 }.\right) \right] + \eta.\\
&\leq	 \E_{(u,v)\in E} \left[ \sum_{\substack{\alpha, \\ \dimgeqi{\alpha}{2}>C}} \dim(\alpha) \cdot \hsnorm{\hat{g}(\alpha)}^2 \cdot  \left( \sqrt{ \max_{\beta| \dim(\beta)\geq c} \hsnorm{\hat{h}(\beta)}^2 }.\right) \right] + \eta +\eps_0\|g\|_2^2.
\end{align*}
Consider the summation,
\begin{align*}
&\sum_{\substack{\alpha, \\ \dimgeqi{\alpha}{2}>C}} \dim(\alpha) \cdot \hsnorm{\hat{g}(\alpha)}^2 \cdot  \left( \sqrt{ \max_{\beta| \dim(\beta)\geq c} \hsnorm{\hat{h}(\beta)}^2 }.\right) \\
& \leq \sum_{\substack{\alpha, \\ \dimgeqi{\alpha}{2}>C}} \dim(\alpha) \cdot \hsnorm{\hat{g}(\alpha)}^2 \cdot  \left( \sqrt{ \sum_{\beta| \dim(\beta)\geq c} \hsnorm{\hat{h}(\beta)}^2 }.\right) \\
& \leq \sqrt{\frac{1}{c}} \left(\sum_{\substack{\alpha, \\ \dimgeqi{\alpha}{2}>C}} \dim(\alpha) \cdot \hsnorm{\hat{g}(\alpha)}^2\right) \cdot  \left( \sqrt{ \sum_{\beta| \dim(\beta)\geq c} \dim(\beta)\cdot \hsnorm{\hat{h}(\beta)}^2 }.\right) \\
& \leq \sqrt{\frac{1}{c}}\cdot \|g\|_2^2 \cdot \|h\|_2
\end{align*}
Therefore using the fact that $\|g\|_2, \|h\|_2\leq 1$, we get 
$$ \E_{(u,v)\in E}\left[|\Theta^{e}_{\ii,\jj,\kk} (\high)|^2\right]  \leq \sqrt{\frac{1}{c}} + \eta + \eps_0 \leq 3\sqrt{\eta} \leq 3C^{-d_0/2} \leq \left(\frac{\delta}{2|G|^3}\right)^2,$$
where the last inequality follows from the choice of $C$. This implies, 
$$ \E_{(u,v)\in E}\left[|\Theta^{e}_{\ii,\jj,\kk} (\high)|\right]  \leq \frac{\delta}{2|G|^3},$$
as required.
\end{proof}

\bibliographystyle{alpha}
\bibliography{refs.bib}

\end{document}